\long\def\remove#1{}
\newtheorem{theorem}{Theorem}[section] 
\newtheorem{lemma}[theorem]{Lemma}
\newtheorem{claim}[theorem]{Claim}
\newtheorem{cor}[theorem]{Corollary}
\newtheorem{definition}[theorem]{Definition}
\newenvironment{proof}{{\em Proof:}}{\hfill{\hfill\rule{2mm}{2mm}}}
\definecolor{darkred}{rgb}{1, 0.1, 0.3}
\newcommand{\myparagraph}[1]	{{\vspace*{0.08in}\noindent{\bf #1~}}}
\newcommand {\mm}[1] {\ifmmode{#1}\else{\mbox{\(#1\)}}\fi}
\newcommand{\denselist}{\itemsep 0pt\parsep=1pt\partopsep 0pt}
\newcommand{\eps}{{\varepsilon}}
\newcommand{\reals}	{{\rm I\!\hspace{-0.025em} R}}
\newcommand{\anotherG}		{{\tilde{G}}}
\newcommand{\Ghat}			{{\widehat{G}}}
\newcommand{\Jindex}			{\rho}
\newcommand{\athreshold}		{\tau}
\newcommand{\X}				{\mathcal{X}}
\newcommand{\aball}			{\mathrm{B}}
\newcommand{\realbad}			{really-bad\xspace}
\newcommand{\tG}				{{G^*}}
\newcommand{\tGr}				{{G_r^*}}
\newcommand{\oG}				{G}
\newcommand{\oE}				{E}
\newcommand{\tE}				{E^*}
\newcommand{\myER}			{{Erd\"{o}s-R\'{e}nyi\xspace}}
\newcommand{\dm}				{{L}}
\newcommand{\dtG}				{{d_{\tG}}}
\newcommand{\doG}				{{d_{\oG}}}
\newcommand{\myexp}			{\delta}
\newcommand{\mys}				{{\mathrm{s}}}
\newcommand{\Jthreshold}		{{\tau}}
\newcommand{\myprob}			{\mathrm{P}}
\newcommand{\length}				{{\mathrm{Len}}}
\newcommand{\mA}			{{\alpha}} 
\newcommand{\mB}			{{\beta}} 
\newcommand{\anotherGt}		{{\tilde{G}_{\athreshold}}}
\begin{document}
\title{A quest to unravel the metric structure behind perturbed networks}

\author[1]{Srinivasan Parthasarathy}
\author[2]{David Sivakoff}
\author[1]{Minghao Tian}
\author[1]{Yusu Wang}
\affil[1]{Computer Science and Engineering Dept., The Ohio State University, USA \\
\texttt{srini@cse.ohio-state.edu, tian.394@osu.edu, yusu@cse.ohio-state.edu}}
\affil[2]{Statistics and Mathematics Depts., The Ohio State University, USA \\
\texttt{dsivakoff@stat.osu.edu}}

\maketitle

\begin{abstract}
Graphs and network data are ubiquitous across a wide spectrum of scientific and application domains. 
Often in practice, an input graph can be considered as an observed snapshot of a (potentially continuous) hidden domain or process. 
Subsequent analysis, processing, and inferences are then performed on this observed graph. 
In this paper we advocate the perspective that an observed graph is often a noisy version of some discretized 1-skeleton of a hidden domain, and specifically we will consider the following natural network model: We assume that there is a true graph $\tG$ which is a certain proximity graph for points sampled from a hidden domain $\X$; while the observed graph $\oG$ is an \myER{} type perturbed version of $\tG$. 

Our network model is related to, and slightly generalizes, the much-celebrated small-world network model originally proposed by Watts and Strogatz.
However, the main question we aim to answer is orthogonal to the usual studies of network models (which often focuses on characterizing / predicting behaviors and properties of real-world networks). Specifically, we aim to recover the metric structure of $\tG$ (which reflects that of the hidden space $\X$ as we will show) from the observed graph $\oG$. 
Our main result is that a simple filtering process based on the \emph{Jaccard index} can recover this metric within a multiplicative factor of $2$ under our network model. 
Our work makes one step towards the general question of inferring structure of a hidden space from its observed noisy graph representation. 
In addition, our results also provide a theoretical understanding for Jaccard-Index-based denoising approaches. 
 \end{abstract}

\section{Introduction}
Graphs and networks are ubiquitous across a wide spectrum of scientific and application domains. Analyzing various types of graphs and network data play a fundamental role in modern data science. 
In the past several decades, there has been a large amount of research studying various aspects of graphs, ranging from developing efficient algorithms to process graphs, to information retrieval and inference based on graph data.  

In  many cases, we can view an input graph as an observed (discrete) 1-skeleton of a (potentially continuous) hidden domain or process. Subsequent analysis, processing, and inferences are then performed on this observed graph, with the ultimate goal being  
to understand the hidden space where the graph is sampled from. 
Many beautiful generative models for graphs have been proposed \cite{durrett2006random,penrose2003random}, aiming to understand this transition process from a hidden space to the observed 1-skeleton, and to facilitate further tasks performed on graphs. 

One line of such generative graph models assumes that an observed network is obtained by adding random perturbation to a specific type of underlying ``structured graph'' (such as a grid or a ring). 
For example, the much-celebrated small-world model by Watts and Strogatz \cite{watts1998collective} generates a graph by starting with a $k$-nearest neighbor graph spanned by nodes regularly distributed along a ring. It then randomly ``rewires'' some of the edges connecting neighboring points to instead connect nodes possibly far away. Watts and Strogatz showed that this simple model can generate networks that possess features of both a random graph and a proximity graph, and display two important characteristics often seen in real networks: low diameter in shortest path metric and high clustering coefficients. There have since been many variants of this model proposed so as to generate networks with different properties, such as adding random edges in a distance-dependent manner \cite{SW14, Kleinberg2000small}, or extending similar ideas to incorporate hierarchical structures in networks; e.g, \cite{Kleinberg02,WDN02}. 
There have also been numerous studies on characterizing statistical summaries, such as the average path lengths or the degree distributions, of small-world like networks; e.g \cite{bollobas1988diameter, faloutsos1999power}; see \cite{wang2003complex, bollobas2003mathematical} for a survey. 

\paragraph{Our work.} 
In this paper, we take the perspective that an observed graph can be viewed as a noisy snapshot of the discretized 1-skeleton of a hidden domain of interest, and propose the following network model: 
Assume that the hidden space that generates data is a ``nice'' measure $\mu$ supported on a compact metric space $\X = (X, d_X)$ (e.g, the uniform measure supported on an embedded smooth low-dimensional Riemannian manifold). Suppose that the data points $V$ are sampled i.i.d from this measure $\mu$, and the ``true graph'' $\tGr$ connecting them is the $r$-neighborhood graph spanned by $V$ (i.e, two points $u,v$ are connected if their distance $d_X(u, v) \le r$). 
The observed graph $\oG$  however is only a noisy version of the true proximity graph $\tGr$, and we model this noise by an Erd\"{o}s-R\'{e}nyi (ER) type perturbation -- each edge in the true graph $\tGr$ can be deleted with probability $p$, while a ``short-cut'' edge between two unconnected nodes $u, v$ could be inserted to $\oG$ with probability $q$.

To motivate this model, imagine in a social network a person typically makes friends with other persons that are close to herself in the unknown feature space modeled by our metric space $\X$. The distribution of people (graph nodes) is captured by the measure $\mu$ on $\X$. However, there are always (or may be even many) exceptions -- friends could be established by chance, and two seemingly similar persons (say, close geographically and in tastes) may not develop friendship. Thus it is reasonable to model an observed social network $\oG$ as an ER-type perturbation of the proximity graph $\tGr$ to account for such exceptions. 

The general question we hope to address is how to recover various properties of the hidden domain $\X$ from the observed graph $\oG$. In this paper we investigate a specific problem: how to recover the metric structure of $\tGr$ (induced by the shortest path distances in $\tGr$) from the noisy observation $\oG$. As we show in Theorem \ref{thm:metricapprox}, the metric structure of $\tGr$ ``approximates'' that of the hidden domain $\X$. 
Note that a few inserted ``short-cuts'' could significantly change the shortest path metric, one potential factor leading to the small-world phenomenon. 
Our main result is that a simple filtering procedure based on the so-called \emph{Jaccard index} can recover the shortest path metric of $\tGr$ within a multiplicative factor of $2$ (with high probabilities). We also provide some preliminary experimental results. 

\paragraph{Remarks and discussion.} 
The problem of recovering $\tGr$ from the observed graph $\oG$ is different and orthogonal to the usual studies on similar network models: Those studies often focus on characterizing the graphs generated by such models and whether those characteristics match with real networks. 
We instead aim to recover metric structure of a hidden true graph  $\tGr$ from a given graph $\oG$. There are different motivations for this task. For example, it could be that the true graph $\tGr$ is the real object of interest, and we wish to ``denoise'' the observed graph $\oG$ to get a more accurate representation of $\tGr$. Indeed, in \cite{goldberg2003assessing}, Godberg and Roth empirically show how to use small-world model to help remove false edges in protein-protein interaction (PPI) networks. See \cite{bhadauria2012efficient} for more examples. 

Furthermore, even if the observed graph $\oG$ is of interest itself, we may still want to recover information about the domain $\X$ where $\oG$ is generated from. For example, suppose we are given two networks $G_1$ and $G_2$ modeling say the collaboration networks from two different disciplines, and our goal is to compare the hidden collaboration structures behind the two disciplines. Comparing the precise graph structures of observed graphs $G_1$ and $G_2$ could be misleading, as even if they are generated from the same hidden space $\X$, they could still look different due to the random generation process. It is more robust if we can compare the two hidden spaces generating them instead. 

Finally, we remark that similar to the small-world network models, our model also overlays a random perturbation over a ``structured'' network. Indeed, our network model in some sense generalizes the small-world network model by Watts and Strogatz. Specifically, in the model by Watts and Strogatz (and some later variants), the  underlying ``structured'' network is a ring (or lattice). In our case, we assume that graph nodes $P$ are sampled from a measure $\mu$ and using the $r$-neighborhood proximity graph $\tGr$ to model this underlying ``structured'' network. 
This setup adds generality to our model: 
For example, it allows us to produce non-uniform and more complex degree distributions than those previously produced by starting with lattice vertices. At the same time, by putting conditions on the measure $\mu$, it still gives us sufficient structure to relate $\tGr$ and $\oG$, as we will show in this paper. We also point out that the theoretical results hold for graphs across a range of density, where the number of edges could range from $\Theta(n\log n)$ to $\Theta(n^2)$. 

\section{Model for Perturbed Network}
\label{sec:model}

We now introduce a general model to generate an observed network $\oG$. 
Suppose we are given a compact geodesic metric space $\X = (X, d_X)$ \footnote{A geodesic metric space is a metric space where any two points in it are connected by a path whose length equals the distance between them. Riemannian manifolds or path-connected compact sets in the Euclidean space are all geodesic metric spaces. 
}\cite{bridson2011metric}. 
Intuitively, we view an observed graph $\oG = (V, \oE)$ as a noisy 1-skeleton of $\X$, where graph nodes $V$ of $\oG$ are sampled from this hidden metric space. 
More precisely, we will assume that $V$ is sampled i.i.d. from a measure $\mu$ supported on $X$. 

\begin{definition}[Measure]\label{def:measure}
Given a topological space $X$, a \emph{measure} $\mu$ on $X$ is simply a function that maps every Borel subset $B$ of $X$ to a non-negative number $\mu(B)$, such that $\mu(\emptyset)=0$ and $\mu$ is $\sigma$-additive: that is the measure of a countable family of pairwise-disjoint Borel subsets of $X$ equals the sum of their respective measures. 
\end{definition}
In this paper, a measure is always a \emph{probability measure}, meaning that $\mu(X) = 1$. 
To provide sufficient structure to the observed graph $\oG$ so that it is not completely arbitrary, we want to assert some reasonable conditions on $\mu$. To this end, we consider doubling measures:   

\begin{definition}[Doubling measure \cite{heinonen2012lectures}]\label{def:doublingdim}
Given a metric space $\X = (X, d_X)$, let $\aball(x, r) \subset X$ denotes the open metric ball $\aball(x,r) = \{y\in X \mid d_X(x,y) < r\}$. 
A measure $\mu$ on $\X$ is said to be \emph{doubling} if balls have finite and positive measure and there is a constant $L=L(\mu)$ s.t. for all $x\in X$ and any $r > 0$, we have 
$\mu(\aball(x, 2r)) \le L \cdot \mu(\aball(x, r))$. 
We call $L$ the \emph{doubling constant} and say $\mu$ is an \emph{$L$-doubling measure}. 
\end{definition}

These conditions on the measure also implies conditions on the underlying space $X$ supporting the measure. Specifically, it is known that any metric space supporting a doubling measure has to be doubling as well, with its doubling constant depending on that of the measure \cite{heinonen2012lectures}. 

\paragraph{Network model.} 
We now describe our network model. 
Given a compact metric space $\X = (X, d_X)$ and an $\dm$-doubling measure $\mu$ supported on $X$, let $V$ be a set of $n$ points sampled i.i.d. from $\mu$. 
We assume that the \emph{true graph $\tGr = (V, \tE)$} is the $r$-neighborhood graph for some parameter $r > 0$; that is, $E(\tG) = \tE = \{ (u, v) \mid d_X(u, v) \le r, u, v \in V \}$. 
\begin{definition}\label{def:networkmodel}
The \emph{observed graph $\oG(r,p,q) = (V, \oE)$} is based on $\tGr=(V, \tE)$, but with the following two types of random perturbations: 
\begin{description}\denselist
\item {\sf $p$-deletion}: For each edge $(u,v) \in \tE$, $(u,v)$ is in the observed graph $\oG(r,p,q)$ with probability $1-p$ (that is, an edge in $\tE$ is deleted with probability $p$). 
\item {\sf $q$-insertion}: For any pair of nodes $u,v \in V$ s.t. $(u,v) \notin \tE$, we have that $(u,v) \in \oE$ with probability $q$. 
\end{description}
\end{definition}
Intuitively, in our model, the observed network $\oG$ is a random geometric graph sampled from the metric space $\X$ which then undergoes \myER{} type perturbation. In what follows, we often omit the parameters $r, p, q$ from the notations $\tGr$ and $\oG(r,p,q)$, when their choices are clear from the context. 
Note that both $\tG$ and $\oG$ are unweighted graphs (that is, all edges have weight $1$). 
We now equip each graph with its shortest path metric, and obtain two discrete metric space $(V, \dtG)$ and $(V, \doG)$ induced by $\tG$ and $\oG$, respectively. 

\paragraph{Problem statement and main results.} 
Adding short-cuts (via $q$-insertions) could significantly distort the shortest path metric in $\tG$. 
Our ultimate goal is to infer information about both $\X$ and $\mu$ where points are sampled from, through the study of the observed graph $\oG$. 
In this paper we aim to recover the metric structure of $\tG$ (as a reflection of metric structure of $\X$) from $\oG$. 
Specifically, we show that a simple filtering process based on the so-called Jaccard index can remove sufficient ``bad edges'' in $\oG$ so as to recover the shortest path metric of $\tG$ up to a factor of $2$ w.h.p. 

\begin{definition}[Jaccard index] \label{def:Jaccard}
Given an arbitrary graph $G$, let $N_{G}(u)$ denote the set of neighbors of $u$ in $G$ (i.e. nodes connected to $u \in V(G)$ by edges in $E(G)$). 
Given any edge $(u,v) \in E(G)$, the \emph{Jaccard index $\Jindex_{u,v}$} of this edge is defined as
\begin{align}\label{eqn:Jaccard}
 \Jindex_{u,v}(G) &= \frac{|N_{G}(u) \cap N_{G}(v)|}{|N_G(u) \cup N_G(v)|}. 
\end{align}
\end{definition}

We remark that Jaccard index is a popular way to measure similarity between a pair of nodes connected by an edge in a graph \cite{leicht2006vertex}, and has been commonly used in practice for denoising and sparsification purposes \cite{singer2013two,satuluri2011sigmod}. Our results provide a theoretical understanding for such empirical Jaccard-based denoising approaches.

The main result is stated in Theorem \ref{thm:maincombined}. To show how this is established, we show two results on the influence of the shortest path under the $p$-deletion (Theorem \ref{thm:deletiononly}) and under the $q$-insertion (Theorem \ref{thm:insertiononly}), respectively. 
The proof for Theorem \ref{thm:maincombined} combines the ideas for proofs of these two results. 

\myparagraph{Metric structures for $\tGr$ versus for $\X$.}
Our main results recover the shortest path metric for $\tGr$ approximately. 
In some sense, the metric of a proximity graph provides an approximation of that of $X$, the domain where input graph nodes are sampled from; 
see e.g, \cite{AL12,chazal2013persistence} for the case where $X$ is a smooth Riemannian manifold embedded in Euclidean space. 
We make this relationship precise for our setting as follows. 
The proof of this result is standard (see e.g, the proof of Theorem 5.2 of \cite{chazal2013persistence}). For completeness, we include the proof in Appendix \ref{appendix:thm:metricapprox}.  

\begin{theorem}\label{thm:metricapprox}
Let $(X, d_X)$ be a compact geodesic metric space and $\mu$ a doubling measure supported on $X$. Let $V_n$ be a set of $n$ points sampled i.i.d. from $\mu$, and $\tGr$ the $r$-neighborhood graph constructed on $V_n$ (each edge in $\tGr$ has equal weight $1$) with the associated shortest path metric $d_\tGr$. 
For any sample $V_n$, consider the distance between $r\cdot d_\tGr$ ($d_\tGr$ scaled by $r$) and $d_X$ restricted to the sample $V_n$; that is,  
$$\| r\cdot d_\tGr - d_X |_{V_n} \|_\infty := \max_{v, v' \in V_n} | r \cdot d_\tGr(v, v') - d_X (v, v') |. $$ 
Then we have that for a fixed $r$, $\limsup_{n \to \infty} \| r \cdot d_\tGr - d_X |_{V_n} \|_\infty \le r$ almost surely. 
\end{theorem}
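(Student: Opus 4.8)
The plan is to prove a two-sided bound relating $r \cdot d_\tGr$ and $d_X$ on $V_n$, where only one direction requires the sampling to be asymptotically dense. First I would record the easy direction: every edge of $\tGr$ joins two points at $d_X$-distance at most $r$, so by the triangle inequality any graph path of $k$ edges between $v$ and $v'$ realizes $d_X(v,v') \le kr$; applied to a shortest path this gives $d_X(v,v') \le r \cdot d_\tGr(v,v')$ for every pair and every sample, i.e. $r \cdot d_\tGr - d_X \ge 0$ deterministically. Hence the $\infty$-norm is governed entirely by how much $r \cdot d_\tGr$ can exceed $d_X$.

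For the other direction, introduce the (one-sided) covering radius $\delta_n := \sup_{x \in X} \min_{p \in V_n} d_X(x,p)$, the smallest $\delta$ for which $V_n$ is a $\delta$-cover of $X$. The geometric heart of the argument is: fix a pair $v,v' \in V_n$ and, using that $X$ is geodesic, take a unit-speed geodesic $\gamma$ from $v$ to $v'$ of length $\ell = d_X(v,v')$. Place waypoints $x_0 = v, x_1, \dots, x_m = v'$ along $\gamma$ at arclength spacing $r - 2\delta_n$ (so $m \le \ell/(r-2\delta_n) + 1$), snap each interior waypoint to a sample point $p_i$ within $\delta_n$ (possible by the definition of $\delta_n$), and observe that consecutive snapped points satisfy $d_X(p_i,p_{i+1}) \le \delta_n + (r-2\delta_n) + \delta_n = r$, so each $(p_i,p_{i+1})$ is an edge of $\tGr$. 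This exhibits a graph path of $m$ edges, giving
$$ r \cdot d_\tGr(v,v') \le \frac{r}{r-2\delta_n}\, d_X(v,v') + r, $$
where the additive $+r$ comes precisely from the $+1$ rounding in $m$ after scaling by $r$.

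Then I would combine the two bounds. Since $d_X(v,v') \le \mathrm{diam}(X) < \infty$ by compactness, every pair satisfies, uniformly, $0 \le r \cdot d_\tGr(v,v') - d_X(v,v') \le \frac{2\delta_n}{r-2\delta_n}\,\mathrm{diam}(X) + r$. So it remains only to show $\delta_n \to 0$ almost surely, whereupon the first term vanishes and $\limsup_n \|r \cdot d_\tGr - d_X|_{V_n}\|_\infty \le r$.

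The main obstacle, and the only place the hypotheses on $\mu$ enter, is establishing $\delta_n \to 0$ almost surely. For fixed $\eps > 0$, cover the compact $X$ by finitely many balls $\aball(y_j, \eps/2)$; since $\mu$ is doubling, every ball has positive measure, so $c_\eps := \min_j \mu(\aball(y_j, \eps/2)) > 0$. The event $\{\delta_n \ge \eps\}$ is contained in the event that some $\aball(y_j, \eps/2)$ receives no sample, which by a union bound has probability at most $N_\eps (1-c_\eps)^n$, summable in $n$. Borel--Cantelli then yields $\delta_n < \eps$ eventually, almost surely; intersecting over $\eps = 1/k$ gives $\delta_n \to 0$ a.s. I expect the bookkeeping in the hopping bound (checking $\delta_n < r/2$ for large $n$ so the spacing stays positive, and confirming the additive error is exactly $r$) to be the fiddliest part, but the genuinely essential step is this doubling-measure-plus-compactness covering estimate.
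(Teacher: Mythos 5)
Your proposal is correct and follows essentially the same route as the paper's proof: the deterministic lower bound via the triangle inequality, the geodesic waypoint-snapping argument giving $r\cdot d_{G^*_r} \le \frac{r}{r-2\eps}d_X + r$ for an $\eps$-dense sample, and a covering-number-plus-Borel--Cantelli argument (using positivity of ball measures from the doubling condition and compactness) to get asymptotic density almost surely. Phrasing the density step as a.s.\ convergence of the covering radius $\delta_n \to 0$ rather than fixing $\eps$ first is only a cosmetic reorganization of the same argument.
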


\section{Recovering the shortest path metric of $\tG$}
\label{sec:metricrecovery}

To illustrate the main idea, we first consider the deletion-only and insertion-only perturbation of the true graph $\tG$ in Sections \ref{subsec:deletion} and \ref{subsec:insertion}, respectively. 
As we will see below, the main difficulty lies in handling insertions (short-cuts). 
We then combine the two cases and present our main result, Theorem \ref{thm:maincombined}. 
First, we describe one (natural) assumption on $r$ that we will use later in all our statements. 

Note that as $r$ tends to $0$, the corresponding $r$-neighborhood graph may be very sparse, and a sparse graph $\tGr$ is quite sensitive to random deletions and insertions. We would like to consider $r$ in a range where is meaningful. 
We make the  following assumption, asserting a lower-bound on the mass contained inside any metric ball of radius $r/2$: 
\vspace*{0.06in}\begin{description}
\item {\sf[Assumption-R]:} The parameter $r$ is large enough such that for any $x\in X$, $\mu(\aball(x, \frac{r}{2})) \ge \mys$ where $\mys$ satisfies  
$\mys \ge \frac{12 \ln n}{n-2} (= \Omega(\frac{\ln n}{n})). $ 
\end{description}
Intuitively, $r$ is large enough such that with high probability each vertex $v$ in $\tGr$ has degree $\Omega(\ln n)$.  
Note that requiring $r$ to be large enough to have an $\Omega(\ln n / n)$ lower bound on the measure of any metric ball is natural. For example, for a random geometric graph $G(r,n)$ constructed as the $r$-neighborhood graph for points sampled  i.i.d.\ from a uniform measure on a Euclidean cube, asymptotically this is the same requirement so that the resulting $r$-neighborhood graph is connected with high probability \cite{penrose2003random}. 

\begin{lemma}\label{lem:degreebound}
Under {\sf Assumption-R}, with probability at least $1 - n^{-5/3}$, all vertices in $\tGr$ have more than $\frac{s(n-1)}{3} > 4\ln n$ neighbors. 
\end{lemma}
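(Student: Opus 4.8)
The plan is to fix a single vertex, show via a Chernoff bound that its degree lies above $\frac{\mys(n-1)}{3}$ with overwhelming probability, and then take a union bound over all $n$ vertices. First I would dispose of the deterministic inequality $\frac{\mys(n-1)}{3} > 4\ln n$: since Assumption-R gives $\mys \ge \frac{12\ln n}{n-2}$, we obtain $\frac{\mys(n-1)}{3} \ge \frac{4\ln n\,(n-1)}{n-2} > 4\ln n$, the last step using $\frac{n-1}{n-2} > 1$. So it suffices to establish the probabilistic claim.

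Next, fix $v \in V$ and condition on its location. Each of the other $n-1$ points is drawn i.i.d.\ from $\mu$, and such a point $u$ becomes a neighbor of $v$ in $\tGr$ exactly when $d_X(u,v) \le r$. Since the open ball $\aball(v,\frac{r}{2})$ is contained in $\{u : d_X(u,v)\le r\}$, monotonicity of $\mu$ together with Assumption-R shows that each point independently becomes a neighbor of $v$ with probability at least $\mu(\aball(v,\frac{r}{2})) \ge \mys$. Hence $\deg_{\tGr}(v)$ is a sum of $n-1$ independent indicator variables, each with success probability at least $\mys$; by a coupling it stochastically dominates a $\mathrm{Binomial}(n-1,\mys)$ variable, whose mean is $\mys(n-1)$.

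I would then apply the multiplicative Chernoff lower-tail bound $\Prob[S \le (1-\delta)\mathbb{E}S] \le \exp(-\delta^2\,\mathbb{E}S/2)$ to this dominating binomial with $\delta = \frac{2}{3}$ (so that $1-\delta = \frac{1}{3}$), giving
$$\Prob\!\left[\deg_{\tGr}(v) \le \tfrac{\mys(n-1)}{3}\right] \le \exp\!\left(-\tfrac{2}{9}\,\mys(n-1)\right).$$
Plugging in $\mys(n-1) \ge 12\ln n\cdot\frac{n-1}{n-2} \ge 12\ln n$ makes the right-hand side at most $\exp\!\left(-\tfrac{8}{3}\ln n\right) = n^{-8/3}$. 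A union bound over the $n$ vertices then yields failure probability at most $n\cdot n^{-8/3} = n^{-5/3}$, which is exactly the asserted bound.

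The computation is routine; the one point that needs care is that the per-vertex success probabilities are only bounded \emph{below} by $\mys$ rather than equal to it, so Chernoff cannot be applied directly with a fixed $p$. This is resolved by the stochastic-dominance coupling to $\mathrm{Binomial}(n-1,\mys)$ (equivalently, by using the lower-tail Chernoff form phrased in terms of a lower bound on the mean). I would also note the harmless open/closed ball discrepancy — neighbors are defined via $d_X(u,v)\le r$ while the doubling estimate concerns the open ball $\aball(v,\frac{r}{2})$ — which causes no trouble since $\aball(v,\frac{r}{2})$ sits inside the neighbor set. Finally, observe that the constant $12$ in Assumption-R is evidently calibrated so that the $\delta=\frac{2}{3}$ Chernoff exponent lands precisely at $n^{-8/3}$, leaving just enough room to absorb the union-bound factor of $n$.
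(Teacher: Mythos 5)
Your proof is correct and follows essentially the same route as the paper's: condition on the location of a fixed vertex, apply a multiplicative Chernoff lower-tail bound with $\delta = \tfrac{2}{3}$ to get a per-vertex failure probability of $n^{-8/3}$, and finish with a union bound over the $n$ vertices. The only (immaterial) difference is how the ``success probability is only lower-bounded by $\mys$'' issue is handled: you couple to a dominating $\mathrm{Binomial}(n-1,\mys)$, while the paper applies Chernoff relative to the actual conditional mean $(n-1)\mu(\aball(v,r)) \ge \mys(n-1)$, which amounts to the same estimate.
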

\begin{proof}
For a fixed vertex $v\in V$, let $n_v$ be the number of points in $(V-\{v\}) \cap \aball(v, r)$. 
The expectation of $n_v$ is $ (n - 1) \cdot \mu(\aball(v,r)) \ge (n-1) \cdot \mu(B(v,\frac{r}{2}))\ge s(n-1)$. 
By Chernoff bounds, we thus have that 
\begin{align*}
\myprob[n_v < \frac{s(n-1)}{3}] &\le \myprob[n_v < \frac{1}{3}(n-1) \mu(\aball(v,r))]  \le e^{-\frac{(\frac{2}{3})^{2}}{2}(n-1)\mu(\aball(v,r))} \le e^{-\frac{8\ln n}{3}}  = n^{-\frac{8}{3}}
\nonumber\\ 
\end{align*}

It then follows from the union bound that the probability that all $n$ vertices in $V$ have degree larger than $s(n-1)/3$ is at least  $1 - n \cdot n^{-\frac{8}{3}} = 1 - n^{-5/3}$. 
\end{proof}

Since $\mu$ is a doubling measure, any two neighbors $(u,v)$ in the $r$-neighborhood graph $\tGr$ would share many neighbors. 
Specifically, if $(u,v)$ is an edge in $\tGr$, that is, $d_X(u,v) \le r$, then $\aball(u, r) \cap \aball(v, r)$ must contain a metric ball of radius $r/2$ (say centered at midpoint $z$ of a shortest path connecting $u$ to $v$ in $X$; see Figure \ref{fig:illustration} (a)). 
Thus by a similar argument as the proof of Lemma \ref{lem:degreebound}, we obtain the following bound on the number of common neighbors between the nodes $u, v$ if edge $(u,v) \in \tG$. 
\begin{cor}\label{lem:commonneighbors}
Assume that the graph nodes $V$ of $\tGr$ are sampled i.i.d from an $L$-doubling measure $\mu$ supported on a compact geodesic metric space $(X,d_X)$. Then under {\sf Assumption-R}, with probability at least $1 - n^{-2/3}$, any two neighbors $(u, v) \in \tGr$ have $\frac{s(n-1)}{3} > 4\ln n = \Omega(\ln n)$ number of common neighbors. 
\end{cor}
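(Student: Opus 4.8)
The plan is to combine a deterministic geometric containment with the same Chernoff-plus-union-bound argument already used for Lemma \ref{lem:degreebound}.

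First I would establish the geometric step. Fix two vertices $u,v \in V$ and suppose $(u,v)\in\tGr$, i.e.\ $d_X(u,v)\le r$. Since $(X,d_X)$ is geodesic, there is a shortest path from $u$ to $v$; let $z\in X$ be its midpoint, so $d_X(u,z),\,d_X(v,z)\le r/2$. For any $w\in\aball(z,r/2)$ the triangle inequality gives $d_X(u,w)\le d_X(u,z)+d_X(z,w) < r/2+r/2 = r$, and symmetrically $d_X(v,w) < r$; hence $\aball(z,r/2)\subseteq \aball(u,r)\cap\aball(v,r)$. Consequently every sample point lying in $\aball(z,r/2)$, other than $u$ and $v$ themselves, is adjacent to both $u$ and $v$ in $\tGr$, so it suffices to lower-bound the number of sample points falling inside $\aball(z,r/2)$.

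Next comes the probabilistic step, which mirrors Lemma \ref{lem:degreebound}. I would fix a pair of indices and condition on the positions of the corresponding two points $u,v$. On the event $d_X(u,v)\le r$ the center $z$ is then determined, while the remaining $n-2$ points stay i.i.d.\ from $\mu$ and independent of $u,v$; each lands in $\aball(z,r/2)$ independently with probability $\mu(\aball(z,r/2))\ge s$, where the uniform lower bound is exactly Assumption-R applied at the (data-dependent) center $z$. Writing $m_{u,v}$ for the number of such common neighbors, its conditional mean is at least $(n-2)s\ge 12\ln n$, and the Chernoff tail bound from the proof of Lemma \ref{lem:degreebound} (deviation parameter $2/3$) gives
\[
\myprob\!\left[\,m_{u,v} < \tfrac{1}{3}(n-2)\,\mu(\aball(z,r/2))\,\right]
\le e^{-\frac{(2/3)^2}{2}(n-2)\mu(\aball(z,r/2))}
\le e^{-\frac{2}{9}\cdot 12\ln n} = n^{-8/3}.
\]
On the complementary event $m_{u,v}\ge \tfrac{1}{3}(n-2)\mu(\aball(z,r/2))\ge \tfrac{(n-2)s}{3}\ge 4\ln n$, which is the claimed $\Omega(\ln n)=\tfrac{s(n-1)}{3}$ bound up to the harmless replacement of $n-1$ by $n-2$.

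Finally I would take a union bound over all $\binom{n}{2}\le n^2/2$ vertex pairs: the probability that some edge of $\tGr$ has fewer than $\tfrac{(n-2)s}{3}$ common neighbors is at most $\tfrac{n^2}{2}\cdot n^{-8/3}=\tfrac{1}{2}n^{-2/3}\le n^{-2/3}$, giving the stated success probability $1-n^{-2/3}$. The one point that needs genuine care---rather than being a verbatim copy of Lemma \ref{lem:degreebound}---is the conditioning: because the endpoints $u,v$ are themselves random and the midpoint $z$ depends on their realization, I must argue that freezing $u,v$ leaves the other $n-2$ samples independent and that $\mu(\aball(z,r/2))\ge s$ still holds for this random $z$; the latter is guaranteed precisely because Assumption-R bounds the measure of every radius-$r/2$ ball uniformly over all centers $x\in X$. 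The remaining difference from Lemma \ref{lem:degreebound} is bookkeeping: we discard two points instead of one and pay a $\binom{n}{2}$ union-bound factor instead of $n$, which is exactly what degrades the failure probability from $n^{-5/3}$ to $n^{-2/3}$.
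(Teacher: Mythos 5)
Your proposal is correct and follows essentially the same route as the paper: the paper likewise places a ball $\aball(z,r/2)$ at the midpoint $z$ of a geodesic from $u$ to $v$, notes $\aball(z,r/2)\subseteq\aball(u,r)\cap\aball(v,r)$, and then invokes the Chernoff-plus-union-bound argument of Lemma \ref{lem:degreebound} with the union taken over the $\binom{n}{2}$ pairs, yielding the $1-n^{-2/3}$ guarantee. The $n-1$ versus $n-2$ discrepancy you flag is a harmless imprecision present in the paper's own statement (its template, Lemma \ref{lem:degreebound}, excludes only one point), and your handling of the conditioning on $u,v$ is exactly what the paper's ``by a similar argument'' glosses over.
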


\subsection{Deletion only}
\label{subsec:deletion}

In this case, we assume that we will remove each edge in $\tG$ independently with probability $p$ to obtain an observed empirical graph $\Ghat$. 
Our goal is to relate the shortest path metrics $\dtG$ of $\tG$ and $d_{\Ghat}$ of $\Ghat$ respectively. 
Deletion-only means that shortest path distances in $\Ghat$ are larger than those in $\tG$. 
Furthermore, since any two nodes $u, v$ connected in $\tG$ share sufficient number ($\Omega(\ln n)$) of  common neighbors, intuitively, evan after removing a constant fraction of edges in $\tG$, we can still guarantee that w.h.p.\ $u$ and $v$ will have some common neighbors left, and thus $u$ and $v$ can be connected through that common neighbor by a path of length $2$ in $\Ghat$. 
Hence overall, w.h.p.\ the distortion in shortest path distance is at most a factor of $2$. 

\begin{definition}
Let $G$ and $G'$ be two graphs on the same set of nodes $V$, and equipped with graph shortest path metric $d_G$ and $d_{G'}$, respectively. 
By $d_G \le c d_{G'}$, we mean that for any two nodes $u, v\in V$, we have that $d_G(u,v) \le c d_{G'}(u, v)$. 
We say that $d_{G'}$ is \emph{a $c$-approximation of $d_G$} if $\frac{1}{c} d_G \le d_{G'} \le c d_G$. 
\end{definition} 

\begin{theorem}[Random deletion] \label{thm:deletiononly}
Let $V$ be $n$ points sampled i.i.d.\ from a probability measure $\mu: X \to \reals^+$ supported on a compact metric space $(X, d_X)$. 
Let $\tG$ be the $r$-neighborhood graph for $V$; and $\Ghat$ a graph obtained by removing each edge in $\tG$ independently with probability $p$. 
Under {\sf Assumption-R} and for $p < \frac{1}{2}e^{-\frac{9\ln n}{s(n-1)}}$, we have with probability at least $1 - \frac{1}{n^{\Omega(1)}}$, the shortest path metric $d_{\Ghat}$ is a $2$-approximation of the shortest path metric $\dtG$. 

Since $s > \frac{12\ln n}{n-1}$, the statement holds for $p < \frac{1}{2e^{3/4}}$. As $s$ becomes larger, the upper bound on $p$ gets closer to $1/2$. 
\end{theorem}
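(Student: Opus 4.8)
The plan is to exploit the asymmetry of the perturbation: deletion can only lengthen paths, so one half of the $2$-approximation comes for free, and the other half will follow from the fact that every edge of $\tG$ is guarded by $\Omega(\ln n)$ common neighbors (Corollary~\ref{lem:commonneighbors}), at least one of which almost surely survives the deletions. Concretely, I would first observe that since $\Ghat$ is obtained from $\tG$ purely by removing edges, every path in $\Ghat$ is also a path in $\tG$; hence $\dtG(u,v)\le d_{\Ghat}(u,v)$ for all $u,v\in V$ deterministically, which already gives $\tfrac12\dtG\le d_{\Ghat}$, the lower half of the claimed approximation.

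For the upper bound $d_{\Ghat}\le 2\dtG$, I would reduce to a per-edge statement: it suffices to show that w.h.p.\ \emph{every} edge $(u,v)\in\tE$ satisfies $d_{\Ghat}(u,v)\le 2$. Granting this, take any pair $u,v$ with a shortest $\tG$-path $u=x_0,x_1,\dots,x_\ell=v$ of length $\ell=\dtG(u,v)$; concatenating the $\Ghat$-paths of length at most $2$ realizing each $d_{\Ghat}(x_i,x_{i+1})$ yields $d_{\Ghat}(u,v)\le 2\ell=2\dtG(u,v)$ by the triangle inequality for the shortest path metric.

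To establish the per-edge claim I would condition on the high-probability event of Corollary~\ref{lem:commonneighbors} that every $(u,v)\in\tE$ has at least $k:=\tfrac{s(n-1)}{3}$ common neighbors $w_1,\dots,w_k$; crucially this event depends only on the sample $V$ and is independent of the subsequent deletions. For a fixed such edge, the event $d_{\Ghat}(u,v)>2$ forces the direct edge $(u,v)$ to be deleted \emph{and} every length-$2$ detour $u\text{--}w_i\text{--}v$ to be broken. Since the edge sets $\{(u,v)\}$ and $\{(u,w_i),(w_i,v)\}_{i=1}^{k}$ are pairwise disjoint, these failure events are mutually independent, so
\[
\Prob[d_{\Ghat}(u,v)>2]\;\le\;\bigl(1-(1-p)^2\bigr)^{k}\;=\;(2p-p^2)^{k}.
\]
The hypothesis $p<\tfrac12 e^{-9\ln n/(s(n-1))}$ is tuned precisely to kill this: from $2p-p^2<2p<e^{-9\ln n/(s(n-1))}$ and $k=\tfrac{s(n-1)}{3}$, the exponent collapses to $3\ln n$, so the bound is at most $n^{-3}$.

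Finally I would union-bound over the at most $\binom{n}{2}<n^2/2$ edges of $\tG$, giving failure probability at most $\tfrac{n^2}{2}\,n^{-3}=\tfrac1{2n}$; adding the $n^{-2/3}$ chance that the common-neighbor event fails keeps the total at $1/n^{\Omega(1)}$. On the complementary event both inequalities hold, yielding the $2$-approximation, and the closing remark follows since {\sf Assumption-R} forces $9\ln n/(s(n-1))<3/4$, so $\tfrac12 e^{-9\ln n/(s(n-1))}>\tfrac{1}{2e^{3/4}}$. I expect the only genuine subtlety to be the clean separation of the two randomness sources—the sampling that fixes the common-neighbor structure versus the independent deletions—and the verification that the detour-survival events are independent so that the product bound is legitimate; the remainder is a routine Chernoff-then-union-bound computation.
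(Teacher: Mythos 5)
Your proposal is correct and follows essentially the same route as the paper's own proof: the lower bound comes for free from monotonicity of deletion, and the upper bound reduces to showing each edge of $\tG$ retains a surviving common neighbor, using Corollary \ref{lem:commonneighbors}, the independence of the disjoint detour-edge sets (giving the $(2p-p^2)^{k}$ bound, versus the paper's slightly looser $(2p)^{k}$), and the same union bound and path-concatenation argument.
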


\begin{proof}
For a node $u \in V$, let $N_\tG(u)$ and $N_{\Ghat}(u)$ denote the set of neighbors 
of $u$ in graph $\tG$ and graph $\Ghat$, respectively. 

Since deletion cannot decrease the length of shortest paths, we have $d_{\tG} \le d_{\Ghat}$. We now show that $d_\Ghat \le 2 d_\tG$. 

Consider $(u,v) \in E(\tG)$. Assume that $u$ and $v$ share $k_{u,v}$ number of common neighbors; that is, $k_{u,v} = |N_\tG (u) \cap N_\tG(v)|$. 
The probability that $N_\Ghat(u) \cap N_\Ghat(v)  = \emptyset$ (i.e, $u$ and $v$ have no common neighbor in graph $\Ghat$) 
is thus at most $(2p)^{k_{u,v}}$.

On the other hand, by Corollary \ref{lem:commonneighbors}, with probability at least $1-n^{-2/3}$ we have that $k_{u,v} \ge s(n-1)/3$ for all $(u,v) \in E(\tG)$. 
Therefore, the probability that there exists $(u,v) \in E(\tG)$ with $N_{\Ghat}(u) \cap N_\Ghat(v) = \emptyset$ is at most
$n^{-2/3} + n^2 (2p)^{s(n-1)/3} < n^{-2/3} + n^2 (e^{-3\ln n}) < n^{-1/3}$, 
where we used the bound on $p$ to derive the first inequality. 

Hence with probability at least $1 - n^{-1/3}$, we have that for all edges $(u,v) \in E(\tG)$, their distance in $\Ghat$ satisfies $d_\Ghat(u,v) \le 2$ (via one of their common neighbor in $N_\Ghat(u) \cap N_\Ghat(v)$). 
This in turn implies that with probability at least $1 - n^{-1/3}$, for any path $\pi = \langle v_1, \ldots, v_m \rangle$ in $\tG$ with length $m$, we can find a path of length at most $2m$ in $\Ghat$ to connect $v_1$ to $v_m$ (as each edge $(v_i,v_{i+1})$ in $\pi$ corresponds to a path of length at most $2$ in $\Ghat$). If $u$ and $v$ are disconnected in $\tG$, then obviously they are still disconnected in $\Ghat$.
Hence, for any two $u, v\in V$, $d_\Ghat(u,v) \le 2 d_\tG(u,v)$, and the theorem follows. 
\end{proof}

\subsection{Insertion only} 
\label{subsec:insertion}

Now assume that the observed graph $\Ghat$ is generated from the true graph $\tG$ where all edges in $\tG$ also exist in $\Ghat$, and for any $u, v\in V$ with $(u,v) \notin E(\tG)$, we have $(u,v) \in E(\Ghat)$ with probability $q$. 
In this case, the shortest path metric can be significantly altered in $d_\Ghat$. 
Hence to recover the metric $d_\tG$, instead of operating on $\widehat G$ directly, we will construct \emph{another graph} $\anotherG$ from $\widehat G$ whose shortest path metric $d_{\anotherG}$ approximates $d_\tG$. 

We propose the following Jaccard-Index-based filtering process, which we call a $\athreshold$-Jaccard filtering, as it uses a parameter $\athreshold$. (Recall the definition of Jaccard index in Def. \ref{def:Jaccard}). We represent the output filtered (denoised) graph as $\anotherG_\athreshold$:  

\begin{description}\denselist
\item {\sf $\athreshold$-Jaccard filtering}: 
Given graph $\Ghat$, for each edge $(u,v) \in E(\Ghat)$, we insert the edge $(u,v)$ into $E(\anotherG_\athreshold)$ if and only if $\Jindex_{u,v}(\Ghat) \ge \athreshold$. That is, $V(\anotherG_\athreshold) = V(\Ghat)$ and $E(\anotherG_\athreshold) := \{ (u,v) \in E(\Ghat) \mid \Jindex_{u,v}(\Ghat) \ge \athreshold \}$. 
\end{description}

Below we first show that w.h.p., all ``good'' edges in the true $r$-neighborhood graph $\tG$ will have a large Jaccard index, so that they will be kept in $\anotherG_\athreshold$ after a $\athreshold$-Jaccard filtering procedure with appropriate $\athreshold$. We provide some discussions on the bounds of the parameters after the proof of this lemma. 

\begin{lemma}\label{lem:keepgoodedges}
Let $V$ be a set of $n$ points sampled i.i.d.\ from an $L$-doubling probability measure $\mu$ supported on a compact geodesic metric space $\X=(X, d_X)$. 
If {\sf Assumption-R} holds and $q \le c\mys$, then for $\forall \Jthreshold \leq \frac{1}{(6+\frac{1}{\ln{n}}+12c)L^{2}}$, we have with probability at least $1 - n^{-2/3}$, that $\Jindex_{u,v}(\Ghat) \ge \Jthreshold$ for \emph{all} pairs of nodes $u, v \in V$ with $d_X(u,v) \le r$.  

For example, if $c = \frac{1}{2}$ (i.e, $q \le \frac{\mys}{2}$), then the bound on $\Jindex_{u,v}$ holds for $\tau \le \frac{1}{13L^2}$. 
Note $c$ may not be a constant and can depend on $n$; as $c$ increases, the upper bound on $\tau$ decreases. 

\end{lemma}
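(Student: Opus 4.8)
The plan is to fix an arbitrary pair $(u,v)$ with $d_X(u,v) \le r$ (so that $(u,v) \in E(\tG) \subseteq E(\Ghat)$, since insertion never deletes edges) and to lower-bound its Jaccard index $\Jindex_{u,v}(\Ghat)$ by controlling the numerator and denominator through a \emph{single} local quantity. Because $X$ is geodesic, there is a midpoint $z$ on a shortest $u$--$v$ geodesic with $d_X(z,u) = d_X(z,v) \le r/2$; I would set $a := \mu(\aball(z, r/2))$ and note $a \ge \mys$ by {\sf Assumption-R}. The doubling property then relates the two balls $\aball(z, r/2)$ and $\aball(z, 2r)$, and hence the numerator and denominator, up to the factor $L^2$.

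For the numerator I would observe that every sampled point in $\aball(z, r/2)$ lies within distance $r$ of both $u$ and $v$, so it is a common neighbor in $\tG$ and remains one in $\Ghat$; hence $|N_{\Ghat}(u) \cap N_{\Ghat}(v)| \ge |V \cap \aball(z, r/2)| - 2$. The $n-2$ points other than $u,v$ each lie in $\aball(z, r/2)$ independently with probability $a$, so a lower-tail Chernoff bound (exactly as in Lemma~\ref{lem:degreebound}) gives $|N_{\Ghat}(u) \cap N_{\Ghat}(v)| \ge (n-2)a/3$ with failure probability at most $n^{-8/3}$. For the denominator I would decompose $N_{\Ghat}(u) \cup N_{\Ghat}(v) \subseteq \big(N_{\tG}(u) \cup N_{\tG}(v)\big) \cup I_u \cup I_v$, where $I_u, I_v$ are the $q$-inserted neighbors. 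Any true neighbor of $u$ or $v$ lies within $r$ of $u$ or $v$, hence within $3r/2 < 2r$ of $z$, so $N_{\tG}(u) \cup N_{\tG}(v) \subseteq V \cap \aball(z, 2r)$; applying the doubling inequality twice yields $\mu(\aball(z,2r)) \le L^2 a$, and an upper-tail Chernoff bound (deviation factor $1$) gives $|V \cap \aball(z, 2r)| \le 2 L^2 (n-2) a$. For the random insertions, $|I_u| + |I_v|$ is stochastically dominated by a binomial with mean at most $2(n-1)q \le 2(n-1)c\mys \le 2(n-1)ca$, and a second upper-tail Chernoff bound controls it by $4cL^2(n-2)a$ (using $L \ge 1$).

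Combining the three estimates gives
$$\Jindex_{u,v}(\Ghat) \ge \frac{(n-2)a/3}{2L^2(n-2)a + 4cL^2(n-2)a + (\text{lower-order})} \ge \frac{1}{\big(6 + \tfrac{1}{\ln n} + 12c\big)L^2},$$
where $6 = 3\cdot 2$ comes from the numerator's factor $1/3$ together with the true-neighbor factor $2$, the term $12c = 3\cdot 4c$ comes from the inserted part, and the $\tfrac{1}{\ln n}$ correction absorbs the additive lower-order slack (the $\pm 2$ for $u,v$ and the $n-1$ versus $n-2$ discrepancies), using $(n-2)a \ge (n-2)\mys \ge 12\ln n$. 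A union bound over the $O(n^2)$ pairs, each failing with probability $O(n^{-8/3})$, yields the overall $1 - n^{-2/3}$ guarantee.

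The main obstacle is conceptual rather than computational: a global bound such as $\mys$ is \emph{not} enough, because if the local measure $a$ is much larger than $\mys$, a fixed numerator bound of $\mys(n-1)/3$ would be swamped by the (also larger) denominator. The key is to carry the \emph{same} local ball measure $a$ through both the numerator and the denominator, so that doubling controls their ratio by $L^2$ uniformly, regardless of how large $a$ is. The remaining care lies in making all three Chernoff estimates hold simultaneously and uniformly after the union bound, and in tracking the additive lower-order terms into the $\tfrac{1}{\ln n}$ correction so that they do not degrade the clean multiplicative bound.
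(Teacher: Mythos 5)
Your skeleton is essentially the paper's: lower-bound $\Jindex_{u,v}(\Ghat)$ by controlling numerator and denominator through the midpoint ball $\aball(z,r/2)$, use the doubling chain $\mu(\aball(z,2r)) \le L^2\mu(\aball(z,r/2))$, apply Chernoff at multiplicative factors $\tfrac13$ and $2$, and union-bound over pairs; your numerator treatment is correct. The genuine gap is in the denominator, where you split $N_{\Ghat}(u)\cup N_{\Ghat}(v)$ into true neighbors plus inserted neighbors and run a \emph{separate} multiplicative Chernoff bound on $|I_u|+|I_v|$. That step fails in the small-$q$ regime, which the lemma explicitly allows ($c$, hence $q$, may depend on $n$ and be arbitrarily small). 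The count $|I_u|+|I_v|$ is dominated by $Bin(2(n-1),q)$ with mean at most $2(n-1)q$, and a multiplicative upper tail at roughly twice this mean has failure probability $e^{-\Theta(nq)}$, which is \emph{not} $n^{-\Omega(1)}$ once $nq=o(\ln n)$. Concretely, take $q=1/n$ and a pair with $\mu(\aball(z,r/2))=\mys=\frac{12\ln n}{n-2}$, so the tightest admissible $c$ is about $\frac{1}{12\ln n}$: your ceiling $4cL^2(n-2)\mys \approx 4L^2$ is a constant, the mean of $|I_u|+|I_v|$ is $\approx 2$, and the event that the count exceeds the ceiling has constant probability per pair; the union bound over $\Theta(n^2)$ pairs then collapses. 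Patching with an additive (Poisson-type) tail --- insertions at most $O(\ln n/\ln\ln n)$ per pair w.h.p.\ --- rescues a statement, but only with admissible threshold of order $\frac{1}{(6+O(1/\ln\ln n))L^2}$, strictly weaker than the claimed $\frac{1}{(6+\frac{1}{\ln n}+12c)L^2}$ whenever $c=o(1/\ln\ln n)$.

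The paper's proof never asks the insertion count to concentrate on its own: it folds both sources of randomness into one binomial. Conditional on $u,v$, each of the other $n-2$ points contributes to $\mB=|N_\Ghat(u)\cup N_\Ghat(v)|$ iff it lands in $\aball(u,r)\cup\aball(v,r)$ \emph{or} an independent coin of bias $2q-q^2$ comes up heads, so $\mB-2\sim Bin(n-2,p_\mB)$ with $p_\mB=p_{\mB_*}+(1-p_{\mB_*})(2q-q^2)$, whose mean is at least $(n-2)p_{\mB_*}\ge 12\ln n$ \emph{regardless of} $q$; Chernoff applied to this single variable gives failure probability $n^{-4}$ uniformly in $q$. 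The insertions then enter only through the mean, $\myexp_\mB\le (n-2)p_{\mB_*}+2(n-2)q+2$, and the inequality $q\le c\mys\le c\,p_{\mB_*}$ is exactly what produces the $12c$ and $\frac{1}{\ln n}$ terms. If you replace your two separate denominator estimates by this combined-binomial argument (keeping your numerator and doubling steps as they are), your proof goes through and recovers the stated constants.
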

\begin{proof}
Consider a fixed pair of nodes $u, v \in V$, and let $F = F(u,v)$ be the event that $d_X(u,v) \leq r$. Set $\mA_* = | N_\tG(u) \cap N_\tG(v)|$ to be the number of common neighbors of $u$ and $v$ in $\tG$. Let $\mB = |N_\Ghat(u) \cup N_\Ghat(v)|$ denote the total number of neighbors of $u$ and $v$ in the perturbed graph $\Ghat$. 

Since $\Ghat$ can have only more edges than $\tG$, $|N_\Ghat(u) \cap N_\Ghat(v)| \ge  | N_\tG(u) \cap N_\tG(v)| = \mA_*$ and thus $\Jindex_{u,v} (\Ghat) \ge \frac{\mA_*}{\mB}$.
In what follows, we prove that $\frac{\mA_*}{\mB} \ge \Jthreshold \cdot I_{F}$ (which implies that  $\Jindex_{u,v} (\Ghat) \ge \Jthreshold \cdot I_{F}$) with probability at least $1-2n^{-8/3}$. (Here, we use $I_A$ to denote the indicator random variable of the event $A$, and the conventions that $\Jindex_{u,v}(\Ghat)=0$ if $(u,v)\notin \Ghat$ and $0/0 = 0$.)

Note that $\mA_*$ is a random variable, which equals the number of (i.i.d.\ sampled) points from $V-\{u,v\}$ that fall in the region $\aball(u, r) \cap \aball(v,r)$. That is, conditional on $u$ and $v$, $\mA_*$ is drawn from a binomial distribution $Bin(n-2, p_{\mA_*})$ with $p_{\mA_*} = \mu(\aball(u, r) \cap \aball(v,r))$, and the conditional expectation of $\mA_*$ given $u$ and $v$ is $\myexp_{\mA_*} = (n-2) \cdot p_{\mA_*}$.

Now observe that, conditional on $u$ and $v$, the random variable $\mB-2$ (see footnote\footnote{The subtraction of $2$ in $\mB-2$ accounts for points $u$ and $v$, which are in $N_\Ghat(u) \cup N_\Ghat(v)$. Similarly, in the binomial distribution we will have only $n-2$, accounting for points in $V - \{u,v\}$.}) has distribution $Bin(n-2, p_{\mB})$ with $p_\mB = p_{\mB_*} + (1-p_{\mB_*})(2q-q^2)$, where $p_{\mB_*} = \mu(\aball(u,r) \cup \aball(v,r))$. 
Indeed, observe that, conditional on $u$ and $v$, points contributing to $\mB$ can be generated as follows. 
Let $U = \aball(u,r) \cup \aball(v,r)$. Independently, for each $i=1, \ldots, n-2$, we draw a point $x_i$ randomly from $\mu$ and we also perform an independent coin flip for this point, with probability of heads equal to $1-(1-q)^2 = 2q - q^2$. This quantity is the probability for a point {\it outside} $U$ to be connected to either $u$ or $v$ under edge-insertion probability $q$.
We set the indicator variable $y_i=1$ iff either $x_i\in U$, or $x_i\notin U$ and the $i^\text{th}$ coin flip is heads. 
Conditional on $u$ and $v$, the resulting $n-2$ indicator random variables $y_1, \ldots, y_{n-2}$ are i.i.d.\ with $\myprob[y_i = 1 \mid u,v] = p_{\mB_*} + (1-p_{\mB_*}) (2q-q^2) = p_\mB$. 
Therefore, given $u$ and $v$, the distribution of $\beta-2 = \sum y_i$ is $Bin(n-2, p_\mB)$. The conditional expectation of $\mB$ given $u$ and $v$, denoted $\myexp_\mB$, satisfies
\begin{equation}
(n-2) \cdot p_{\mB_*} \le \myexp_{\mB} = (n-2) \cdot p_\mB + 2 \le (n-2) \cdot p_{\mB_*} + (n-2) \cdot 2q + 2. \label{eqn:expB}
\end{equation}

Let us for now assume that $\frac{c_1 \myexp_{\mA_*}}{c_2 \myexp_{\mB}} \ge \Jthreshold I_{F}$ a.s.\ for constants $c_1 = 1 - \sigma_1$ and $c_2 = 1+\sigma_2$ with $0 < \sigma_1 < 1$ and $0< \sigma_2$ to be set shortly. 

If $d_X(u,v) \leq r$, then $\aball(u,r) \cap \aball(v,r)$ contains at least one metric ball of radius $r/2$ (say $\aball(z,r/2)$ with $z$ being the mid-point of a shortest path between $u$ and $v$ in $\X$; see Figure \ref{fig:illustration} (a)). 
\begin{figure}[t]
  \centering
  \begin{tabular}{ccc}
  \includegraphics[height=3cm]{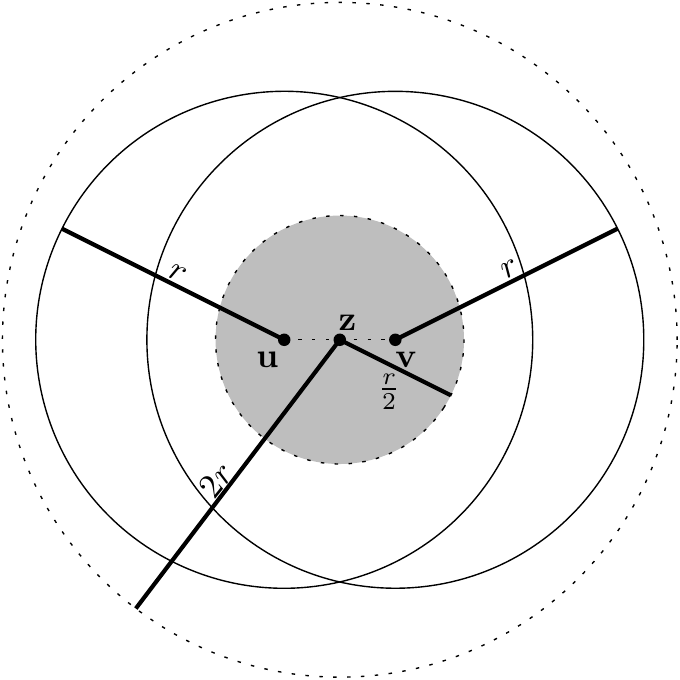} &\hspace*{0.3in} & \includegraphics[height=3cm]{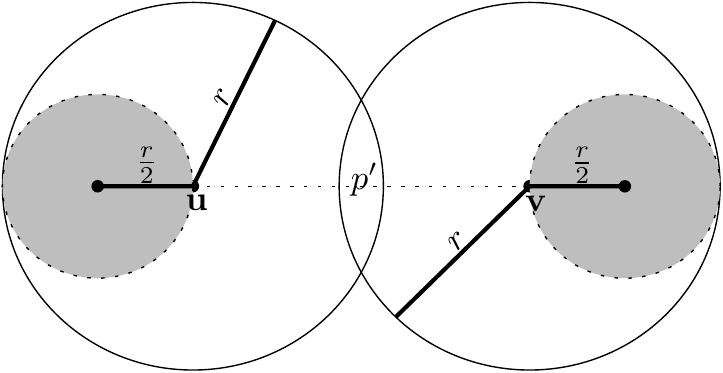} \\
(a) & & (b) 
  \end{tabular}
\vspace*{-0.15in}
  \caption{In these figures, we draw metric balls as Euclidean balls just for illustration purpose. (a) illustrates the bound $p_{\mA*} \ge \mu(\aball(z, r/2))$ which follows from $\aball(z,r/2) \subseteq \aball(u,r) \cap \aball(v,r)$. (b) Key observation for Lemma \ref{lem:removebadedges}: as $d_X(u,v) > r$, we have that the region $[B(u, r) \cup B(v, r)] \setminus [B(u,r) \cap B(v,r)]$ contains at least two metric balls, each of radius $r/2$.}
  \label{fig:illustration}
\end{figure}
Hence by {\sf Assumption-R}, on the event $d_X(u,v) \leq r$, we have 
$$\myexp_{\mA_*} \ge (n-2) \cdot \mu(\aball(z, r/2)) \ge (n-2) \cdot \frac{12 \ln n}{n-2} = 12 \ln n.$$ 
Similarly, using \eqref{eqn:expB}, the conditional expectation of $\mB$ satisfies 
\begin{align}\label{eqn:mB}
\myexp_{\mB} \ge (n-2) \cdot p_{\mB_*} &\ge (n-2) \cdot \mu(\aball(u,r)) \ge 12 \ln n.
\end{align}
We now set $\sigma_1 = 2/3$ and $\sigma_2 =1$. It then follows from Chernoff bounds that
\[
\myprob[ \mA_* < c_1 \myexp_{\mA_*}\mid u,v,F] + \myprob[\mB > c_2 \myexp_{\mB}\mid u,v] \le e^{-\frac{\sigma_{1}^{2}}{2}\myexp_{\mA_*}} + e^{-\frac{\sigma_{2}}{3}\myexp_{\mB}}~\le n^{-\frac{8}{3}} + n^{-4}.
\]
Taking expectation of the above with respect to $u$ and $v$ gives
\begin{equation}\label{eqn:newAB}
\myprob[ \mA_* < c_1 \myexp_{\mA_*} \mid F] + \myprob[\mB > c_2 \myexp_{\mB}] \le 2n^{-\frac{8}{3}}.
\end{equation}
On the other hand, since $\frac{\mA_*}{\mB}\ge 0$, we have
\begin{equation}\label{eqn:ratioA}
\begin{aligned}
\myprob[\frac{\mA_*}{\mB} < \Jthreshold I_{F}] &\le  \myprob[\frac{\mA_*}{\mB} < \Jthreshold \mid (\mA_* \ge c_1 \myexp_{\mA_*}) \wedge (\mB \le c_2 \myexp_{\mB}) \wedge F] \\ 
&\hspace{10pt} +  \myprob[(\{\mA_* < c_1 \myexp_{\mA_*}\} \vee \{\mB > c_2 \myexp_{\mB}\}) \wedge F]. 
\end{aligned}
\end{equation}
Since we assumed that $\frac{c_1 \myexp_{\mA_*}}{c_2 \myexp_{\mB}} \ge \Jthreshold I_{F}$, if $\mA_* \ge c_1 \myexp_{\mA_*}$ and $\mB \le c_2 \myexp_{\mB}$ and $d_X(u,v) \leq r$, then we have $\frac{\mA_*}{\mB} \geq \frac{c_1 \myexp_{\mA_*}}{c_2 \myexp_\mB} \ge \Jthreshold$. This means that  
$$\myprob[\frac{\mA_*}{\mB} < \Jthreshold \mid (\mA_* \ge c_1 \myexp_{\mA_*}) \wedge (\mB \le c_2 \myexp_{\mB})\wedge F] = 0. $$ 
Hence the first term in the right-hand side of \eqref{eqn:ratioA} is 0. 
Together with \eqref{eqn:newAB}, and recalling $\Jindex_{u,v} (\Ghat) \ge \frac{\mA_*}{\mB}$, we have
\begin{flalign*}
\myprob[\Jindex_{u,v}(\Ghat) &< \Jthreshold I_{F}] \le \myprob[\frac{\mA_*}{\mB} < \Jthreshold I_{F}] \le 2n^{-\frac{8}{3}}.
\end{flalign*}

By the union bound, the probability that $\Jindex_{u,v}(\Ghat) \ge \Jthreshold$ for all pairs of nodes $u,v \in V$ such that $d_X(u,v) \leq r$ is thus 
at least $1-\frac12 n^{2}(2n^{-\frac{8}{3}}) = 1 - n^{-\frac{2}{3}} $. 

Finally, we need to verify that $\frac{c_1 \myexp_{\mA_*}}{c_2 \myexp_{\mB}} = \frac{\myexp_{\mA_*}}{6\myexp_\mB} \ge \Jthreshold I_{F}$  holds for a.e.\ $u$ and $v$. This holds automatically if $d_X(u,v) > r$, so assume $d_X(u,v) \leq r$.
Recall that $\myexp_\mB \le (n-2)\cdot p_{\mB_*} + (n-2) \cdot 2q + 2$ by (\ref{eqn:expB}). 
Since $q \le c \mys$, we have $(n-2)2q \le 2(n-2)c \mys$.  
On the other hand, by {\sf Assumption-R}, $p_{\mB_*} \ge \mu(\aball(u, r)) \ge \mys$, hence $2(n-2)q \le 2(n-2)c \cdot p_{\mB_*}$. Combining this with the fact that  $(n-2)p_{\mB_*} \ge 12\ln{n}$ from \eqref{eqn:mB} (which also implies that $2 \le \frac{(n-2)p_{\mB_*}}{6\ln n}$), it then follows that
\begin{align}\label{eqn:finalA}
\frac{\myexp_{\mA_*}}{6\myexp_\mB} \ge \frac{\myexp_{\mA_*}}{6((n-2)(1+\frac{1}{6\ln{n}}) p_{\mB_*} + 2(n-2)c\cdot p_{\mB_*})} = \frac{p_{\mA_*}}{p_{\mB_*}}\cdot \frac{1}{6+\frac{1}{\ln{n}} + 12c}.
\end{align}


Now let $z$ be the midpoint of a geodesic connecting $u$ and $v$; see Figure \ref{fig:illustration} (a). Observe that $p_{\mA_*} \ge \mu(\aball(z, r/2))$, $p_{\mB_*} \le \mu(\aball(z, 2r))$ and since $\mu$ is $L$-doubling, we have: 
\begin{align}\label{eqn:finalB}
p_{\mB_*} \le \mu(\aball(z, 2r)) \le L \mu(\aball(z, r)) \le L^2 \mu(\aball(z,r/2)) \le L^2 p_{\mA_*}. 
\end{align}

Combining equations \eqref{eqn:finalA} and \eqref{eqn:finalB}, we have that if $\Jthreshold \le \frac{1}{(6 + \frac{1}{\ln{n}} + 12c)L^2}$, then $\frac{\myexp_{\mA_*}}{6\myexp_\mB} \ge \Jthreshold$ is satisfied. This proves the lemma. 
\end{proof}

\paragraph{Discussion on the bounds of parameters.} 
Lemma \ref{lem:keepgoodedges} implies that, with high probability, we will not remove any good edges if the doubling constant $L$ of the measure is at most $O(\frac{1}{\sqrt{\athreshold}})$ and the insertion probability is small ($q \le cs$). The requirement that $L = O(\frac{1}{\sqrt{\athreshold}})$ is rather mild; we now inspect the requirement $q \le cs$: Since $sn$ lower-bounds the degree of a node in the true graph $\tG$ (by Lemma \ref{lem:degreebound}), it is reasonable that the insertion probability $q$ is required to be small compared to $s$; as otherwise, the ``noise'' (inserted edges) will overwhelm the signal (original edges). 
Furthermore, it is important to note that $c$ is not necessarily a constant -- It can depend on $n$, but as $c$ increases, the upper bound of the admissible range for parameter $\tau$ decreases. 
\vbox{}

The following result complements Lemma \ref{lem:keepgoodedges} by stating that for insertion probability $q \le cs$, all ``really bad'' edges in $\Ghat$ will have small Jaccard index, and thus will be removed by our $\tau$-filtering process. 

In particular, we define an edge $(u, v) \in E(\Ghat)\setminus E(\tG)$ in the observed graph $\Ghat$ to be \emph{\realbad{}} if $N_{\tG}(u) \cap N_{\tG}(v) = \emptyset$. Note that $(u,v)\notin E(\tG)$ is equivalent to $d_{X}(u, v) > r$.

\begin{lemma}\label{lem:removebadedges}
Let $V$ be a set of $n$ points sampled i.i.d. from an $L$-doubling probability measure $\mu$ supported on a compact metric space $\X=(X, d_X)$. 
If {\sf Assumption-R} holds and $q \leq c\mys$, then for $\forall \Jthreshold \geq (c+2)q + 2(c+2)\sqrt{\frac{\ln{n}}{s(n-2)}}$, we have with probability at least $1 - n^{-2}$, $\Jindex_{u,v}(\Ghat) < \Jthreshold$ for \emph{all} pairs of nodes $u, v \in V$ such that $(u,v)$ is \realbad{}.  

For example, if $c = 1$ and $\mys \cdot n = \omega(\ln{n})$, then the condition on $\Jthreshold$ is that $\Jthreshold > 3q + o(1)$. 
\end{lemma}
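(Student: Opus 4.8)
The plan is to condition on the point set $V$ (equivalently, on the realization of $\tG$), so that the true degrees $a_u := |N_\tG(u)|$, the vertex positions, and in particular the \realbad{} status of every pair are all fixed; the only remaining randomness is then the independent edge-insertions. For a \realbad{} pair $(u,v)$ we have $d_X(u,v) > r$ and $N_\tG(u)\cap N_\tG(v)=\emptyset$, so $|N_\tG(u)\cup N_\tG(v)| = a_u + a_v$. Since the denominator of the Jaccard index only grows under insertions, $|N_\Ghat(u)\cup N_\Ghat(v)| \ge a_u + a_v$. The key geometric fact (Figure \ref{fig:illustration}(b)) is that, because $d_X(u,v) > r$, the balls $\aball(u,r/2)$ and $\aball(v,r/2)$ are disjoint and each has measure at least $\mys$ by {\sf Assumption-R}; together with Lemma \ref{lem:degreebound} this makes $a_u + a_v = \Omega(\mys n)$ with high probability, so the denominator is large.

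Next I would control the numerator $M := |N_\Ghat(u)\cap N_\Ghat(v)|$. Since $N_\tG(u)\cap N_\tG(v)=\emptyset$, \emph{every} common neighbor of $u$ and $v$ in $\Ghat$ must be created by at least one inserted edge, which is what forces $M$ to be small. Concretely, $M = \sum_{w\in V\setminus\{u,v\}} Z_w$ is a sum of independent indicators: $Z_w$ is Bernoulli$(q)$ when $w$ lies in exactly one of $N_\tG(u),N_\tG(v)$ (there are $a_u+a_v$ such $w$, each needing a single insertion) and Bernoulli$(q^2)$ when $w$ lies in neither (needing two insertions). Hence $\myexp_M := \mathbb{E}[M] = (a_u+a_v)q + (n-2-a_u-a_v)q^2$. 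Dividing by the denominator and using $q\le c\mys$ together with the degree bound on $a_u+a_v$ shows the mean of $\Jindex_{u,v}(\Ghat)$ is of order $q$; pinning down its leading constant as $(c+2)q$ is part of the careful accounting discussed below.

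The concentration step is where care is needed. Because each $Z_w$ has small mean, $M$ has variance comparable to $\myexp_M$ rather than to $n$, so I would apply a multiplicative Chernoff (Bernstein-type) tail bound rather than additive Hoeffding; the latter, summed over the $\Theta(n)$ terms, would give a hopelessly weak $\sqrt{n}$-scale deviation. The multiplicative bound gives that $M$ exceeds $\myexp_M$ by more than a relative factor of order $\sqrt{\ln n/\myexp_M}$ with probability at most $n^{-4}$; translating this through the denominator $\Omega(\mys n)$ produces exactly the deviation term $2(c+2)\sqrt{\ln n/(\mys(n-2))}$. A union bound over the fewer than $n^2$ candidate pairs, intersected with the high-probability degree event of Lemma \ref{lem:degreebound}, then yields the stated failure probability $n^{-2}$.

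I expect the main obstacle to be the interplay between the $q^2$ term in $\myexp_M$ and the size of the denominator when $c$ is large. The naive bound $|N_\Ghat(u)\cup N_\Ghat(v)| \ge a_u+a_v$ alone makes the $q^2$ contribution scale like $cq$, which is only comfortably below $(c+2)q$ for moderate $c$; to cover large $c$ one must observe that the very insertions that could create the few ``two-insertion'' common neighbors also inflate $N_\Ghat(u)\cup N_\Ghat(v)$, so the denominator grows fast enough to keep $\Jindex_{u,v}(\Ghat)$ of order $q$. Handling this correlation between numerator and denominator — both functions of the same insertion coin flips — while keeping the Chernoff bounds clean is the delicate part; conditioning on $V$ first and treating the insertions as the sole source of randomness is what makes it tractable.
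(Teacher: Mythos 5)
Your skeleton matches the paper's in spirit---classify the potential common neighbors of a \realbad{} pair into ``one-insertion'' points (those in $N_\tG(u)\cup N_\tG(v)$, each Bernoulli$(q)$) and ``two-insertion'' points (Bernoulli$(q^2)$), apply multiplicative Chernoff bounds, and union bound over pairs---but your decision to condition on all of $V$ and to lower-bound the denominator by the realized degrees creates a genuine gap in exactly the regime you flag: large $c$. With Lemma \ref{lem:degreebound} you only get $a_u+a_v \ge \frac{2\mys(n-1)}{3}$, so the two-insertion contribution is bounded by $nq^2 \le c\mys n q \le \frac{3c}{2}(1+o(1))(a_u+a_v)q$, and your provable bound on the Jaccard index is of order $\bigl(1+\frac{3c}{2}\bigr)q$ plus a deviation term. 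This exceeds the lemma's threshold $(c+2)q$ as soon as $c>2$, and $c$ is explicitly allowed to grow with $n$ (the paper stresses this, and Theorem \ref{thm:maincombined} uses general $c$). Moreover, your proposed rescue---that the insertions creating two-insertion common neighbors also inflate $|N_\Ghat(u)\cup N_\Ghat(v)|$---provably cannot close this gap: writing $M_1$, $M_2$ for the two kinds of common neighbors and $D=a_u+a_v$, the desired event $\frac{M_1+M_2}{D+M_2}<\tau$ is equivalent to $M_1+(1-\tau)M_2<\tau D$, so in the interesting regime $\tau\ll 1$ the inflation discounts $M_2$ by only the negligible factor $(1-\tau)$.

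The paper avoids this by not freezing the point locations. It conditions only on $u$, $v$ and the \realbad{} event $F$, under which the remaining $n-2$ points are i.i.d.\ from $\mu$ renormalized outside $\aball(u,r)\cap\aball(v,r)$; then $\mA=|N_\Ghat(u)\cap N_\Ghat(v)|$ is $Bin\bigl(n-2,\,\frac{p_{\mB_*}-p'}{1-p'}q+\frac{1-p_{\mB_*}}{1-p'}q^2\bigr)$, and the denominator proxy $\mB$ has conditional mean at least $(n-2)\frac{p_{\mB_*}-p'}{1-p'}$, where $p_{\mB_*}-p'$ is the measure of the symmetric difference, which is at least $2\mys$ (two disjoint $r/2$-balls; Figure \ref{fig:illustration}(b)). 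The $q^2$ term is then absorbed via $q\le c\mys\le c\,\frac{p_{\mB_*}-p'}{2}$, giving $\myexp_\mA\le(1+\frac{c}{2})(n-2)\frac{p_{\mB_*}-p'}{1-p'}q$ for \emph{every} $c$; the common factor $(n-2)\frac{p_{\mB_*}-p'}{1-p'}$ cancels exactly in the ratio, with no factor-$\frac13$ concentration loss, and the remaining factor $2$ in $(c+2)=2(1+\frac{c}{2})$ comes only from the denominator's Chernoff slack. To repair your route you would need a two-sided concentration of the number of points in the symmetric difference around $(n-2)\frac{p_{\mB_*}-p'}{1-p'}$, not Lemma \ref{lem:degreebound}'s one-sided factor-$\frac13$ bound. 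A further minor point: intersecting with the event of Lemma \ref{lem:degreebound}, which holds only with probability $1-n^{-5/3}$, caps your final success probability strictly below the claimed $1-n^{-2}$.
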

\begin{proof}
Consider a fixed pair of nodes $u,v \in V$, and let $F = F(u,v)$ be the event that $N_{\tG}(u) \cap N_{\tG}(v) = \emptyset$ and $d_X(u,v) > r$.
Let $\mA = | N_\Ghat(u) \cap N_\Ghat(v) |$, 
\begin{align*}
\mA_{I} &= \big| \{x \in N_\tG(u) \cup N_\tG(v)\, :\, x~\text{is connected to both} ~u~\text{and}~ v~ \text{in}~\Ghat \}\big|,  ~\text{and}\\
 \mA_{o}&= \big|\{ x\notin N_\tG(u) \cup N_\tG(v)\, :\, x~ \text{is connected to both} ~u~ \text{and}~ v ~\text{in}~ \Ghat \}\big|. 
\end{align*}
Then we have $\mA = \mA_{I} + \mA_{o}$. 
Set $\mB_*=| N_\tG(u) \cup N_\tG(v)|$, so we have $|N_\Ghat(u) \cup N_\Ghat(v)| \geq \mB_{*} + \mA_{o} =:\mB$. 
It is easy to see that 
$$\Jindex_{u,v}(\Ghat) = \frac{\mA}{|N_\Ghat(u) \cup N_\Ghat(v)|}\leq \frac{\mA}{\mB_*+\mA_{o}} = \frac{\mA}{\mB}. $$ 
We aim to show that with high probability $\frac{\mA}{\mB} I_F< \tau$, which would then imply that $\Jindex_{u,v}(\Ghat) I_F< \tau$. 

Similar to the proof of Lemma \ref{lem:keepgoodedges}, we wish to understand the conditional distribution of random variables $\mA$ and $\mB$ given $u$, $v$ and $F$. 
A slight complication here is that it is possible that $B(u,r)\cap B(v,r) \neq \emptyset$. However, if $(u,v)$ is \realbad{}, we have $N_{\tG}(u) \cap N_{\tG}(v) = \emptyset$, meaning that there is no sample point from $V$ falls inside $B(u,r)\cap B(v,r)$ even if the region $B(u,r)\cap B(v,r) \neq \emptyset$. 
We claim that, conditional on the locations of $u$ and $v$ and the event $F$, the distribution of $\mA$ is $Bin(n-2, p_\mA)$ with $p_\mA = \frac{p_{\mB_*}-p'}{1-p'}q+\frac{1-p_{\mB_*}}{1-p'}q^2$ where $p_{\mB_*} = \mu(B(u,r)\cup B(v,r))$ and $p' =  \mu(B(u,r)\cap B(v,r))$. (Note, $p'$ is $0$ if $B(u,r)\cap B(v,r) = \emptyset$.) Indeed, we can imagine that, conditional on $u$, $v$ and $F$, points contributing to $\mA$ can be generated as follows: 

Let $U = [B(u,r) \cup B(v,r)] \setminus [B(u,r) \cap B(v,r)]$. Define the measure $\mu' : X \setminus [B(u,r)\cap B(v,r)] \to \reals$ to be the re-normalization of the probability distribution $\mu|_{X \setminus [B(u,r)\cap B(v,r)]}$ restricted to the domain outside $B(u,r) \cap B(v,r)$; that is, for any region $R \subset X\setminus [B(u,r)\cap B(v,r)]$, $\mu'(R) = \mu(R) / (1-p')$. 

We now draw a point $x_{i}$ randomly from the measure $\mu'$. 
The reason to exclude $B(u,r) \cap B(v,r)$ is because by our assumption, $N_{\tG}(u) \cap N_{\tG}(v) = \emptyset$; meaning $\mA$ is conditioned on $V \cap [B(u,r) \cap B(v,r)] = \emptyset$. 
We next flip two coins: For the first coin, the probability for head equals to $q$; while for the second one, the probability for head is $q^2$. 
We set the indicator variable $y_{i}=1$ if 
\begin{description}
\item (i) $x_{i}$ falls in $U$ and the first coin flip returns head, corresponding to the case where $x_i$ contributes to $\mA_I$, or
\item (ii) $x_{i}$ does not fall in $U$ but the second coin flip returns head, corresponding to the case where $x_i$ contributes to $\mA_o$. 
\end{description}
Conditional on $u$, $v$ and $F$, the resulting $n-2$ indicator random variables $y_{1}, \cdots, y_{n-2}$ are i.i.d.\ with $\myprob[y_{i}= 1] = \frac{p_{\mB_*}-p'}{1-p'}q+\frac{1-p_{\mB_*}}{1-p'}q^2 = p_\mA$. Therefore, given $u$, $v$ and $F$, the distribution of $\mA = \sum y_{i}$ is $Bin(n-2,p_\mA)$.

By a similar argument, we claim that the conditional distribution of $\mB$ is $Bin(n-2, p_\mB)$ with $p_\mB =  \frac{p_{\mB_*}-p'}{1-p'} + \frac{1-p_{\mB_*}}{1-p'}q^2.$

If $d_X(u,v) > r$, the region $[B(u, r) \cup B(v, r)] \setminus [B(u,r) \cap B(v,r)]$ contains at least two metric balls of radius $r/2$; see Figure \ref{fig:illustration} (b). Therefore, $p_{\mB_*} - p' \geq 2\mu (B(\frac{r}{2})) \geq 2s$. The conditional expectation of $\mA$ given $u$, $v$ and $F$, denoted by $\myexp_\mA (= (n-2)p_\mA)$, satisfies: 
\begin{align}
(n-2) \frac{p_{\mB_*} - p'}{1-p'} q \leq \myexp_{\mA} = (n-2) [\frac{p_{\mB_*} - p'}{1-p'} q + \frac{1}{1-p'} q^{2}] \le (c/2 + 1) (n-2) \frac{p_{\mB_*}-p'}{1-p'}  q   \label{eqn:mA-realbad}
\end{align}
where the last inequality uses that $q \leq cs \leq c \cdot \frac{p_{\mB_*} - p'}{2}$. 
The conditional expectation of $\mB$ given $u$, $v$ and $F$, denoted by $\myexp_\mB$, satisfies:
\begin{align}
\myexp_\mB = (n-2)p_\mB \geq (n-2) \frac{p_{\mB_*} - p'}{1-p'}. \label{eqn:mB-realbad}
\end{align}

Let us for now assume that $\frac{c_1 \myexp_{\mA}}{c_2 \myexp_{\mB}} I_F\leq \tau$ a.s. for $c_1 = 1 + \epsilon$ and some constant $c_2 = 1 - \sigma$ with $\epsilon = \frac{2}{q}\sqrt{\frac{\ln{n}}{s(n-2)}}$ and some $0 < \sigma < 1$ to be set later.

If $q \leq 2\sqrt{\frac{\ln{n}}{s(n-2)}}$, then we have $\epsilon \geq 1$.
In this case, combining Chernoff bounds with \eqref{eqn:mA-realbad} and the fact that $p_{\mB_*} - p' \geq 2\mu (B(\frac{r}{2})) \geq 2s$ obtained earlier, we have: 
\begin{align}\label{eqn:AandB21}
\myprob[\mA \ge (1+\epsilon) \myexp_{\mA} \mid u,v,F] \leq e^{-\frac{\epsilon}{3}\myexp_{\mA}} = e^{-\frac{2}{3q}\sqrt{\frac{\ln{n}}{s(n-2)}}\myexp_{\mA}} \leq e^{-\frac{2}{3q}\sqrt{\frac{\ln{n}}{s(n-2)}}(n-2)\frac{p_{\mB_*}-p'}{1-p'} q} \nonumber\\
\leq e^{-\frac{4}{3}\sqrt{(n-2)(\ln{n})s}} \leq e^{-\frac{4}{3}\sqrt{(n-2)(\ln{n})\frac{12\ln{n}}{n-2}}} = n^{-\frac{8\sqrt{3}}{3}}. 
\end{align}

Otherwise, we have $q > 2\sqrt{\frac{\ln{n}}{s(n-2)}}$, then $0 < \epsilon < 1$. In this case, by Chernoff bounds
\begin{align}\label{eqn:AandBp}
\myprob[\mA \ge (1+\epsilon) \myexp_{\mA} \mid u,v,F] \leq e^{-\frac{1}{2}\epsilon^{2}\myexp_{\mA}} &\leq e^{-2\frac{\ln{n}}{\mys (n-2)}\frac{1}{q^2}(n-2)\frac{p_{\mB_*}-p'}{1-p'} q} \nonumber\\
&= e^{-2\frac{\ln{n}(p_{\mB_*}-p')}{\mys q}} \leq e^{-2\ln{n}\cdot\frac{2\mys}{\mys q}} \leq n^{-4}. 
\end{align}

On the other hand, by Chernoff bounds, we have $\myprob[\mB \le c_2 \myexp_{\mB} \mid u,v,F] \le e^{-\frac{\sigma^2}{2}\myexp_{\mB}}.$ Note that $\myexp_{\mB} \geq (n-2)\cdot \frac{p_{\mB_*}-p'}{1-p'} \geq (n-2) \cdot 2s \geq 24 \ln{n} $. We now set $\sigma = 1/2$ so $c_2 = 1 - \sigma = 1/2$. By taking expectation with respect to $u$ and $v$, we have
\begin{align}\label{eqn:ABbound2}
\myprob[ \mA \ge c_1 \myexp_{\mA} \mid F] + \myprob[\mB \le c_2 \myexp_{\mB} \mid F] \leq 2n^{-4}.
\end{align}

Since $\tau>0$, we have that
\begin{equation}\label{eqn:ratioA2}
\begin{aligned}
\myprob[\frac{\mA}{\mB} I_F\ge \Jthreshold] 
&\le \myprob[\frac{\mA}{\mB} \ge \Jthreshold \mid (\mA < c_1 \myexp_{\mA}) \wedge (\mB > c_2 \myexp_{\mB})\wedge F] \\
& \hspace{15pt} + 
\myprob[\{(\mA \ge  c_1 \myexp_{\mA}) \vee (\mB \le c_2 \myexp_{\mB})\} \wedge F].
\end{aligned}
\end{equation}
Under our assumption that $\frac{c_1 \myexp_{\mA}}{c_2 \myexp_{\mB}} I_F \le \Jthreshold$ a.s., 
if $\mA < c_1 \myexp_{\mA}$, $\mB > c_2 \myexp_{\mB}$ and $d_X(u,v) > r$, then $\frac{\mA}{\mB}<\frac{c_1 \myexp_{\mA}}{c_2 \myexp_B} \le \Jthreshold$. Therefore, the first term on the right side of \eqref{eqn:ratioA2} 
is $\myprob[\frac{\mA}{\mB} \ge \Jthreshold \mid (\mA < c_1 \myexp_{\mA}) \wedge (\mB > c_2 \myexp_{\mB})\wedge F] = 0 $. It then follows from \eqref{eqn:ABbound2} that:  
\begin{align*}
\myprob[\frac{\mA}{\mB} I_F\ge \Jthreshold] \leq \myprob[(\mA \ge  c_1 \myexp_{\mA}) \vee (\mB \le c_2 \myexp_{\mB}) \mid F] \le 2n^{-4}
\end{align*}

Since $\Jindex_{u,v} (\Ghat) \le \frac{\mA}{\mB}$, we have 
$\myprob[\Jindex_{u,v}(\Ghat) I_F\ge \Jthreshold] \leq \myprob[\frac{\mA}{\mB} I_F\ge \Jthreshold]\le 2n^{-4} . $
By union bound, the probability that $\Jindex_{u,v}(\Ghat) < \Jthreshold$ for all pairs of nodes $u,v \in V$ satisfying the required conditions is thus 
at least $1-\frac12 n^{2}(2n^{-4}) = 1 - n^{-2} $. 

Finally, for the above argument to hold, we need to verify that $\frac{c_1 \myexp_{\mA}}{c_2 \myexp_{\mB}} I_F \le \Jthreshold $ holds for a.e.\ $u$ and $v$, where $c_1 = 1+\eps$ and $c_2 = 1/2$. This holds automatically if event $F$ doesn't happen, so assume $F$ happens. Recall that $\myexp_{\mA} \le (c/2 + 1)\cdot (n-2) \cdot \frac{p_{\mB_*}-p'}{1-p'} q$ and $\myexp_{\mB} \geq (n-2) \cdot \frac{p_{\mB_*}-p'}{1-p'}$ from \eqref{eqn:mA-realbad} and \eqref{eqn:mB-realbad}. 
This implies that: 
\begin{align}\label{eqn:removebadfinalA}
\frac{2\cdot \big(1+\frac{2}{q}\sqrt{\frac{\ln{n}}{\mys (n-2)}}\big) \cdot \myexp_{\mA}}{\myexp_{\mB}} &\leq \frac{2[(c/2 + 1) (n-2) \frac{p_{\mB_*}-p'}{1-p'} q + (c+2)\sqrt{\frac{\ln{n}}{\mys (n-2)}}(n-2) \frac{p_{\mB_*}-p'}{1-p'}]}{(n-2)\frac{p_{\mB_*}-p'}{1-p'}} \nonumber\\
 &= (c+2)q + 2(c+2)\sqrt{\frac{\ln{n}}{\mys (n-2)}}.
\end{align}

We then have that as long as $\Jthreshold \geq (c+2)q + 2(c+2)\sqrt{\frac{\ln{n}}{\mys (n-2)}}$, then $\frac{(1+\frac{2}{q}\sqrt{\frac{\ln{n}}{s(n-2)}})\myexp_{\mA}}{c_2 \myexp_{\mB}} \leq \Jthreshold$ is satisfied. The lemma then follows. 
\end{proof}

The above result implies that after Jaccard filtering, although there still may be some extra edges remaining in $\anotherG_\athreshold$, each such edge $(u,v)$ is not \realbad{}. In fact, $N_\tG(u) \cap N_\tG(v) \neq \emptyset$ for each such extra remaining edge $(u,v)$, implying that $d_\tG(u,v) \le 2$. 
This, combined with Lemma \ref{lem:keepgoodedges}, essentially leads to the following result. To simplify our statement, we assume $\mys n = \omega(\ln{n})$ in the following result; a more complicated form can be obtained without this assumption (similar to the statement in Lemma \ref{lem:removebadedges}). 

\begin{theorem}[Random Insertion]\label{thm:insertiononly}
Let $V$ be a set of $n$ points sampled i.i.d. from an $L$-doubling measure $\mu: X \to \reals^+$ supported on a compact metric space $(X, d_X)$. 
Let $\tG$ be the resulting $r$-neighborhood graph for $V$; and $\Ghat$ a graph obtained by inserting each edge not in $\tG$ independently with probability $q$. 
Let $\anotherGt$ be the graph after $\tau$-Jaccard filtering of $\Ghat$.  
Then, if ~{\sf Assumption-R} holds, $q \leq c\mys$ and $\mys n = \omega(\ln{n})$, then for $\forall \frac{1}{(6 + \frac{1}{\ln{n}} 12c)L^2} \geq \Jthreshold \geq (c+2)q + o(1)$, with high probability the shortest path distance metric $d_{\anotherG_\athreshold}$ satisfies: $\frac{1}{2} d_\tG \le d_\anotherGt \le d_\tG$; that is, $d_\anotherGt$ is a $2$-approximation for $d_\tG$ with high probability. 
\end{theorem}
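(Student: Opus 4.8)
The plan is to obtain the theorem directly by combining Lemma~\ref{lem:keepgoodedges} and Lemma~\ref{lem:removebadedges}, which together pin down the edge set of $\anotherGt$ tightly enough to sandwich its shortest-path metric. First I would fix a threshold $\tau$ in the stated range and record that, for $n$ large and under $\mys n = \omega(\ln n)$, the quantity $2(c+2)\sqrt{\ln n / (s(n-2))}$ is $o(1)$, so the lower bound $(c+2)q + o(1)$ appearing in the theorem subsumes the precise lower threshold required by Lemma~\ref{lem:removebadedges}. Hence for any such $\tau$ both lemmas apply simultaneously. Taking a union bound over their failure probabilities ($n^{-2/3}$ and $n^{-2}$), I would condition on the event $\mathcal{E}$, of probability at least $1 - 2n^{-2/3}$, on which every good edge (i.e.\ every pair with $d_X(u,v)\le r$) survives the filtering and every \realbad{} edge is removed.

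On $\mathcal{E}$, two containments describe $\anotherGt$. Since every edge of $\tG$ has $d_X \le r$, Lemma~\ref{lem:keepgoodedges} gives $E(\tG) \subseteq E(\anotherGt)$; because adding edges cannot lengthen shortest paths, this immediately yields the easy direction $d_\anotherGt \le d_\tG$. For the reverse direction I would argue that the only edges $\anotherGt$ possesses beyond $E(\tG)$ are ``short'' in $\tG$. Any such extra edge $(u,v)$ lies in $E(\Ghat)\setminus E(\tG)$ and survived filtering, so by Lemma~\ref{lem:removebadedges} it cannot be \realbad{}; thus $N_\tG(u)\cap N_\tG(v)\neq\emptyset$, and routing through a shared neighbor gives $d_\tG(u,v)\le 2$.

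The remaining step, and the one I expect to carry the real content, is the path-replacement argument for the lower bound $d_\tG \le 2\,d_\anotherGt$. Given a shortest path in $\anotherGt$ realizing $d_\anotherGt(u,v)$, I would replace each of its edges by its image in $\tG$: a genuine $\tG$-edge maps to a single edge, while each extra edge maps to a $\tG$-path of length at most $2$ by the observation above. Concatenating these produces a $u$--$v$ walk in $\tG$ of length at most $2\,d_\anotherGt(u,v)$, so $d_\tG(u,v)\le 2\,d_\anotherGt(u,v)$. Combining the two directions gives $\tfrac12 d_\tG \le d_\anotherGt \le d_\tG$ on $\mathcal{E}$, which is the claimed $2$-approximation with high probability. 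The only place needing care is ensuring the admissible $\tau$-interval is nonempty under the hypotheses $q\le c\mys$ and $\mys n=\omega(\ln n)$ (so that $(c+2)q+o(1)$ stays below $\frac{1}{(6+\frac{1}{\ln n}+12c)L^2}$); this is exactly where the two lemmas' conditions on $L$, $c$, and $q$ must be jointly compatible, and it is the genuine constraint behind the result rather than the metric sandwiching itself.
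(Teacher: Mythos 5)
Your proposal is correct and takes essentially the same approach as the paper: both condition on the simultaneous occurrence of the two lemma events (all true edges of $\tG$ survive filtering, all \realbad{} edges are removed), deduce $d_\anotherGt \le d_\tG$ from $E(\tG)\subseteq E(\anotherGt)$, and obtain $d_\tG \le 2d_\anotherGt$ by replacing each surviving extra edge with a length-$2$ path through a common $\tG$-neighbor. The paper additionally spells out the case where $u,v$ are disconnected in $\anotherGt$, but under your containment $E(\tG)\subseteq E(\anotherGt)$ that case is handled trivially, so there is no gap.
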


\begin{proof}
Define $\mathcal{E}_1$ to be the event when all the edges in $\tG$ are present in $\anotherG_\athreshold$. 
By Lemma \ref{lem:keepgoodedges}, event $\mathcal{E}_1$ happens with probability at least $1 - n^{-2/3}$. Hence with at least this probability, $d_{\anotherG_\athreshold} \le d_\tG$. We now prove the lower bound for $d_\anotherGt$. 

Let $\mathcal{E}_2$ be the event where for all edges $(u, v) \in E(\anotherGt) \setminus E(\tG)$, $(u,v)$ is not \realbad{}. 
Lemma \ref{lem:removebadedges} says that event $\mathcal{E}_2$ happens with probability at least $1-n^{-2}$. 
To this end, observe that if an edge $(u,v)$ is not \realbad{}, then we have that $d_\tG(u,v) \le 2$ as $N_\tG(u) \cap N_\tG(v) \neq \emptyset$; specifically, there is a path $u \to w \to v$ connecting $u$ and $v$ through some $w\in N_\tG(u) \cap N_\tG(v)$.  

In what follows, assume both events $\mathcal{E}_1$ and $\mathcal{E}_2$ happen -- as discussed above, this assumption holds with high probability due to Lemmas \ref{lem:keepgoodedges} and  \ref{lem:removebadedges}. 

Now consider two points $u, v \in V$. First, suppose that $u, v$ are connected in $\anotherGt$.  
Let $\pi = \langle u_0 = u, u_1, \ldots, u_s = v\rangle$ be a shortest path between them in $\anotherGt$. Consider each edge $(u_i, u_{i+1})$ in the shortest path $\pi$ in $\anotherGt$. Either $(u_i, u_{i+1}) \in E(\tG)$, in which case we set $\hat \pi(u_i, u_{i+1}) = (u_i, u_{i+1})$. Otherwise if $(u_i, u_{i+1}) \notin E(\tG)$, then $(u_i,u_{i+1})$ is not \realbad{} due to event $\mathcal{E}_2$, meaning that $d_\tG(u_i, u_{i+1}) \le 2$. Hence we can find a path $\hat \pi (u_i, u_{i+1}) \subset \tG$ of length at most two to connect $u_i$ and $u_{i+1}$ in $\tG$. Putting these two together, we can construct a path $\hat \pi = \hat \pi(u_0, u_1) \circ \hat\pi(u_1, u_2) \circ \cdots \circ \hat\pi(u_{s-1}, u_s)$ connecting $u = u_0$ to $v = u_s$ in $\tG$. Clearly, this path has length at most $2s$. 
Hence, for any $u, v\in V$, we have that $d_\tG(u,v) \le 2 d_\anotherGt (u, v)$ if $(u,v)$ is connected in $\anotherGt$. 

If $u$ and $v$ are not connected in $\anotherGt$, then they are not connected in $\tG$ either; because if there is a path connecting them in $\tG$, then the same path is present in $\anotherGt$ as event $\mathcal{E}_1$ holds. 
Putting everything together, we then have that with high probability, for any $u, v\in V$, $d_\tG(u,v) \le 2 d_\anotherGt (u, v)$; that is $d_\anotherGt \geq \frac{1}{2}d_\tG $. The theorem then follows. 
\end{proof}

\section{Combined case} 
The arguments used in Sections \ref{subsec:deletion} and \ref{subsec:insertion} can be modified to prove our main result when the observed graph $\Ghat = G(r, p, q)$ is generated via the network model described in Definition \ref{def:networkmodel} that includes both edge deletion and insertion. 
Specifically, we now discuss the case where the perturbed graph $\Ghat = G(r, p, q)$  is generated via Definition \ref{def:networkmodel}. That is, it is an \myER-type perturbed version of $\tG$ where with probability $p$ an edge from $\tG$ is not present in the observed graph, while with probability $q$ an edge connecting points $u, v$ with $d_X(u,v) > r$ will be inserted into the observed graph. 
We still use $\anotherG_\athreshold$ to denote the graph after Jaccard-filtering with threshold $\tau$. 

First, given two graphs $G_1 = (V, E_1)$ and $G_2 = (V, E_2)$ spanning on the same set of vertices, we use $G_1 \cap G_2$ to denote the graph $(V, E_1 \cap E_2)$. 

\begin{lemma} \label{lem:combineddeletion} 
Let $V$ be a set of $n$ points sampled i.i.d. from an $L$-doubling probability measure $\mu$ supported on a compact metric space $\X=(X, d_X)$. Let $\tG$ be the $r$-neighborhood graph spanned by $V$, and $\Ghat$ the observed graph as defined in Definition \ref{def:networkmodel}. 
If ~{\sf Assumption-R} holds and $p < \frac{1}{2}e^{-\frac{9\ln n}{s(n-1)}}$, then with probability at least $1 - n^{-\Omega(1)}$, we have that the shortest path metric $d_{\Ghat\cap \tG}$ in the graph $\Ghat \cap \tG$ is bounded from above by $2 d_\tG$, implying $d_\Ghat \le d_{\Ghat\cap \tG} \le 2 d_\tG$. 
\end{lemma}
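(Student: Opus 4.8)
The plan is to recognize that $\Ghat \cap \tG$ is, as a random graph, identical in distribution to the deletion-only perturbation of $\tG$ analyzed in Theorem \ref{thm:deletiononly}, and then invoke that theorem essentially verbatim. The whole point is that intersecting with $\tG$ surgically removes exactly the inserted edges, collapsing the combined model back to the pure-deletion case.

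First I would observe that, by Definition \ref{def:networkmodel}, every inserted edge connects a pair $u,v$ with $d_X(u,v) > r$ and is therefore absent from $E(\tG) = \tE$. Consequently, intersecting $\Ghat$ with $\tG$ discards all inserted edges, and the surviving edge set $E(\Ghat) \cap E(\tG)$ consists exactly of those edges of $\tG$ that were not removed by the $p$-deletion step. Since each edge of $\tG$ is deleted independently with probability $p$, and the $q$-insertions are performed independently of these deletions, the graph $\Ghat \cap \tG$ has precisely the distribution of the deletion-only graph $\Ghat$ studied in Section \ref{subsec:deletion}.

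Next I would apply Theorem \ref{thm:deletiononly} directly to $\Ghat \cap \tG$. Under {\sf Assumption-R} and for $p < \frac{1}{2}e^{-\frac{9\ln n}{s(n-1)}}$, that theorem gives $d_{\Ghat \cap \tG} \le 2\,\dtG$ with probability at least $1 - n^{-\Omega(1)}$. The argument there relies only on the common neighbors of an edge $(u,v)\in E(\tG)$ that lie in $\tG$ (guaranteed in number $\Omega(\ln n)$ by Corollary \ref{lem:commonneighbors}) and on at least one of the two surviving after the $p$-deletion; these are exactly the common neighbors available in $\Ghat \cap \tG$, so nothing in that proof changes. For the remaining inequality $d_\Ghat \le d_{\Ghat \cap \tG}$, I would simply use monotonicity of shortest-path distances under edge addition: $\Ghat \cap \tG$ is a subgraph of $\Ghat$ on the same vertex set, so every path in $\Ghat \cap \tG$ is also a path in $\Ghat$, giving $d_\Ghat(u,v) \le d_{\Ghat \cap \tG}(u,v)$ for all $u,v$. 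Chaining the two bounds yields $d_\Ghat \le d_{\Ghat \cap \tG} \le 2\,\dtG$, as claimed.

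I do not expect a genuine obstacle here: the entire content is the distributional identification of $\Ghat \cap \tG$ with the deletion-only graph, after which the earlier theorem applies unchanged. The only point requiring care is to confirm that restricting to the $\tG$-edges leaves the $p$-deletion randomness intact and fully decoupled from the $q$-insertion randomness, which is immediate from the independence of the two perturbation steps in Definition \ref{def:networkmodel}.
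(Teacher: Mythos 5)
Your proposal is correct and takes essentially the same route as the paper: both reduce the claim to the deletion-only case (Theorem \ref{thm:deletiononly}) for the bound $d_{\Ghat\cap\tG} \le 2 d_\tG$, and use subgraph monotonicity for $d_\Ghat \le d_{\Ghat\cap\tG}$. If anything, your explicit distributional identification of $\Ghat \cap \tG$ with the deletion-only graph (valid because every inserted edge joins a pair with $d_X(u,v) > r$ and hence lies outside $E(\tG)$) is stated more carefully than the paper's own wording, which loosely suggests $\Ghat \cap \tG$ may still contain inserted edges.
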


\begin{proof}
Since $\Ghat \cap \tG$ is a subgraph of $\Ghat$, we thus have $d_\Ghat \le d_{\Ghat\cap\tG}$.  
Now by an almost identical argument as the one used in the proof of Theorem \ref{thm:deletiononly}, we can prove that $d_{\Ghat \cap \tG} \le 2 d_\tG$ with high probability. Indeed, compared to the $\Ghat$ used in Theorem \ref{thm:deletiononly}, our $\Ghat \cap \tG$ also contain some randomly inserted edges, which can only further decrease the shortest path distances. 
The claim then follows. 
\end{proof}

\begin{lemma}\label{lem:combinedkeepgoodedges}

Let $V$ be a set of $n$ points sampled i.i.d. from an $L$-doubling probability measure $\mu$ supported on a compact metric space $\X=(X, d_X)$. If ~{\sf Assumption-R} holds, $p \leq \frac{1}{4}$ and $q \le \min \{\frac{1}{8},c\mys\}$, then for $\forall \Jthreshold \leq \frac{(1-p)^2}{(10+\frac{5}{3\ln{n}} +20c)L^{2}}$, 
we have with probability at least $1 - n^{-0.16}$, that $\Jindex_{u,v}(\Ghat) \ge \Jthreshold$ for \emph{all} pairs of nodes $u, v \in V$ with $d_X(u,v) \leq r$. 
\end{lemma}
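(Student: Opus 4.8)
The plan is to re-run the proof of Lemma \ref{lem:keepgoodedges} with one essential modification: in the combined model the \emph{numerator} of the Jaccard index now shrinks under $p$-deletion, while the \emph{denominator} analysis is essentially unchanged. Concretely, a point that is a common neighbor of $u$ and $v$ in $\tG$ survives as a common neighbor in $\Ghat$ only if both of its edges to $u$ and to $v$ are retained, which happens with probability $(1-p)^2$; this is exactly the origin of the $(1-p)^2$ factor in the admissible bound on $\Jthreshold$. As in Lemma \ref{lem:keepgoodedges}, I fix a pair $u,v$, condition on their locations and on the event $F=\{d_X(u,v)\le r\}$, and write $\Jindex_{u,v}(\Ghat)=\mA/\mB$ with $\mA=|N_\Ghat(u)\cap N_\Ghat(v)|$ and $\mB=|N_\Ghat(u)\cup N_\Ghat(v)|$; the goal is to lower bound $\mA$ and upper bound $\mB$ so their ratio clears $\Jthreshold$ with overwhelming probability.

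For the numerator I would discard all help from inserted edges and retain only the surviving common $\tG$-neighbors: let $\mA_{\mathrm{surv}}$ count the points of $V\setminus\{u,v\}$ lying in $\aball(u,r)\cap\aball(v,r)$ whose two edges to $u$ and $v$ both survive deletion, so that $\mA\ge \mA_{\mathrm{surv}}$ and, conditional on $u,v$, $\mA_{\mathrm{surv}}\sim Bin(n-2,\ p_{\mA_*}(1-p)^2)$. The midpoint-ball argument of Lemma \ref{lem:keepgoodedges} (Figure \ref{fig:illustration}(a)) with {\sf Assumption-R} gives $(n-2)p_{\mA_*}\ge 12\ln n$, hence $\myexp_{\mA_{\mathrm{surv}}}:=(n-2)p_{\mA_*}(1-p)^2\ge 12(1-p)^2\ln n\ge \tfrac{27}{4}\ln n$ using $p\le\tfrac14$. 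For the denominator, the insertion-only bound on $\myexp_\mB$ carries over verbatim, since deletion can only remove a point from $N_\Ghat(u)\cup N_\Ghat(v)$: bounding each point's contribution probability exactly as in \eqref{eqn:expB} yields $\myexp_\mB\le (n-2)p_{\mB_*}+2(n-2)q+2$, and with $q\le c\,\mys\le c\,p_{\mB_*}$ together with $2\le (n-2)p_{\mB_*}/(6\ln n)$ from \eqref{eqn:mB} this tightens to $\myexp_\mB\le (n-2)p_{\mB_*}\bigl(1+2c+\tfrac{1}{6\ln n}\bigr)$. The mild conditions $p\le\tfrac14$ and $q\le\min\{\tfrac18,c\mys\}$ are what keep these per-point probabilities and concentration estimates in the clean regime.

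With these bounds the argument matches Lemma \ref{lem:keepgoodedges} except that I must choose more aggressive Chernoff constants to offset the smaller numerator mean. Writing $c_1=1-\sigma_1$ for the lower tail of $\mA_{\mathrm{surv}}$ and $c_2=1+\sigma_2$ for the upper tail of $\mB$, I would take $\sigma_1=\tfrac45$ and $\sigma_2=1$, so $c_1/c_2=\tfrac1{10}$. Combining the denominator bound with the doubling inequality $p_{\mB_*}\le L^2p_{\mA_*}$ from \eqref{eqn:finalB}, the crucial ratio becomes
\[
\frac{c_1\,\myexp_{\mA_{\mathrm{surv}}}}{c_2\,\myexp_\mB}\ \ge\ \frac{1}{10}\cdot\frac{(1-p)^2}{L^2}\cdot\frac{1}{1+2c+\frac{1}{6\ln n}}\ =\ \frac{(1-p)^2}{(10+\frac{5}{3\ln n}+20c)L^{2}},
\]
using $10\bigl(1+2c+\tfrac1{6\ln n}\bigr)=10+20c+\tfrac{5}{3\ln n}$. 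Thus any $\Jthreshold$ below this value satisfies $\frac{c_1\myexp_{\mA_{\mathrm{surv}}}}{c_2\myexp_\mB}\ge\Jthreshold$, so that on $\{\mA_{\mathrm{surv}}\ge c_1\myexp_{\mA_{\mathrm{surv}}}\}\cap\{\mB\le c_2\myexp_\mB\}$ we get $\Jindex_{u,v}(\Ghat)\ge\Jthreshold$. The choice $\sigma_1=\tfrac45$ is precisely calibrated so that the lower-tail bound $e^{-\frac{\sigma_1^2}{2}\myexp_{\mA_{\mathrm{surv}}}}\le e^{-\frac{8}{25}\cdot\frac{27}{4}\ln n}=n^{-2.16}$ (the upper-tail bound for $\mB$ being $n^{-4}$) survives a union bound over the at most $n^2/2$ good pairs, leaving total failure probability $\le n^{-0.16}$, exactly as stated.

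The main obstacle is the bookkeeping linking the numerator's $(1-p)^2$ shrinkage to the union-bound budget: the reduced mean $\myexp_{\mA_{\mathrm{surv}}}$ weakens the Chernoff lower tail, so the $c_1/c_2=\tfrac16$ constants of Lemma \ref{lem:keepgoodedges} can no longer be reused. One must enlarge $\sigma_1$ enough to preserve a polynomial tail with exponent exceeding $2$ (so the union bound over all $\Theta(n^2)$ pairs remains meaningful), yet keep $c_1/c_2$ large enough that the resulting threshold bound $\frac{(1-p)^2}{(10+\frac{5}{3\ln n}+20c)L^2}$ stays non-vacuous; balancing these two competing demands is the delicate point, and everything else is a routine re-run of the insertion-only computation.
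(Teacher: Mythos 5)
Your proposal follows the paper's own proof essentially step for step: the paper likewise replaces the numerator by the surviving common $\tG$-neighbors (its $\mA_*$, distributed as $Bin(n-2,\,p_{\mA_*}(1-p)^2)$ conditional on $u,v$, which is exactly your $\mA_{\mathrm{surv}}$), uses the same upper bound $\myexp_\mB \le (n-2)p_{\mB_*}\bigl(1+2c+\frac{1}{6\ln n}\bigr)$ on the denominator mean, the same Chernoff constants $\sigma_1=\frac45$, $\sigma_2=1$ giving $c_1/c_2=\frac{1}{10}$, the same doubling step $p_{\mB_*}\le L^2 p_{\mA_*}$, and the same union bound over at most $n^2/2$ pairs.

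One slip worth fixing: your parenthetical claim that the upper tail of $\mB$ is $n^{-4}$ is the insertion-only figure, and it does not carry over. What carries over verbatim is only the \emph{upper} bound on $\myexp_\mB$ (deletion can only shrink it); but the Chernoff bound $\myprob[\mB > 2\myexp_\mB]\le e^{-\myexp_\mB/3}$ needs a \emph{lower} bound on $\myexp_\mB$, and that lower bound is also shrunk by deletion. The paper's lower bound is $\myexp_\mB \ge (n-2)(1-p-q)(1-q)p_{\mB_*} \ge 12\cdot\frac{35}{64}\ln n$ under $p\le\frac14$, $q\le\frac18$, which yields a tail of roughly $n^{-2.19}$, not $n^{-4}$. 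Since $n^{-2.19}\le n^{-2.16}$, your per-pair failure budget of $2n^{-2.16}$ and the final probability $1-n^{-0.16}$ are unaffected, so the argument as a whole stands once this bound is corrected.
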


\begin{proof}
Consider a fixed pair of nodes $u, v \in V$, and let $F = F(u,v)$ be the event that $d_X(u,v) \leq r$. Set $\mA_* = |(N_{\tG}(u)\cap N_{\tG}(v)) \cap (N_{\Ghat}(u) \cap N_{\Ghat}(v))|$, that is, the number of common neighbors of $u$ and $v$ in both $\tG$ and $\Ghat$; 
Let $\mB = |N_{\Ghat}(u) \cup N_{\Ghat}(v)|$ denote the total number of neighbors of $u$ and $v$ in the perturb graph $\Ghat$. It is easy to see that $\Jindex_{u,v} (\Ghat) = \frac{|N_\Ghat(u) \cap N_\Ghat(v)|}{|N_\Ghat(u) \cup N_\Ghat(v)|} \ge \frac{\mA_*}{\mB}$. 
In what follows, we will aim to prove that $\frac{\mA_*}{\mB} \ge \Jthreshold I_F$ with high probability. 

We claim that, conditional on the locations of $u$ and $v$, the distribution of $\mA_*$ is $Bin(n-2, \hat p_{\mA_{*}})$ with $\hat p_{\mA_{*}} = p_{\mA_*} \cdot (1-p)^2$, where $p_{\mA_*} = \mu(\aball(u, r) \cap \aball(v,r))$. Notice that, conditional on $u$ and $v$, the distribution of the number of common neighbors of $u$ and $v$ in $\tG$ is $Bin(n-2,p_{\mA_*})$ and the probability for each node to be still a common neighbor for both $u$ and $v$ in $\Ghat$ is $(1-p)^2$. Thus, by a similar (but more complicated) argument as in Lemma \ref{lem:keepgoodedges}, the conditional expectation of $\mA_*$ is 
$$\myexp_{\mA_*} = (n-2) \hat p_{\mA_*}= (n-2) \cdot p_{\mA_*} \cdot (1-p)^2. $$ and we also claim that, conditional on $u$ and $v$, the distribution of $\mB-2$ is $Bin(n-2, p_{\mB})$ with 
$p_\mB = p_{\mA_*} (1-p^2) + (p_{\mB_*} - p_{\mA_*})(1 - p + pq) + (1-p_{\mB_*})(1 - (1-q)^2)$, where $p_{\mB_*} = \mu(\aball(u,r) \cup \aball(v,r))$.

It is easy to see that $p_{\mB_*} \ge p_{\mA_*}$. And by the assumption on $p$ and $q$, we know $1-p-q > 0$. Therefore, the conditional expectation of $\mB$ given $u$ and $v$, denoted by $\myexp_\mB$, satisfies: 
\begin{align}\label{eqn:combinedmyexpB}
(n-2) \cdot [(1-p-q)(1-q)]p_{\mB_*} \le \myexp_{\mB} &= (n-2) p_\mB + 2 \nonumber\\
&\le (n-2) \cdot [(1-q)^2 - p^2]p_{\mB_*} + (n-2)(2q-q^2) + 2 \nonumber\\
&< (n-2) \cdot p_{\mB_*} + (n-2)2q + 2.
\end{align}

Let us for now assume that $\frac{c_1 \myexp_{\mA_*}}{c_2 \myexp_{\mB}} \ge \Jthreshold I_F$ for constants $c_1 = 1 - \sigma_1$ and $c_2 = 1+\sigma_2$ with $0 < \sigma_1 < 1$ and $\sigma_2 > 0$ to be set later. If $d_X(u,v) \leq r$, the region $B(u,r)\cap B(v,r)$ contains at least one metric ball of radius $r/2$ (recall Figure [\ref{fig:illustration}(a)]). Therefore, the conditional expectation of $\mA_*$ given $u$, $v$ and $F$, denoted by $\myexp_{\mA_*}$, satisfies: $$\myexp_{\mA_*} \ge (n-2) \cdot \mu(\aball(z, r/2))\cdot (1-p)^2 \ge 12 \frac{(n-2) \ln n}{n-2} \cdot (1-p)^2 = 12 \ln{n} \cdot (1-p)^2. $$ 
Similarly, using \eqref{eqn:combinedmyexpB}, the conditional expectation of $\mB$ given $u$ and $v$, denoted by $\myexp_{\mB_*}$, satisfies $\myexp_{\mB_*} \ge (n-2)(1-p-q)(1-q)p_{\mB_*} \ge 12 \ln n [(1-p-q)(1-q)]$. We now set $\sigma_1 = 4/5$ and $\sigma_2 = 1$. From the conditions stated in Lemma \ref{lem:combinedkeepgoodedges}, we have $p \leq 1/4$ and $q \le 1/8$. It then follows from Chernoff bounds that 
\begin{align}
\myprob[ \mA_* < c_1 \myexp_{\mA_*} \mid u,v,F] + \myprob[\mB > c_2 \myexp_{\mB} \mid u,v] \le e^{-\frac{\sigma_{1}^{2}}{2}\myexp_{\mA_*}} + e^{-\frac{\sigma_{2}}{3}\myexp_{\mB}} \le 2n^{-2.16}. \nonumber
\end{align}

Taking expectation of the above with respect to $u$ and $v$ gives
\begin{align}\label{eqn:combineAandB}
\myprob[ \mA_* < c_1 \myexp_{\mA_*} \mid F] + \myprob[\mB > c_2 \myexp_{\mB}] \le 2n^{-2.16}.
\end{align}

By a similar argument as used in the proof of Lemma \ref{lem:keepgoodedges}, we can derive: 
$$\myprob[\Jindex_{u,v}(\Ghat) < \Jthreshold I_F] \le \myprob[\frac{\mA_*}{\mB} < \Jthreshold I_F]\le 2n^{-2.16}. $$

By the union bound, the probability that $\Jindex_{u,v}(\Ghat) \ge \Jthreshold$ for all pairs of nodes $u,v \in V$ such that $d_X(u,v) \leq r$ is thus at least $1-\frac12 n^{2}(2n^{-2.16}) = 1 - n^{-0.16}$. 

What remains is to verify that $\frac{c_1 \myexp_{\mA_*}}{c_2 \myexp_{\mB}} = \frac{\myexp_{\mA_*}}{10\myexp_\mB} \ge \Jthreshold I_F$ holds for a.e.\ $u$ and $v$. This holds automatically if $d_X(u,v) > r$, so assume $d_X(u,v) \leq r$. Recall that $\myexp_\mB \le (n-2) \cdot p_{\mB_*} + (n-2)\cdot 2q + 2$. Since $q \le c \mys$, we have $(n-2) 2q \le 2(n-2)c \mys$. 
On the other hand, by {\sf Assumption-R}, $p_{\mB_*} \ge \mu(\aball(u, r)) \ge \mys$, hence $(n-2) 2q \le 2(n-2)c \cdot p_{\mB_*}$. Combining with $(n-2)p_{\mB_*} \ge s(n-2) \ge 12\ln n$, (which also implies that $2 \leq \frac{(n-2)p_{\mB_*}}{6\ln n}$), it then follows that
\begin{align}\label{eqn:combinefinalA}
\frac{\myexp_{\mA_*}}{10\myexp_\mB} \ge \frac{\myexp_{\mA_*}}{10[(n-2)(1+\frac{1}{6\ln n}) p_{\mB_*} + 2(n-2)c p_{\mB_*}]} = \frac{p_{\mA_*}}{p_{\mB_*}} \cdot \frac{(1-p)^2} {10 + \frac{5}{3\ln n} + 20c}.
\end{align}

Now let $z$ be the midpoint of the shortest path connecting $u$ and $v$ in $X$. Observe that $p_{\mA_*} \ge \mu(\aball(z, r/2))$, $p_{\mB_*} \le \mu(\aball(z, 2r))$ and since $\mu$ is $L$-doubling, it follows: 
\begin{align}\label{eqn:combinefinalB}
p_{\mB_*} \le \mu(\aball(z, 2r)) \le L \mu(\aball(z, r)) \le L^2 \mu(\aball(z,r/2)) \le L^2 p_{\mA_*}. 
\end{align}
Combing equations \eqref{eqn:combinefinalA} and \eqref{eqn:combinefinalB}, we have that as long as $\Jthreshold \leq \frac{(1-p)^2}{(10 + \frac{5}{3\ln n} + 20c)L^2}$, we have that $\frac{\myexp_{\mA_*}}{10\myexp_\mB} \geq \Jthreshold$ is satisfied. 
The lemma then follows. 
\end{proof}

Recall that we define an edge $(u, v) \in E(\Ghat)\setminus E(\tG)$ in the observed graph $\Ghat$ to be \emph{\realbad{}} if $N_{\tG}(u) \cap N_{\tG}(v) = \emptyset$.

\begin{lemma}\label{lem:combinedremovebadedges}
Let $V$ be a set of $n$ points sampled i.i.d. from an $L$-doubling probability measure $\mu$ supported on a compact metric space $\X=(X, d_X)$. 
Let $\tG$ and $\Ghat$ be the true graph and observed graph as described in Definition \ref{def:networkmodel}, respectively. 
If {\sf Assumption-R} holds and $p \leq \frac{1}{4}$ and $q \leq c\mys$, then for $\forall \Jthreshold \geq  \frac{(c+2)q}{1-p} + \frac{2(c+2)}{1-p}\sqrt{\frac{\ln{n}}{\mys (n-2)}}$, we have with probability at least $1 - n^{-1/4}$, that $\Jindex_{u,v}(\Ghat) < \Jthreshold$ for \emph{all} pairs of nodes $u,v \in V$ such that $(u,v)$ is \realbad{}.
\end{lemma}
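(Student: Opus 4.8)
The plan is to mirror the structure of Lemma \ref{lem:removebadedges} (the insertion-only case), modifying the conditional distributions of the numerator and denominator of the Jaccard index to account for the $p$-deletion. First I would fix a pair $u,v \in V$ and condition on their locations and on the event $F$ that $(u,v)$ is \realbad{} (i.e. $N_\tG(u)\cap N_\tG(v)=\emptyset$ and $d_X(u,v)>r$). Exactly as before, conditioning on $F$ forces no sample point inside $\aball(u,r)\cap\aball(v,r)$, so the remaining $n-2$ points are effectively i.i.d. from $\mu$ renormalized on $X\setminus[\aball(u,r)\cap\aball(v,r)]$; write $p'=\mu(\aball(u,r)\cap\aball(v,r))$ and $p_{\mB_*}=\mu(\aball(u,r)\cup\aball(v,r))$. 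I would set $\mA=|N_\Ghat(u)\cap N_\Ghat(v)|$ and construct a denominator lower bound $\mB\le|N_\Ghat(u)\cup N_\Ghat(v)|$, so that $\Jindex_{u,v}(\Ghat)\le \mA/\mB$ and it suffices to show $\frac{\mA}{\mB}I_F<\Jthreshold$ w.h.p.

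The key change from the insertion-only case lies in the per-point contribution probabilities. A point of the symmetric-difference region $U=[\aball(u,r)\cup\aball(v,r)]\setminus[\aball(u,r)\cap\aball(v,r)]$ is a true neighbor of exactly one of $u,v$, so it becomes a common neighbor in $\Ghat$ only if its surviving true edge is kept (probability $1-p$) and the missing edge is inserted (probability $q$), while a point outside $\aball(u,r)\cup\aball(v,r)$ becomes common only if both edges are inserted (probability $q^2$). Hence, conditional on $u,v,F$, $\mA\sim Bin(n-2,p_\mA)$ with $p_\mA=\frac{p_{\mB_*}-p'}{1-p'}(1-p)q+\frac{1-p_{\mB_*}}{1-p'}q^2$. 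For the denominator I would count, among the $n-2$ points, each point of $U$ whose surviving true edge is kept (probability $1-p$, which already places it in the observed union) together with each outside point inserted to both $u$ and $v$; this gives a valid almost-sure lower bound $\mB\sim Bin(n-2,p_\mB)$ with $p_\mB=\frac{p_{\mB_*}-p'}{1-p'}(1-p)+\frac{1-p_{\mB_*}}{1-p'}q^2$, coupled to $\mA$ through the shared outside-insertion coins exactly as in Lemma \ref{lem:removebadedges}.

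Next I would invoke the geometric observation (Figure \ref{fig:illustration}(b)) that $d_X(u,v)>r$ forces $U$ to contain two disjoint balls of radius $r/2$, whence $p_{\mB_*}-p'\ge 2\mu(\aball(\cdot,r/2))\ge 2\mys$ under {\sf Assumption-R}. Together with $p\le\frac14$ this yields $\myexp_\mB\ge (n-2)\frac{p_{\mB_*}-p'}{1-p'}(1-p)\ge 24(1-p)\ln n\ge 18\ln n$, enough mass for Chernoff bounds. The crucial algebraic point is that, in the upper bound on $\mA$, I would simply discard the favorable factor $1-p$ (using $(1-p)q\le q$), so that $\myexp_\mA$ is bounded above by the \emph{same} quantity $(c/2+1)(n-2)\frac{p_{\mB_*}-p'}{1-p'}q$ as in the insertion-only proof (via $q\le c\mys\le c\frac{p_{\mB_*}-p'}{2}$), whereas the denominator retains its $1-p$ factor. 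Applying Chernoff with $c_1=1+\eps$, $\eps=\frac{2}{q}\sqrt{\frac{\ln n}{\mys(n-2)}}$ and $c_2=\frac12$ (splitting into the cases $\eps\ge 1$ and $\eps<1$ for the upper tail of $\mA$, exactly as before) then gives $\frac{c_1\myexp_\mA}{c_2\myexp_\mB}\le \frac{(c+2)q}{1-p}+\frac{2(c+2)}{1-p}\sqrt{\frac{\ln n}{\mys(n-2)}}$, which is the stated threshold; the $\frac{1}{1-p}$ is precisely the leftover denominator factor.

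Finally, as in Lemma \ref{lem:removebadedges}, whenever both $\mA<c_1\myexp_\mA$ and $\mB>c_2\myexp_\mB$ hold we get $\mA/\mB<\Jthreshold$ deterministically, so a union bound over the two Chernoff failure events controls each pair, and a union bound over the at most $\binom{n}{2}$ \realbad{} pairs finishes the proof. The main obstacle is not any single estimate but getting the bookkeeping of the coupled conditional laws of $\mA$ and $\mB$ right under simultaneous deletion and insertion: in particular, verifying that the surviving-edge count is a legitimate almost-sure lower bound on the observed union (since deletion can now evict genuine $\tG$-neighbors, the insertion-only choice is no longer valid), and tracking exactly where the $1-p$ factor is kept versus dropped so that the final threshold lands at $\frac{1}{1-p}$ times the insertion-only bound. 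The weaker success probability $1-n^{-1/4}$ (versus $1-n^{-2}$) is an expected byproduct: the $1-p$ factor shrinks $\myexp_\mB$ and thereby weakens the denominator's Chernoff exponent from about $n^{-3}$ to about $n^{-9/4}$ per pair, which is dominant after the union bound over $\binom{n}{2}$ pairs.
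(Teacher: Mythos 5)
Your proposal is correct and follows essentially the same route as the paper's proof: the same conditional binomial laws for $\mA$ and $\mB$ given $u,v,F$ (your $p_\mA$ matches the paper's exactly), the same geometric bound $p_{\mB_*}-p'\ge 2\mys$, the same Chernoff parameters $c_1=1+\frac{2}{q}\sqrt{\ln n/(\mys(n-2))}$ and $c_2=\frac12$ with the same case split on $\eps\ge 1$ versus $\eps<1$, and the same final algebra yielding the $\frac{1}{1-p}$ factor in the threshold and the $1-n^{-1/4}$ success probability after the union bound. The only (immaterial) difference is your denominator variable: you credit a symmetric-difference point only when its surviving true edge is kept (per-point probability $1-p$), whereas the paper also credits insertion of the missing edge (probability $1-p(1-q)$); since both are almost-sure lower bounds on $|N_\Ghat(u)\cup N_\Ghat(v)|$ and only the estimate $\myexp_\mB \ge (n-2)\frac{p_{\mB_*}-p'}{1-p'}(1-p)$ is used, this changes nothing.
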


\begin{proof}
Consider a fixed pair of nodes $(u,v) \in V$, and let $F = F(u,v)$ be the event that $N_{\tG}(u) \cap N_{\tG}(v) = \emptyset$ and $d_X(u,v) > r$. Let $\mA = |N_\Ghat(u)\cap N_\Ghat(v)|$, 
\begin{align*}
\mA_I &= \big| (N_\tG(u) \cup N_\tG(v)) \cap (N_\Ghat(u)\cap N_\Ghat(v)) \big|, ~\text{and} \\ 
\mA_o &= \big|\{x \not\in N_\tG(u) \cup N_\tG(v) | x ~\text{is connected to both} ~u~ \text{and}~ v~ \text{in}~ \Ghat \}\big|.
\end{align*} 
Obviously, $\mA = \mA_I + \mA_o$. 
Further set $\mB_* = \big| (N_\tG(u) \cup N_\tG(v))\cap (N_\Ghat(u) \cup N_\Ghat(v)) \big|$, then $|N_\Ghat(u) \cup N_\Ghat(v)| \geq \mB_* + \mA_{o}$. Setting $\mB := \mB_* + \mA_{o} -2$, we then have that $\Jindex_{u,v}(\Ghat) = \frac{\mA}{|N_\Ghat(u) \cup N_\Ghat(v)|} \le \frac{\mA}{\mB_* + \mA_{o} -2} = \frac{\mA}{\mB}$. We aim to show that with very high probability $\frac{\mA}{\mB} I_F < \tau$, which implies that $\Jindex_{u,v}(\Ghat) I_F < \tau$.

By applying the same technique as in Lemma \ref{lem:removebadedges}, we claim that, conditional on the locations of $u$ and $v$ and the event $F$, the distribution of $\mA$ is $Bin(n-2, p_\mA)$ with $p_\mA = \frac{p_{\mB_*} - p'}{1-p'} (1-p)q+ \frac{1-p_{\mB_*}}{1-p'} q^2$, where $p_{\mB_*} = \mu (B(u,r) \cup B(v,r))$ and $p'= \mu (B(u,r) \cap B(v,r))$. We also claim that the conditional distribution of $\mB$ given $u$, $v$ and $F$ is $Bin(n-2, p_\mB)$ with $p_{\mB} = \frac{p_{\mB_*} - p'}{1-p'} \cdot (1-p(1-q)) + \frac{1-p_{\mB_*}}{1-p'}q^2$ by a similar argument. 

If $d_X(u,v) > r$, we have $p_{\mB_*} - p' \geq 2\mu (B(\frac{r}{2})) \geq 2s$ (recall Figure [\ref{fig:illustration}(b)]). Therefore, the conditional expectation of $\mA$ given $u$, $v$ and $F$, denoted by $\myexp_\mA(=(n-2)p_\mA)$, satisfies:
\begin{align}\label{eqn:combinedmyexpA}
(n-2)\cdot \frac{p_{\mB_*} - p'}{1-p'}(1-p)q \leq \myexp_{\mA} &\leq (n-2) [\frac{p_{\mB_*} - p'}{1-p'} q + \frac{1}{1-p'} q^{2}]\nonumber\\
 &\le (c/2 + 1)\cdot (n-2) \cdot \frac{p_{\mB_*} - p'}{1-p'} \cdot q
\end{align}
where the last inequality follows from $q \le c\mys$. The conditional expectation of $\mB$ given $u$, $v$ and $F$, denoted by $\myexp_\mB(= (n-2)p_\mB)$, satisfies:
\begin{align}\label{eqn:combinedmyexpB2}
(n-2) \cdot \frac{p_{\mB_*} - p'}{1-p'} (1-p) \leq \myexp_{\mB} \leq (n-2) [\frac{p_{\mB_*} - p'}{1-p'} + \frac{1}{1-p'}q^2] + 2
\end{align}

Let us for now assume that $\frac{c_1 \myexp_{\mA}}{c_2 \myexp_{\mB}} I_F\leq \tau$ a.s.\ for $c_1 = 1 + \epsilon$ and some constant $c_2 = 1 - \sigma$ with $\epsilon = \frac{2}{q}\sqrt{\frac{\ln{n}}{s(n-2)}}$ and $0 < \sigma < 1$ to be set later. 

If $q \leq 2\sqrt{\frac{\ln{n}}{s(n-2)}}$, then we have $\epsilon \geq 1$. In this case, combining Chernoff bounds with \eqref{eqn:combinedmyexpA} and that $p_{\mB_*} - p' \geq 2\mys$ and $p \leq \frac{1}{4}$, we have:
\begin{align}\label{eqn:combineAandB21}
\myprob[\mA \geq (1+\epsilon) \myexp_{\mA}] \leq e^{-\frac{\epsilon}{3}\myexp_{\mA}} = e^{-\frac{2}{3}\sqrt{\frac{\ln{n}}{sn}}\frac{1}{q}\myexp_{\mA}} \leq e^{-\frac{2}{3}\sqrt{\frac{\ln{n}}{s(n-2)}}\frac{1}{q}(n-2)\frac{p_{\mB_*} - p'}{1-p'}(1-p)q} \nonumber\\
\leq e^{-\frac{4}{3}\sqrt{(n-2)(\ln{n})s}\frac{3}{4}} \leq e^{-\sqrt{(n-2)(\ln{n})\frac{12\ln{n}}{(n-2)}}} = n^{-2\sqrt{3}}
\end{align}

Otherwise, if $q > 2\sqrt{\frac{\ln{n}}{s(n-2)}}$, then $0 < \epsilon < 1$. In this case, by Chernoff bounds

\begin{align}\label{eqn:combineAandBp}
\myprob[\mA \geq (1+\epsilon) \myexp_{\mA}] \leq e^{-\frac{1}{2}\epsilon^{2}\myexp_{\mA}} &\leq e^{-2\frac{\ln{n}}{\mys (n-2)}\frac{1}{q^2}(n-2)\frac{p_{\mB_*} - p'}{1-p'}(1-p)q} \nonumber\\
&= e^{-2\frac{(\ln{n})(p_{\mB_*}-p')}{\mys q}\frac{3}{4}} \leq e^{-\frac{3}{2}(\ln{n})\frac{2\mys}{\mys q}} \leq n^{-3}
\end{align}

On the other hand, by Chernoff bounds, we have $\myprob[\mB \leq c_2 \myexp_{\mB} \mid u,v,F] \leq e^{-\frac{\sigma^2}{2}\myexp_{\mB}}$. Note that $\myexp_{\mB} \geq (n-2) \cdot \frac{p_{\mB_*} - p'}{1-p'} \cdot (1-p) \geq (n-2) \cdot \mu(B(u,r)\cup B(v,r)) \cdot \frac{3}{4}\geq 18 \ln{n} $. We now set $\sigma = 1/2$. By taking expectation with respect to $u$ and $v$, we have

\begin{align}\label{eqn:combineABbound2}
\myprob[\mA \geq c_1 \myexp_{\mA} \mid F] + \myprob[\mB \leq c_2 \myexp_{\mB} \mid F] \leq 2n^{-9/4}
\end{align}

By the same argument as used in the proof of Lemma \ref{lem:removebadedges}, we have 
$\myprob[\Jindex_{u,v}(\Ghat) I_F\ge \Jthreshold] \leq \myprob[\frac{\mA}{\mB} I_F\ge \Jthreshold]\le 2n^{-9/4}. $ By union bound, the probability that $\Jindex_{u,v}(\Ghat) < \Jthreshold$ for all pairs of nodes $u$, $v \in V$ satisfying the required conditions is thus at least $1-\frac12 n^2(2n^{-9/4}) = 1 -  n^{-1/4} $. 

Finally, note that for the above argument to hold, we need to verify that
$$\frac{c_1 \myexp_{\mA}}{c_2 \myexp_{\mB}} I_F= \frac{2((1+\frac{2}{q}\sqrt{\frac{\ln{n}}{s(n-2)}})\myexp_{\mA})}{\myexp_\mB} I_F \leq \Jthreshold$$ holds for a.e.\ $u$ and $v$. This holds automatically if $F$ doesn't happen, so assume $F$ happens. Recall that $\myexp_{\mA} \le (c/2 + 1)\cdot (n-2) \cdot \frac{p_{\mB_*} - p'}{1-p'} q$ by \eqref{eqn:combinedmyexpA} and $\myexp_{\mB} \geq (n-2)\cdot \frac{p_{\mB_*} - p'}{1-p'} \cdot (1-p)$ by \eqref{eqn:combinedmyexpB2}. This implies that 
\begin{align}\label{eqn:combineremovebadfinalA}
\frac{2((1+\frac{2}{q}\sqrt{\frac{\ln{n}}{\mys n}})\myexp_{\mA})}{\myexp_\mB} &\leq \frac{2((c/2 + 1)(n-2) \frac{p_{\mB_*} - p'}{1-p'} q + (c+2)\sqrt{\frac{\ln{n}}{\mys (n-2)}}(n-2) \frac{p_{\mB_*} - p'}{1-p'})}{(n-2) \frac{p_{\mB_*} - p'}{1-p'} (1-p)} \nonumber\\
&= \frac{c+2}{1-p}q + 2\frac{c+2}{1-p}\sqrt{\frac{\ln{n}}{\mys (n-2)}}.
\end{align}

We have that as long as $\Jthreshold \geq \frac{(c+2)q}{1-p} + \frac{2(c+2)}{1-p}\sqrt{\frac{\ln{n}}{\mys (n-2)}}$, then $\frac{c_1 \myexp_{\mA}}{c_2 \myexp_{\mB}} I_F\leq \Jthreshold$ is satisfied. The lemma then follows. 
\end{proof}

\begin{theorem}\label{thm:maincombined}
Let $V$ be a set of $n$ points sampled i.i.d. from an $L$-doubling measure $\mu: X \to \reals^+$ supported on a compact metric space $(X, d_X)$. 
Let $\tG$ be the resulting $r$-neighborhood graph for $V$; and $\Ghat$ a graph obtained by the network model $G(r,p,q)$ described in Definition \ref{def:networkmodel}. 
Let $\anotherGt$ be the graph after $\tau$-Jaccard filtering of $\Ghat$.  
Then, if ~{\sf Assumption-R} holds, $p \leq \frac{1}{4}$, $q \leq \min \{\frac{1}{8},c\mys\}$ and $\mys n = \omega(\ln{n})$, then for any $\tau$ such that $\frac{(1-p)^2}{(10 + \frac{5}{3\ln{n}} + 20c)L^2} > \Jthreshold > \frac{(c+2)q}{1-p} + o(1)$, with high probability the shortest path distance metric $d_{\anotherG_\athreshold}$ is a 2-approximation of the shortest path metric $d_\tG$ of the true graph $\tG$. 
\end{theorem}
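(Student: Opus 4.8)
The plan is to prove the two inequalities of a $2$-approximation separately, namely the upper bound $d_\anotherGt \le 2 d_\tG$ and the lower bound $d_\tG \le 2 d_\anotherGt$, obtaining each from the three combined-case lemmas already established. Concretely, I would work on the intersection of three high-probability events: $\mathcal{E}_1$, that every edge of $\Ghat \cap \tG$ (a surviving true edge) is retained by the $\Jthreshold$-filtering, supplied by Lemma \ref{lem:combinedkeepgoodedges}; $\mathcal{E}_2$, that $d_{\Ghat \cap \tG} \le 2 d_\tG$, supplied by Lemma \ref{lem:combineddeletion}; and $\mathcal{E}_3$, that no \realbad{} edge survives the filtering, supplied by Lemma \ref{lem:combinedremovebadedges}. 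These hold with probability at least $1 - n^{-0.16}$, $1 - n^{-\Omega(1)}$, and $1 - n^{-1/4}$ respectively, so a union bound keeps all three simultaneously with high probability, and the whole argument is carried out on that event.

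For the upper bound I would first establish the key containment $\Ghat \cap \tG \subseteq \anotherGt$. Indeed, any edge $(u,v) \in E(\Ghat \cap \tG)$ is a true edge that survived the $p$-deletion, so $d_X(u,v) \le r$ and $(u,v) \in E(\Ghat)$; event $\mathcal{E}_1$ then guarantees $\Jindex_{u,v}(\Ghat) \ge \Jthreshold$, so $(u,v)$ passes the filter and lies in $E(\anotherGt)$. Since adding edges can only shorten shortest paths, $d_\anotherGt \le d_{\Ghat \cap \tG}$, and combining this with event $\mathcal{E}_2$ yields $d_\anotherGt \le d_{\Ghat \cap \tG} \le 2 d_\tG$. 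The upper endpoint $\frac{(1-p)^2}{(10 + \frac{5}{3\ln n} + 20c)L^2}$ of the admissible $\Jthreshold$-window is exactly the one dictated by Lemma \ref{lem:combinedkeepgoodedges} so that surviving good edges are never discarded.

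For the lower bound I would mimic the insertion-only argument of Theorem \ref{thm:insertiononly}. On event $\mathcal{E}_3$, every edge $(u,v) \in E(\anotherGt)$ has endpoints at $\tG$-distance at most $2$: if $(u,v) \in E(\tG)$ this is immediate, and otherwise $(u,v)$ is a surviving inserted edge that is not \realbad{}, so $N_\tG(u) \cap N_\tG(v) \neq \emptyset$ and $d_\tG(u,v) \le 2$ through a shared $\tG$-neighbor. Lifting a shortest $\anotherGt$-path edge by edge then replaces each of its $m$ edges by a $\tG$-path of length at most $2$, giving $d_\tG \le 2 d_\anotherGt$ whenever the endpoints are connected in $\anotherGt$; if they are disconnected the inequality is vacuous. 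Together the two bounds give $\tfrac12 d_\tG \le d_\anotherGt \le 2 d_\tG$, the claimed $2$-approximation; here the lower endpoint $\frac{(c+2)q}{1-p} + o(1)$ of the $\Jthreshold$-window is precisely what Lemma \ref{lem:combinedremovebadedges} requires, using $\mys n = \omega(\ln n)$ to absorb the $\frac{2(c+2)}{1-p}\sqrt{\ln n / (\mys(n-2))}$ term into the $o(1)$.

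I expect the main obstacle to be conceptual rather than computational: arranging the two-sided sandwich $\Ghat \cap \tG \subseteq \anotherGt \subseteq E(\tG) \cup \{\text{non-\realbad{} inserted edges}\}$ to hold simultaneously, since the upper and lower bounds pull the threshold $\Jthreshold$ in opposite directions. The delicate point is that, unlike the insertion-only case, a good edge may now be deleted before filtering, so the upper bound cannot rest on all of $\tG$ surviving; instead it must route through the surviving subgraph $\Ghat \cap \tG$, which is why I first reduce to Lemma \ref{lem:combineddeletion} and rely on Lemma \ref{lem:combinedkeepgoodedges} only to retain the edges that did survive. Verifying that the stated regime ($p \le \tfrac14$, $q \le \min\{\tfrac18, c\mys\}$, $\mys n = \omega(\ln n)$) keeps this $\Jthreshold$-window nonempty is the remaining bookkeeping step.
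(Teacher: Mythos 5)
Your proposal is correct and takes essentially the same route as the paper's own proof: the same three high-probability events (your $\mathcal{E}_1$ and $\mathcal{E}_2$ are the paper's $\mathcal{E}_2$ and $\mathcal{E}_1$), supplied by Lemmas \ref{lem:combinedkeepgoodedges}, \ref{lem:combineddeletion}, and \ref{lem:combinedremovebadedges} and combined via a union bound, with the upper bound routed through the containment $\Ghat \cap \tG \subseteq \anotherGt$ and the lower bound obtained by replacing each non-\realbad{} edge of a shortest $\anotherGt$-path with a $\tG$-path of length at most $2$, exactly as in Theorem \ref{thm:insertiononly}. No gaps to report.
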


\begin{proof}
Let $\mathcal{E}_1$ denote the event where $d_{\Ghat\cap \tG} \le 2d_\tG$. By Lemma \ref{lem:combineddeletion}, event $\mathcal{E}_1$ happens with probability at least $1 - n^{-\Omega(1)}$. 

Let $\mathcal{E}_2$ denote the event where all edges $\Ghat \cap \tG$ are also contained in the edge set of the filtered graph $\anotherGt$; that is, $\Ghat \cap \tG \subseteq \anotherGt$. 
By Lemma \ref{lem:combinedkeepgoodedges}, event $\mathcal{E}_2$ happens with probability at least $1 - n^{-0.16}$.
It then follows that:   
\begin{align}
\text{If both events } \mathcal E_1 \text{~and~} \mathcal E_2 \text{~happen, then~} d_\anotherGt \le d_{\Ghat \cap \tG} \le 2 d_\tG. \nonumber
\end{align}

What remains is to show $d_\tG \le 2 d_\anotherGt$. 
To this end, we define $\mathcal E_3$ to be the event where for all \realbad{} edges $(u,v)$ in $\Ghat$, we have $\Jindex_{u,v}(\Ghat) < \tau$. If $\mathcal E_3$ happens, then it implies that for an arbitrary edge $(u, v) \in E(\anotherGt)$, either $(u, v) \in E(\tG)$ or $d_\tG(u, v) = 2$ (since $N_\tG(u) \cap N_\tG(v) \neq \emptyset$).
By Lemma \ref{lem:combinedremovebadedges}, event $\mathcal E_3$ happens with probability at least $1-n^{-1/4}$. 

By union bound, we know that $\mathcal E_1$, $\mathcal E_2$ and $\mathcal E_3$ happen simultaneously with high probability. 

Using the same argument as in the proof of Theorem \ref{thm:insertiononly}, it then follows that given any $u, v\in V$ connected in $\anotherGt$, we can find a path in $\tG$ of at most $2 d_\anotherGt(u, v)$ number of edges to connect $u$ and $v$. 
Furthermore, event $\mathcal E_1$ implies that if $u$ and $v$ are not connected in $\anotherGt$, then they cannot be connected in $\tG$ either. 
Putting everything together, we thus obtain $d_\anotherGt \geq \frac{1}{2}d_\tG$. Theorem \ref{thm:maincombined} then follows. 
\end{proof}

\myparagraph{Extension to local doubling measure.}
We can relax the $L$-doubling condition of the measure $\mu$ where points are sampled from to a \emph{local doubling} condition, where the $L$-doubling property is only required to hold for metric balls of small radius. Specifically, 

\begin{definition}[$(R_0, L_{R_0})$-doubling measure]\label{def:localdoublingdim}
Given a metric space $\X = (X, d_X)$, a measure $\mu$ on $\X$ is said to be \emph{$(R_0, L_{R_0})$-doubling} if balls have finite and positive measure and there is a constant $L_{R_0}$ s.t. for all $x\in X$ and any $0 <R \leq R_0 $, we have 
$\mu(\aball(x, 2R)) \le L_{R_{0}} \cdot \mu(\aball(x, R))$. 
\end{definition}
All our results hold for $(R_0, L_{R_0})$-doubling measure, as long as the parameter $r$ generating the true graph $\tG_r$ satisfies $r < R_0$. 
The proofs follow the same argument as those for $L$-doubling measure almost verbatim, and thus are omitted.

\section{Some empirical results}
\label{sec:exp}

We provide some proof-of-principle experimental results to show the effectiveness of the Jaccard filtering process. We report on two sets of experiments --- one controlled experiment on synthetic datasets and the other on real-world network datasets.

\begin{figure}[htbp]
\centering
\begin{tabular}{cc}
\includegraphics[width = .35\textwidth]{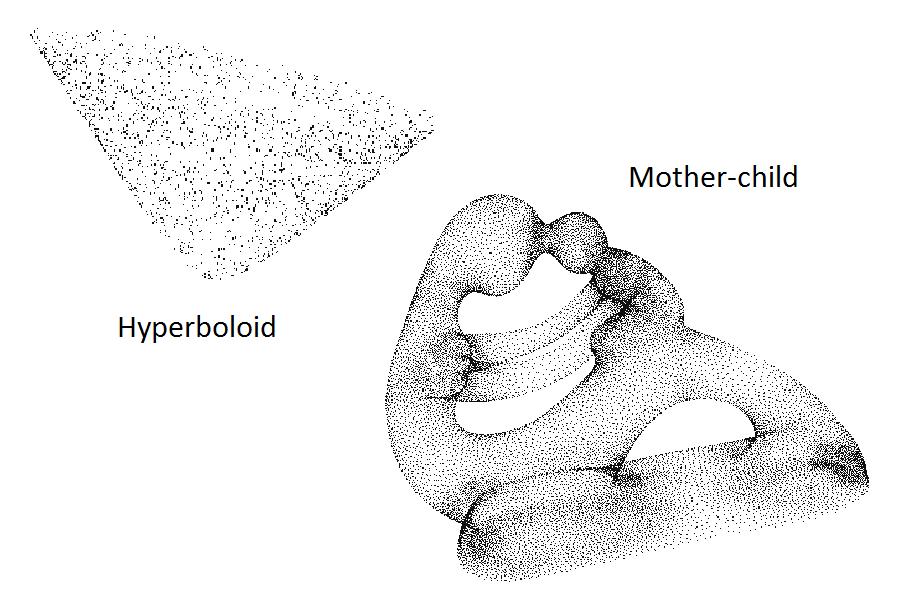}& 
\includegraphics[width= .35\textwidth]{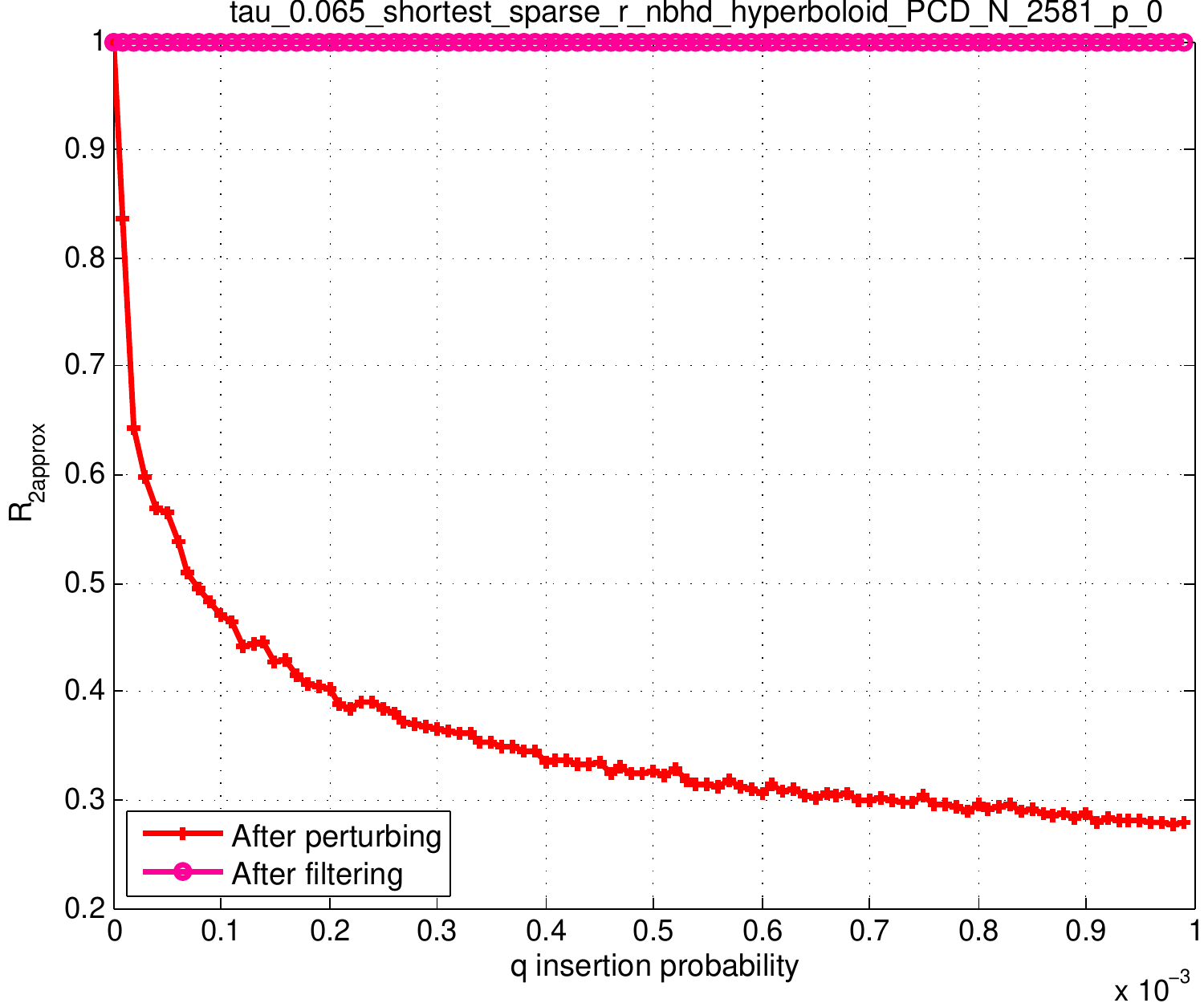}\\
(a) & ~~~(b)
\end{tabular}

\medskip
 
\includegraphics[width= .35\textwidth]{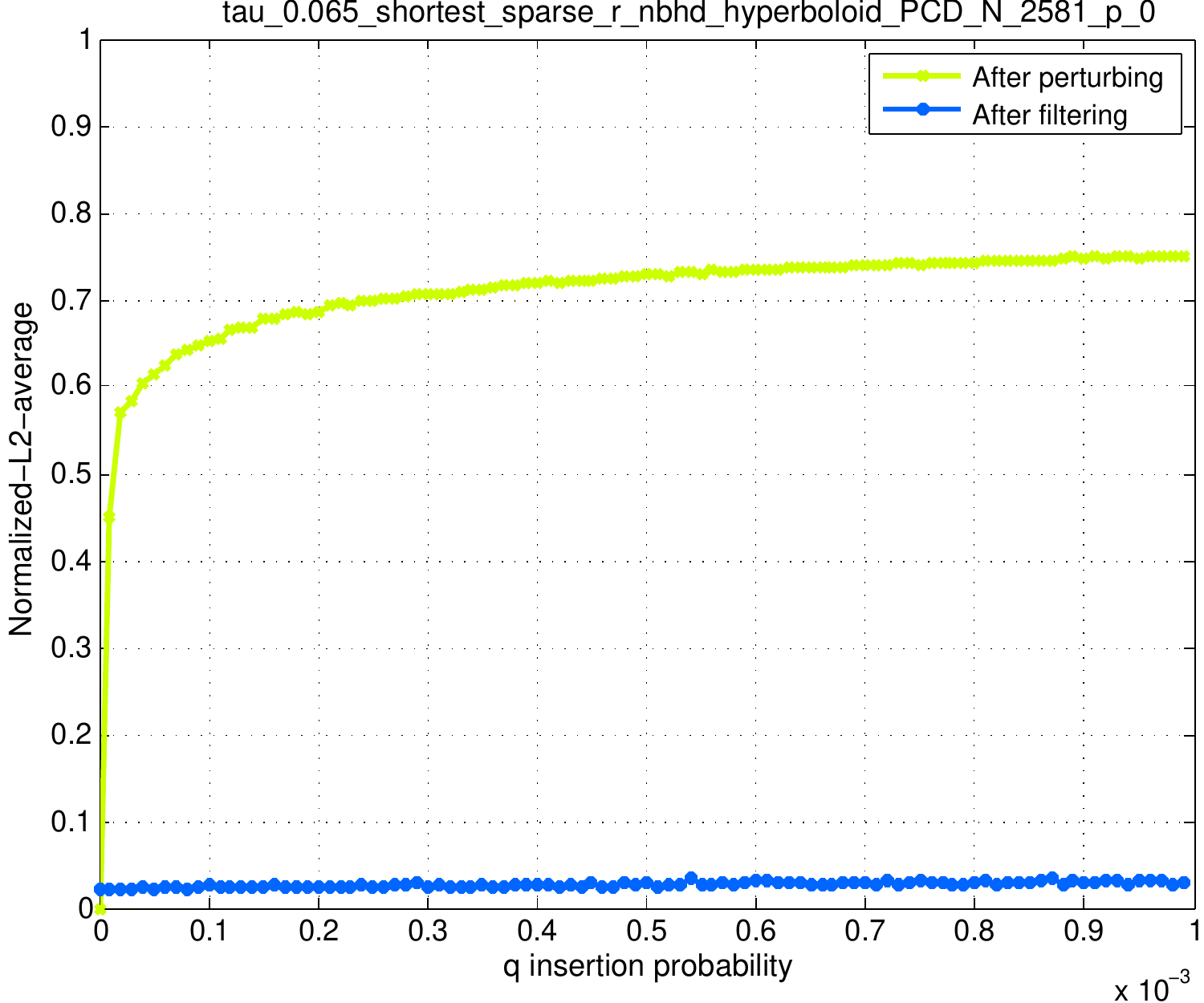}\\
~~~(c)
\caption{(a) $2.5K$ points sampled from a hyperboloid surface and $24K$ points sampled from mother-child model. 
(b) Comparison of 2-approximation rate $R_{2approx}$ as insertion probability (x-axis) increases. Top curve is after Jaccard-filtering, while bottom one is for perturbed graph without filtering. (c) Normalized $L_2$-average error with top curve being the one without filtering, and the bottom one (with significantly lower error) for after Jaccard-filtering. \label{fig:hyperboloidsparse}}
\end{figure}
\myparagraph{Synthetic datasets with ground truth.} 
In this experiment we seek to demonstrate that the Jaccard filtering approach works in a robust manner as predicted by our theoretical results. In particular, we start with the following two measures: 
$\mu_1$ is the ``quasi-uniform'' measure on the hyperboloid $S_1$ specified by $x^2+y^2-z^2 = 1$ \cite{Asta};  
$\mu_2$ is a non-uniform measure on the mother-and-child geometric model $S_2$ (see Figure \ref{fig:hyperboloidsparse}), where the measure is proportional to the local feature size at each point. 
For each $\mu_i$, we sample $n$ points $V$ i.i.d and build an $r$-neighborhood graph (we will specify choice of $r$ later). 
See Figure \ref{fig:hyperboloidsparse} (a) for illustration of input samples. 
This gives rise to a ground-truth neighborhood graph $G^*_r$. 
We next generate a set of observed graph $\oG_{p,q}$, varying the deletion probability ($p$) and insertion probability ($q$). 
Using a fixed parameter $\tau$, we perform $\tau$-Jaccard filtering for each $\oG_{p,q}$ to obtain a filtered graph $\widehat{G}^\tau_{p,q}$. 

To measure the difference between two metrics (represented as matrices) $D$ and $D'$, we use two types of error to be introduced shortly. 
But first, note that since we delete edges, the connectivity of the graph may change. Assume that $D_{i,j} = \infty$ if the two corresponding points $p_i$ and $p_j$ are not connected in the graph. Note that if $D_{i,j} = \infty$ and $D'_{i,j} = \infty$, the relation $\frac{1}{2}D_{i,j} \le D'_{i,j} \le 2D_{i,j}$ still holds.
\begin{itemize}
\item {\bf $2$-approximation rate $R_{2approx}$} is defined by 
$$R_{2approx}(D, D') = \frac{| \{ (i,j), 1\le i < j \le n \mid \frac{1}{2}D_{i,j} \le D'_{i,j} \le 2D_{i,j} \}|}{n(n-1)/2}. $$ 
In other words, $R_{2approx}$ is the ratio of ``good'' pairwise distances from $D'$ that $2$-approximate those in $D$. 
\end{itemize}

To analyze $L_2$-type error, we need to avoid the cases that $D_{i,j}$ is not comparable with $D'_{i,j}$. Thus, we collect the following \emph{good-index set}\\ 
{\small $I_{good}(D, D') = \{ (i, j), 1\le i < j \le n \mid$ either $(D_{i,j} < \infty) \wedge (D'_{i,j}< \infty)$; or $(D_{i,j} = \infty)\wedge(D'_{i,j} = \infty) \}$.} 
\begin{itemize}
\item {\bf Normalized $L_2$-average error $\delta_N(D, D')$.} 
First, we define root-mean-squared (RMS) error by 
$$\delta (D, D') = \sqrt{ \frac{\sum_{(i,j) \in I_{good}} (D_{i,j}-D'_{i,j})^2}{|I_{good}|}} $$
where note that if $D_{i,j} = \infty$ and $D'_{i,j} = \infty$, then $D_{i,j} - D'_{i,j} = 0$.
We then normalize it by the normalized $L_2$ norm of $D$; that is, 
$$ \delta_N(D, D') = \frac{\delta(D, D')}{\sqrt{\frac{1}{|\{ i< j, D_{i,j} < \infty\}|}\sum_{i< j, D_{i,j}<\infty} D^2_{i,j}}}. $$ 
\end{itemize}

Let $D_G$ denote the shortest path metric induced by a graph $G$. 
We compare the $2$-approximation rate $R_{2approx}(D_\tG, D_{G_q})$ for the sequence of observed graphs $G_{q}$ for increasing insertion probability $q$ ($x$-axis in all the plots) with $R_{2approx}(D_\tG, D_{\widehat{G}^\tau_q})$s for the sequence of filtered graph $\widehat{G}_q^\tau$; 
while we also compare the normalized $L_2$ error $\delta_N(D_\tG, D_{G_q})$ versus $\delta_N( D_\tG, D_{\widehat{G}^\tau_q})$s for increasing $q$s. 

In the following experiments, we choose $r$ (to build the $r$-neighborhood graph) to be (a) (``sparse'') twice or (b) (``dense'') five times of the average distance from a point to its 10-th nearest neighbor in $P$. 
\begin{figure}[htbp]
\centering
\begin{tabular}{cc} 
\includegraphics[width= .35\textwidth]{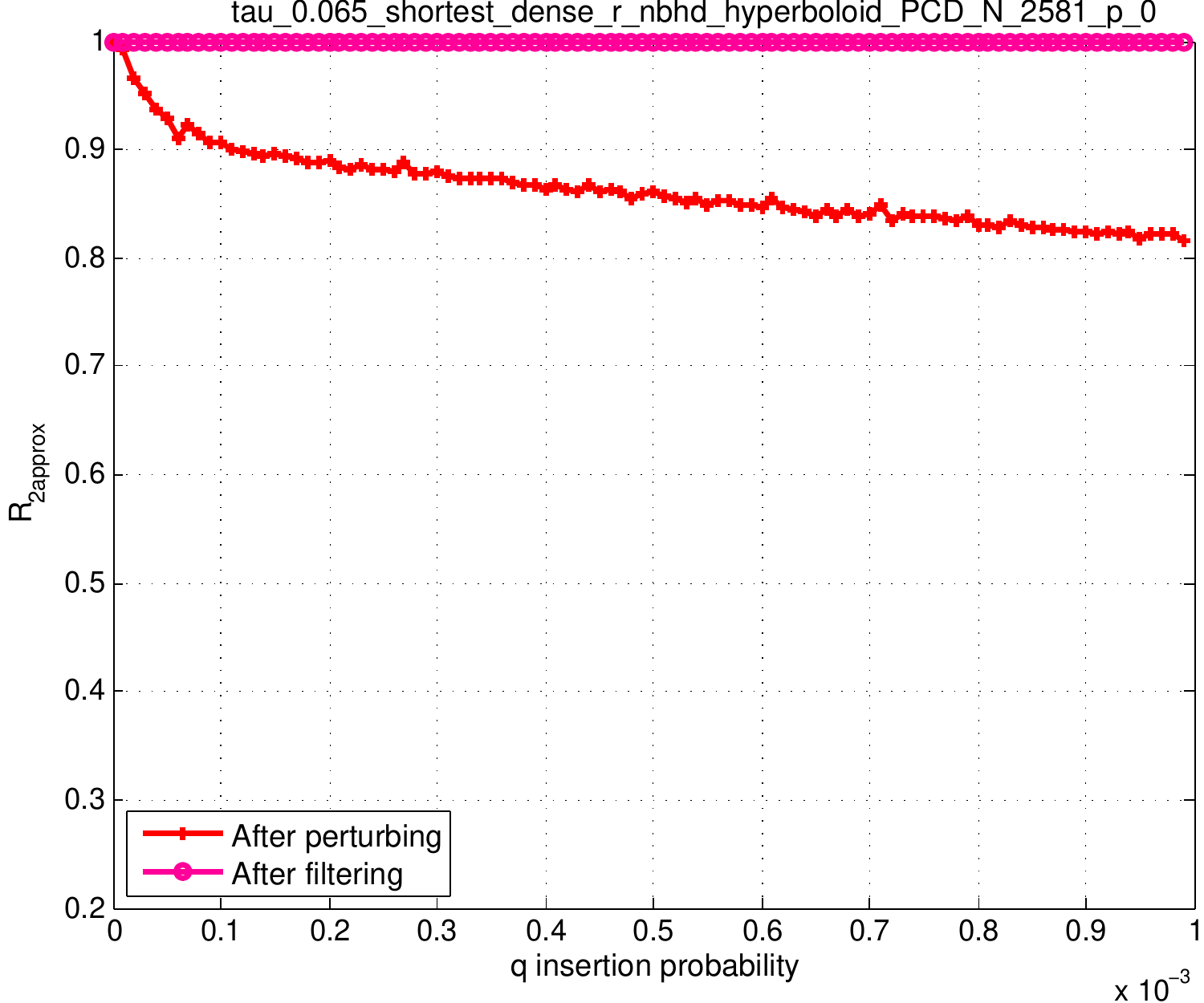} & 
\includegraphics[width= .35\textwidth]{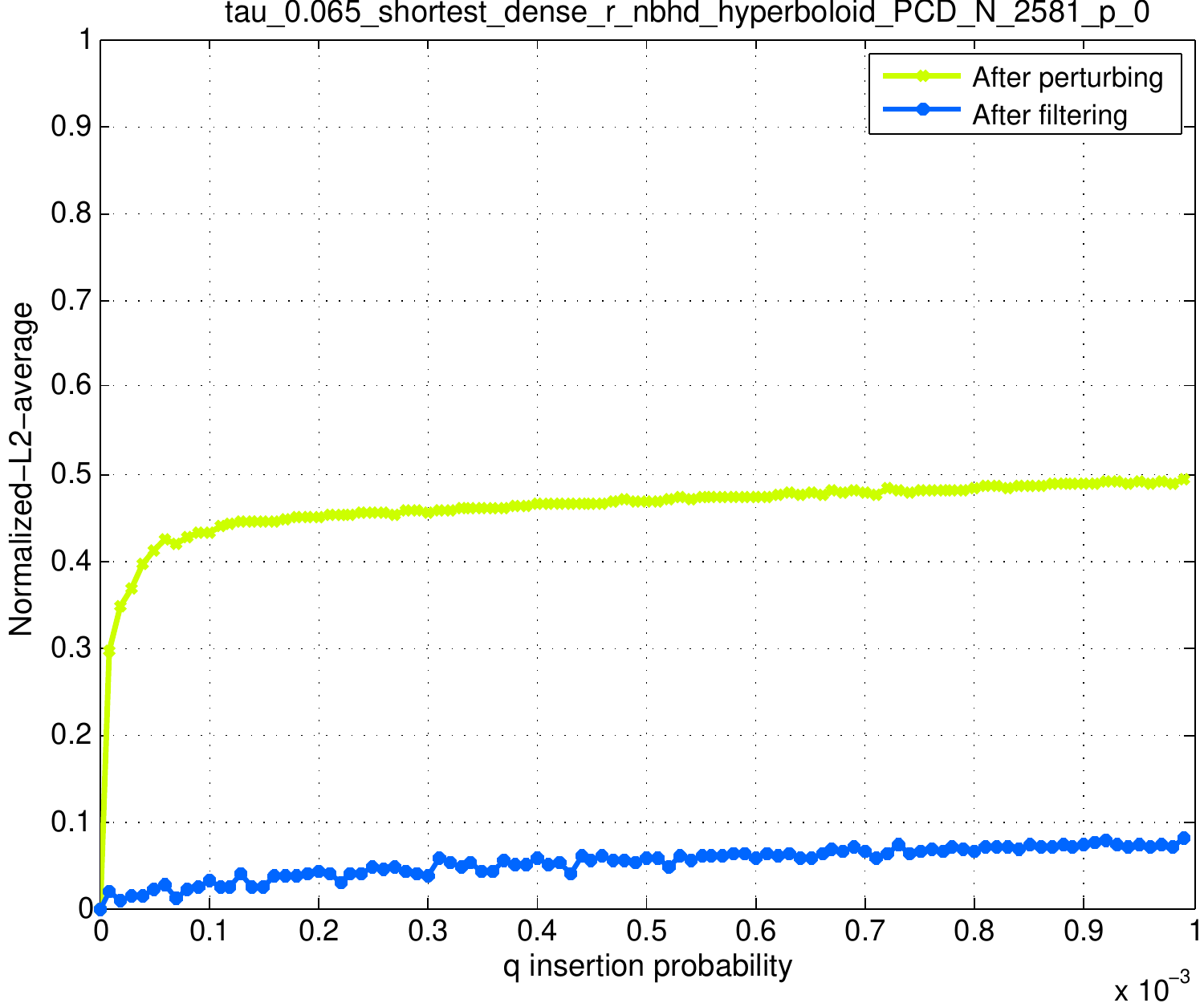}\\
~~~(a) & ~~(b)
\end{tabular}
\caption{(a) ``dense'' hyperboloid $R_{2approx}$. (b) ``dense'' hyperboloid Normalized $L_2$-average error. \label{fig:hyperboloiddense}}
\end{figure}
Figure \ref{fig:hyperboloidsparse} (b)(c) is the result when we apply Jaccard filtering to the ``sparse'' hyperboloid data ($\#$nodes: 2581, $\#$edges: 38321). As we can see, randomly inserting edges distorts the shortest path metrics (with low $2$-approximation rate and high normalized $L_2$ error for $G_q$s). However, our Jaccard-index filtering process restores the metric not only w.r.t $2$-approximation rate (which is predicted by our theoretical results), but also w.r.t normalized $L_2$ error. The plots for the ``dense'' hyperboloid data ($\#$nodes: 2581, $\#$edges: 208290) are shown in Figure \ref{fig:hyperboloiddense}, where we observe similar improvements in error rates. 
Note that the curve for $R_{2approx}$ of the perturbed (but un-filtered) graphs $G_q$ decreases faster with increasing $q$ for the sparse case compared to the dense case; while the curve for the normalized $L_2$ error increases also faster for the sparse case. This fits the intuition that sparse graphs are more sensitive to \myER-type perturbation w.r.t shortest path distance. 

\begin{figure}[htbp]
\centering
\begin{tabular}{cc} 
\includegraphics[width= .35\textwidth]{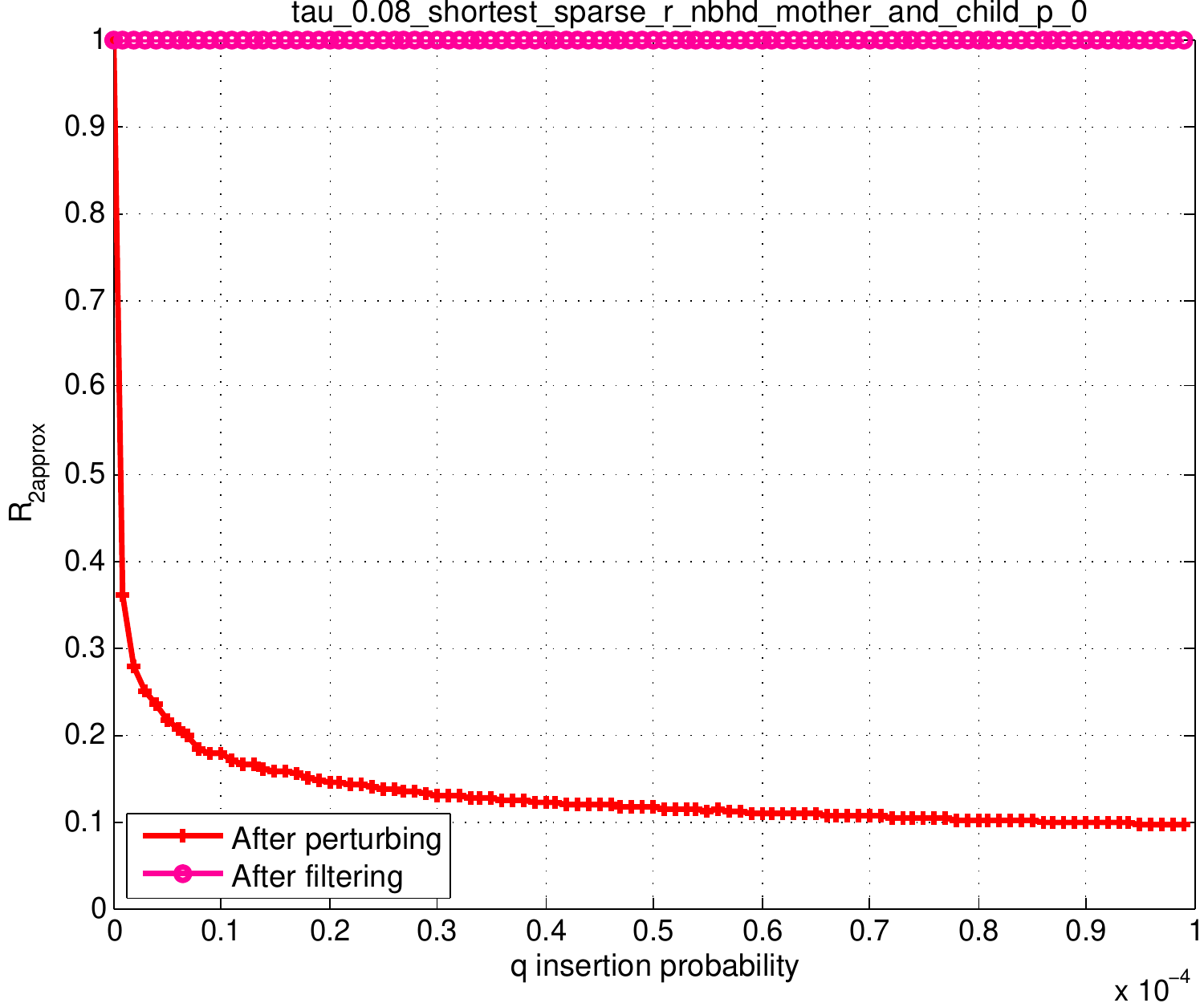} & 
\includegraphics[width= .35\textwidth]{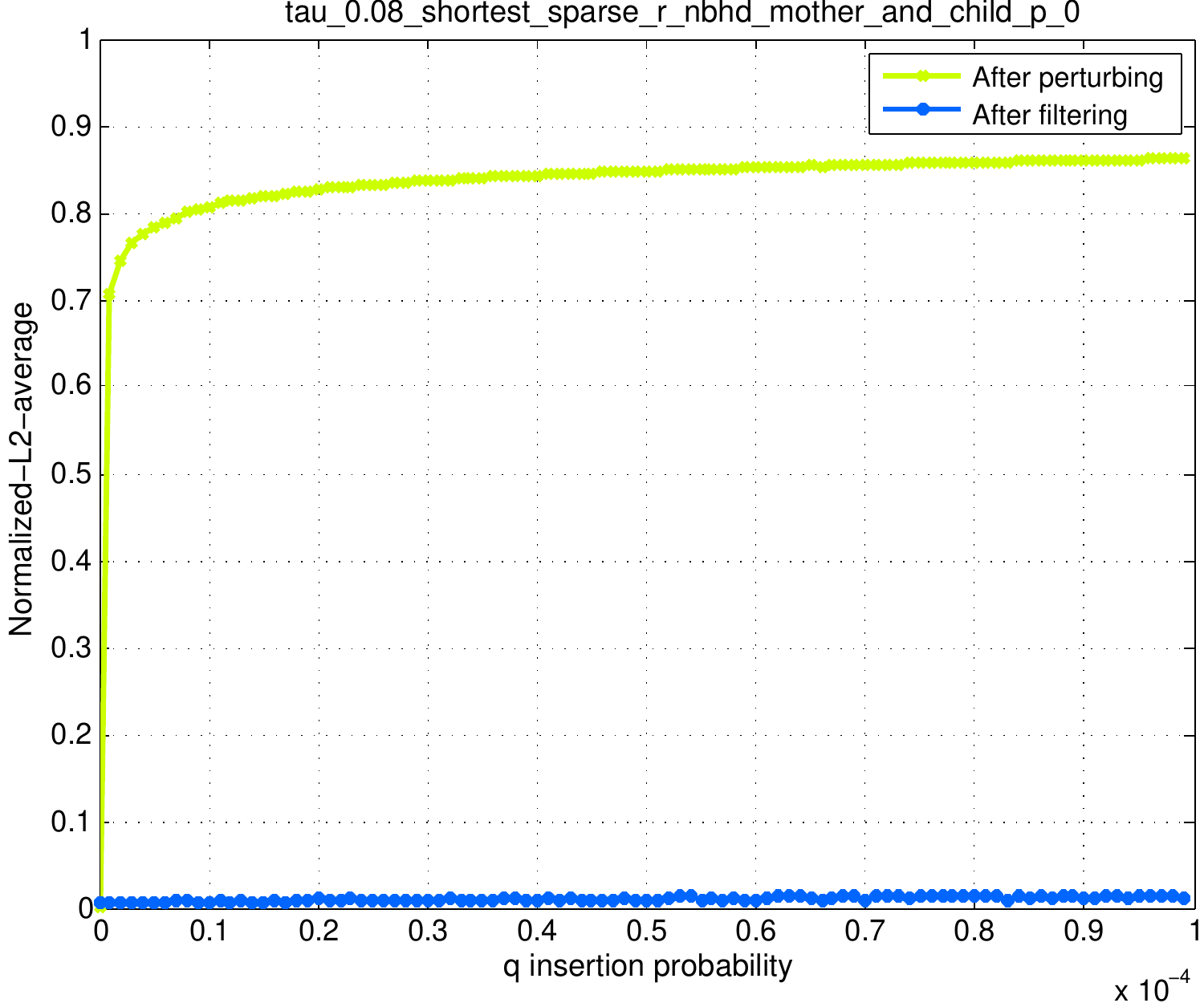}\\
~~~(a) & ~~(b) \\[6pt]
\includegraphics[width= .35\textwidth]{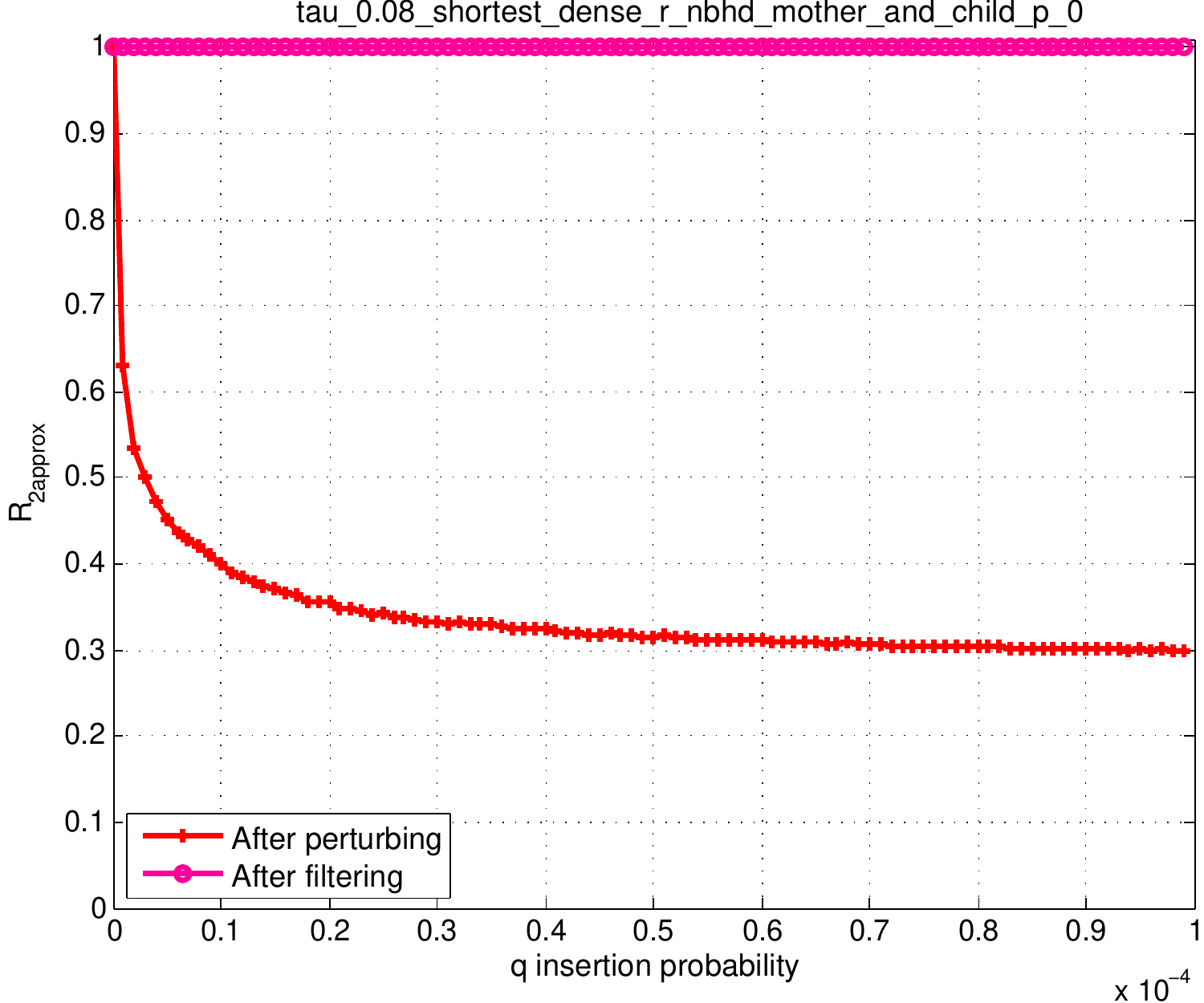} & 
\includegraphics[width= .35\textwidth]{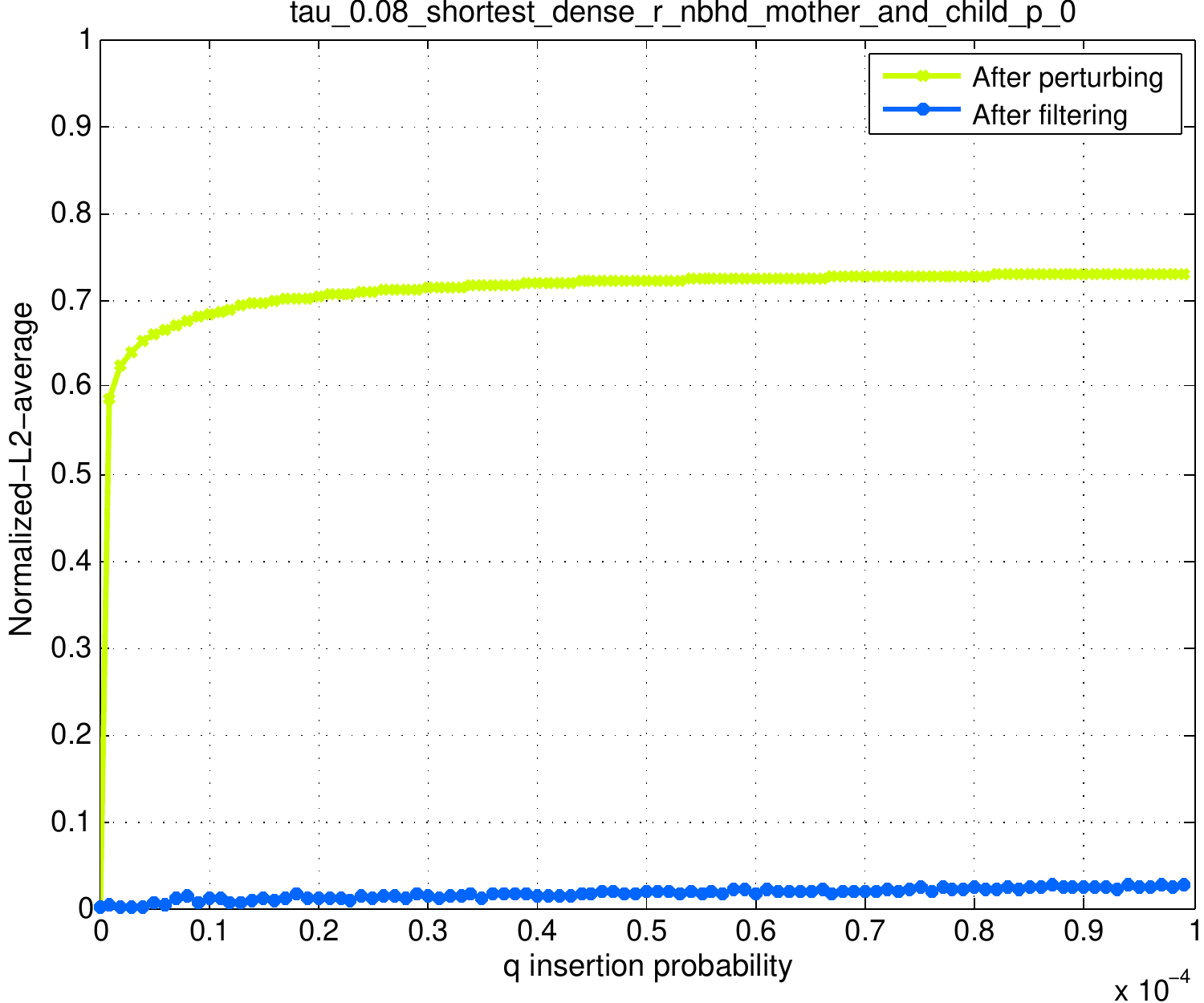}\\
~~~(c) & ~~(d)
\end{tabular}
\vspace*{-0.15in}
\caption{(a) ``sparse'' mother-child model $R_{2approx}$. (b) ``sparse'' mother-child model Normalized $L_2$-average error. 
(c) ``dense'' mother-child model $R_{2approx}$. (d) ``dense'' mother-child model Normalized $L_2$-average error. \label{fig:mothersparse}}
\end{figure}

We perform the same experiments to the mother-child model. Figure \ref{fig:mothersparse} (a) and (b) shows the results for the ``sparse'' mother-child model ($\#$nodes: 23390, $\#$edges: 553797); while the results for the ``dense'' mother-child data ($\#$nodes: 23390, $\#$edges: 3428141) are in (c) and (d). Similar behaviors are observed. 

\begin{figure}[htbp]
\centering
\begin{tabular}{cc}
\includegraphics[width= .35\textwidth]{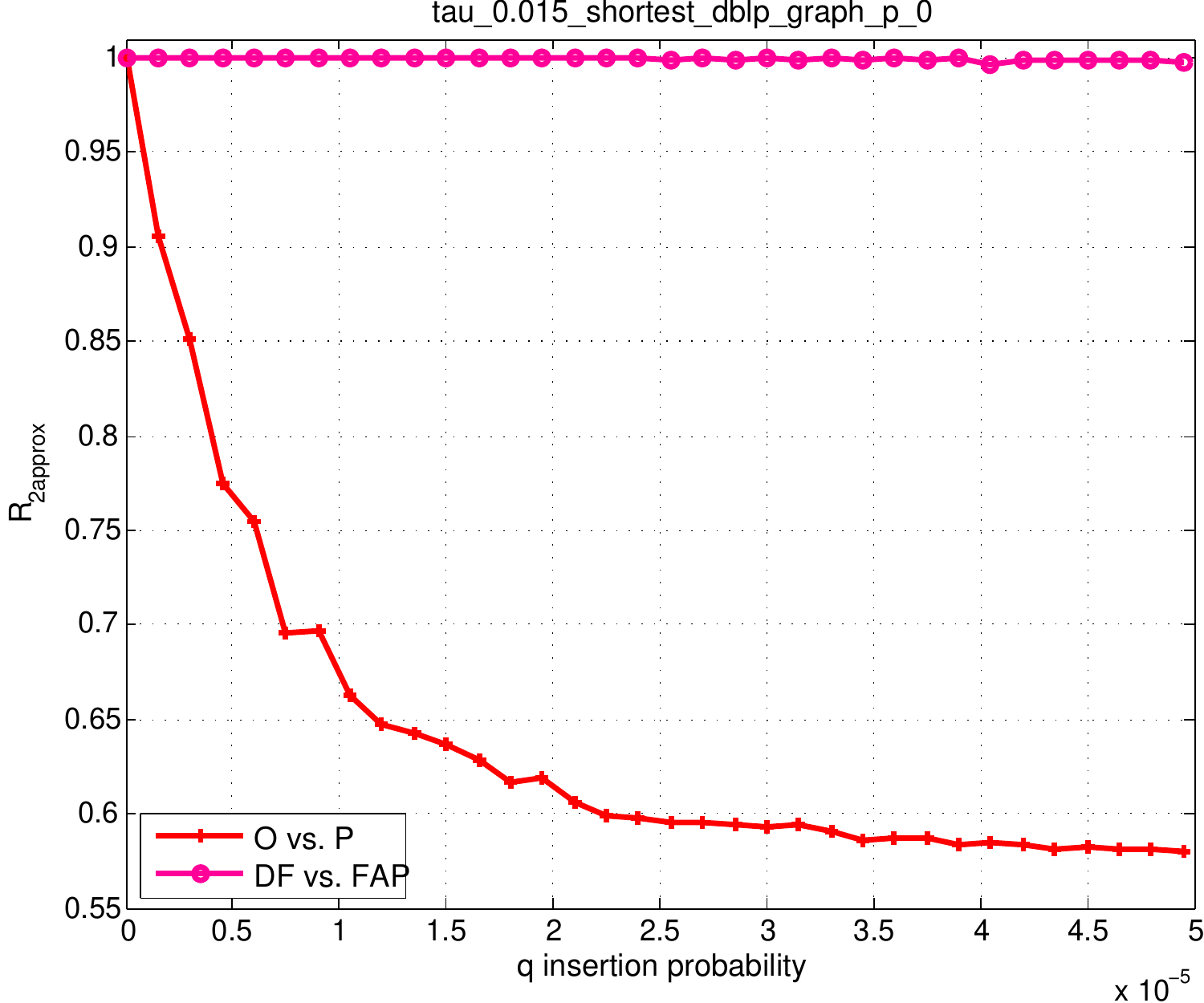} &
\includegraphics[width= .35\textwidth]{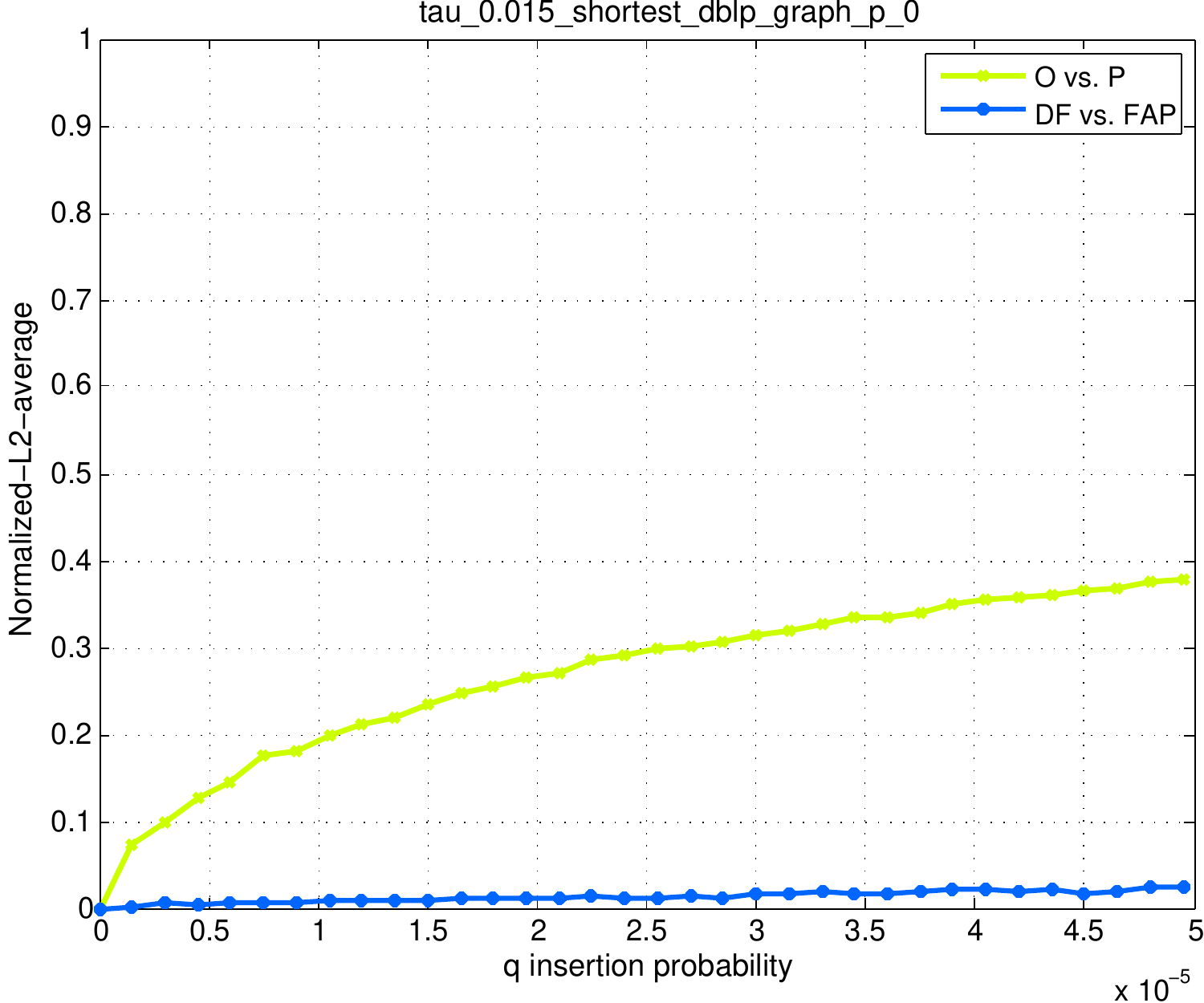} \\
~~~(a) (co-authorship network) $R_{2approx}$. & ~~(b) (co-authorship network) Normalized $L_2$\\
\includegraphics[width= .35\textwidth]{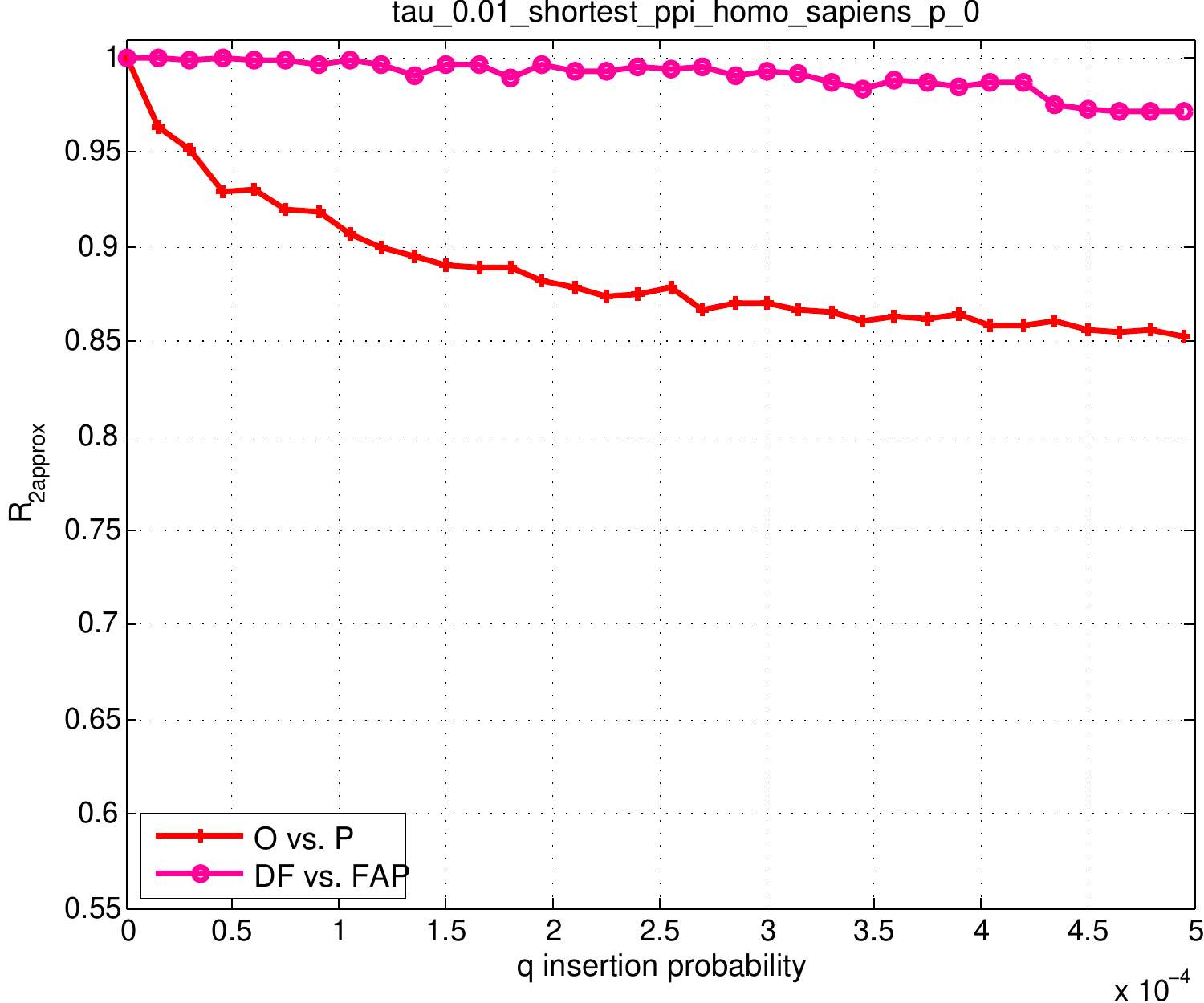} &
\includegraphics[width= .35\textwidth]{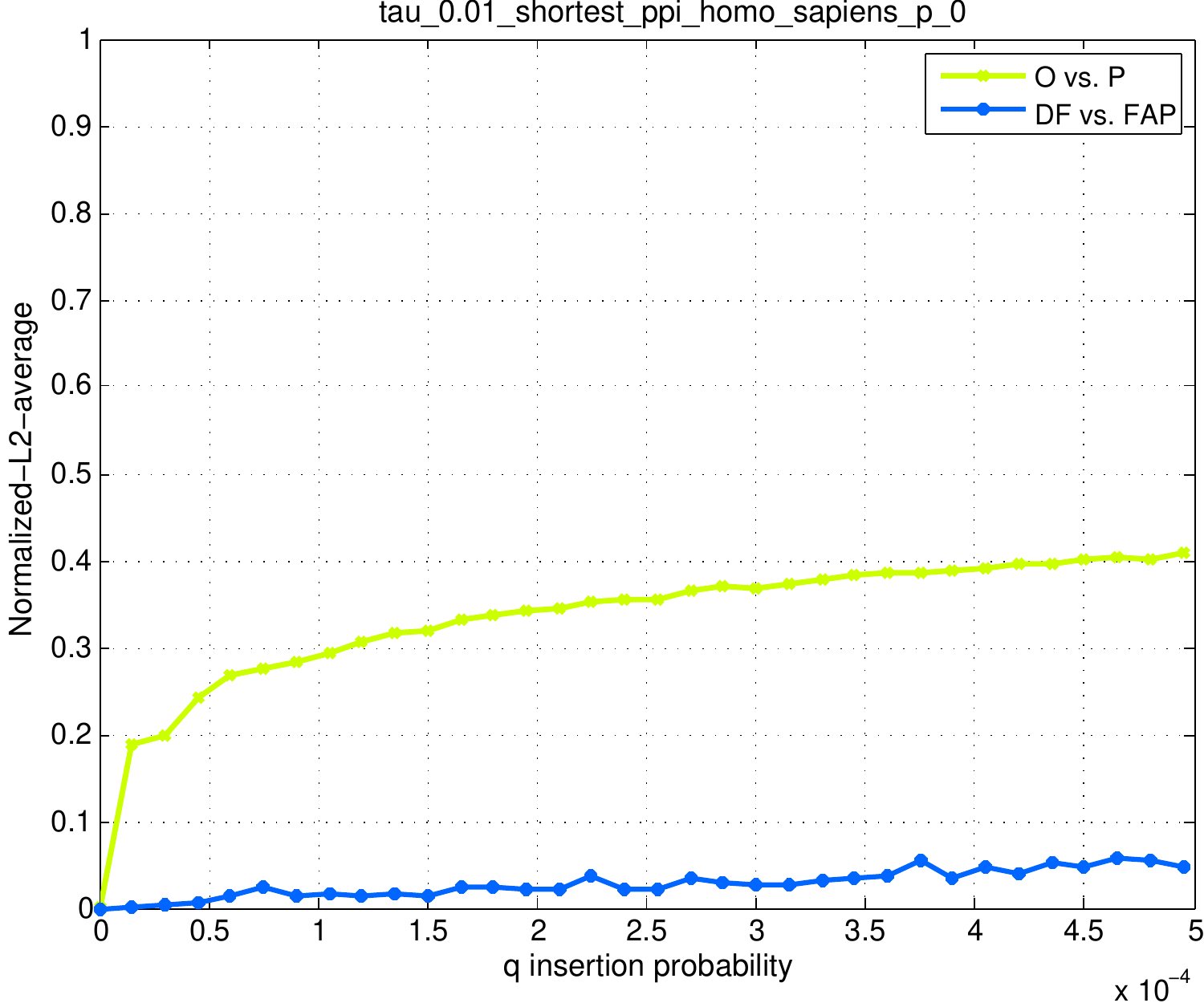} \\
~~~(c) (PPI network) $R_{2approx}$. & ~~(d) (PPI network) Normalized $L_2$\\
\end{tabular}
\vspace*{-0.15in}
\caption{(a)(c) 2-approximation rate $R_{2approx}$ as $q$ (x-axis) increases, with top curve (better rate) for graphs $G^\tau_{q}$ after $\tau$-Jaccard filtering (DF=directly filtered vs. FAP=filtered after perturbed), and bottom one for $G_q$ after only $q$-insertion (O=observed vs. P=perturbed). (b)(d) Normalized $L_2$ error (O vs. P is normalized by O and DF vs. FAP is normalized by DF), with being the one without filtering, and the bottom curve (lower error) for graphs after Jaccard filtering. 
 \label{fig:dblp_graph}}
\end{figure}

\myparagraph{Real networks without ground truth.}
For a given real network $G$, we can consider this as an observed graph. However, we do not know how this network is generated and there is no ground truth graph $\tG$. 
Nevertheless, we carry out the following experiments to indirectly infer the effectiveness of Jaccard-filtering. 
Specifically, given $G$, we gradually add random $(p=0,q)$-perturbation to it, and compare the shortest path metric $D_{G_{q}}$ of the perturbed graph $G_q$ with the metric $D_G$ of input network $G$; $q$ is for the insertion probability. 
Next, we perform $\tau$-Jaccard filtering for all these graphs $G$ and $G_{q}$s to obtain $G^\tau$ and $G^\tau_{q}$ respectively, and then compare the shortest path metric  $D_{G^\tau_q}$ for filtered graphs $G^\tau_q$ with $D_{G^\tau}$ of $G^\tau$. 

See Figure \ref{fig:dblp_graph} (a)(b), where the input network is a network representing co-authorship extracted from papers published at 28 major computer science conferences \cite{LouTang:13WWW} 
($\#$nodes: 53442, $\#$edges: 127969). Note that the normalized $L_{2}$ error is also reduced by Jaccard filtering. 
The results for a protein-protein interaction network \cite{joshi2005reactome} ($\#$nodes: 6327, $\#$edges: 147547) are also given in Figure \ref{fig:dblp_graph} (c)(d). 

\section{Concluding remarks}

In this paper we study how to recover the shortest path metric of a true graph $G^*$ from an observed graph $G$, when $G^*$ is assumed to be some proximity graph of a hidden domain $X$, while $G$ is generated from $G^*$ with random \myER{}-type perturbations. 

Our paper represents one step towards unraveling the structure of the space where data are sampled from. There are many interesting problems along this direction,  including how to generalize our network model to better model real networks. We describe one direction here: Our current work recovers the shortest path metric of the hidden graph $G$. However, there are other common metrics induced from $G$, such as the diffusion distance metric. In fact, for dense random graphs, say graphs generated from a graphon \cite{eldridge2016graphons} (including stochastic block models), the spectral structure of such random graphs are stable. This may imply that diffusion distances could also be stable against random perturbations even without any filtering process. Note that such graphs have $\Theta(n^2)$ number edges asymptotically. 
However, for sparse graphs (which our model could generate), empirically we observe that diffusion distances are not stable under random perturbations. It would be interesting to see whether the Jaccard filtering process (or other filtering procedure) could recover diffusion distances with theoretical guarantees. (Interestingly, we have observed that empirically, Jaccard filtering can recover diffusion distance as well in our experiments.) 

Finally, it would be interesting to explore whether the analysis and ideas for network models from our paper could be used to create a practical wormhole detector in wireless networks, akin to Ban et al's local connectivity tests based on [$\alpha, \beta$]-rings \cite{BSG11}. 

\paragraph{Acknowledgement.} 
The authors thank Samory Kpotufe for the pointer to the local version of $L$-doubling measure. This work is in part supported by National Science Foundation under grants IIS-1550757 and CCF-1618247. SP and DS would like to acknowledge NSF grant \#DMS: 1418265 for partially supporting this work. Any opinions, findings, and conclusions or recommendations expressed in this material are those of the author(s) and do not necessarily reflect the views of the National Science Foundation.

\bibliography{general}

\newpage

\appendix
\section{Relation between metric structures for $\tGr$ and for $\X$}
\label{appendix:thm:metricapprox}

We now prove Theorem \ref{thm:metricapprox} here. 

First, we will argue that $V_n$ forms a dense sampling of the compact space $\X = (X, d_X)$. 
We will then show that under the dense sampling condition, the shortest paths between points in $V_n$ with respect to the input metric $d_X$ is approximated by $d_\tG$ scaled by $r$ as claimed. 

We will start with introducing the concept of $\eps$-sampling. 
\begin{definition}
A finite set of points $P \subset X$ is an $\eps$-sample of $(X, d_X)$ if for any $x \in X$, $d(x, P) \le \eps$ where $d(x,P) = \min_{p \in P} d_X(x, p)$. That is, for any $x\in X$ there is a sample point from $P$ within $\eps$ geodesic distance away from $x$. 
\end{definition}

Now let $\mathcal N_\eps$ denote the \emph{$\eps$-covering number of $X$}, which is the minimum number of closed geodesic balls centered in $X$ of radius $\eps$ needed to cover $X$; denote by $\mathcal B_\eps$ such a collection of geodesic balls with cardinality $\mathcal N_\eps$. 
Set $\mathcal{V}_\eps := \min_{B \in \mathcal B_\eps} \mu(B)$, which is strictly positive and finite since $\mu$ is a doubling measure and $\X$ is compact. 
We claim the following, the proof of which is similar to that of Theorem 5.2 of \cite{chazal2013persistence}: 
\begin{claim}\label{claim:randomepssample}
Let $V_n$ be a set of $n$ points sampled from $(X, d_X)$ w.r.t. $\mu$ in i.i.d. fashion. 
Then $V_n$ forms an $\eps$-sample of $X$ with probability at least $1 - \mathcal{N}_{\eps/2}\cdot e^{-n  \mathcal V_{\eps/2}}$.
\end{claim}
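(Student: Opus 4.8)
The plan is to reduce the $\eps$-sample property to a simple covering-and-occupancy argument. First I would fix a minimum cover $\mathcal{B}_{\eps/2} = \{B_1, \ldots, B_{\mathcal{N}_{\eps/2}}\}$ of $X$ by $\mathcal{N}_{\eps/2}$ closed geodesic balls of radius $\eps/2$, with each $B_i$ centered at some $c_i \in X$. The key observation is that if every ball $B_i$ contains at least one point of $V_n$, then $V_n$ is automatically an $\eps$-sample: given any $x \in X$, the cover guarantees $x \in B_i$ for some $i$, and if $p \in V_n \cap B_i$ then the triangle inequality gives $d_X(x, p) \le d_X(x, c_i) + d_X(c_i, p) \le \eps/2 + \eps/2 = \eps$, so $d(x, V_n) \le \eps$. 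Thus it suffices to lower-bound the probability that no ball in $\mathcal{B}_{\eps/2}$ is left empty by the sample.

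Next I would bound the complementary (failure) probability. For a fixed ball $B \in \mathcal{B}_{\eps/2}$, a single point drawn from $\mu$ lands outside $B$ with probability $1 - \mu(B) \le 1 - \mathcal{V}_{\eps/2}$, where I use $\mu(B) \ge \mathcal{V}_{\eps/2}$ by the definition of $\mathcal{V}_{\eps/2}$. Since the $n$ points of $V_n$ are i.i.d., all of them avoid $B$ with probability $(1 - \mu(B))^n \le (1 - \mathcal{V}_{\eps/2})^n \le e^{-n\mathcal{V}_{\eps/2}}$, using $1 - t \le e^{-t}$. A union bound over the $\mathcal{N}_{\eps/2}$ balls then yields
\[
\myprob[\exists B \in \mathcal{B}_{\eps/2} \text{ with } B \cap V_n = \emptyset] \le \mathcal{N}_{\eps/2}\,(1 - \mathcal{V}_{\eps/2})^n \le \mathcal{N}_{\eps/2}\, e^{-n \mathcal{V}_{\eps/2}},
\]
and taking the complement gives the claimed bound $1 - \mathcal{N}_{\eps/2}\, e^{-n \mathcal{V}_{\eps/2}}$ for the event that $V_n$ is an $\eps$-sample.

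This argument is essentially routine, so there is no deep obstacle; the only points requiring care are structural. First, the halving of the radius (covering by balls of radius $\eps/2$ rather than $\eps$) is exactly what makes the triangle inequality close at $\eps$, and it must be tracked consistently through both the covering number and $\mathcal{V}$. Second, one must invoke that $\mathcal{V}_{\eps/2} > 0$ --- guaranteed because $\mu$ is a doubling measure (so balls have strictly positive measure) and $X$ is compact (so the minimum over the finite cover is attained and strictly positive) --- since otherwise the exponential bound would be vacuous. With these in hand the claim follows, and it feeds directly into the dense-sampling step needed for Theorem \ref{thm:metricapprox}.
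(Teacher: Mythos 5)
Your proposal is correct and follows essentially the same argument as the paper's own proof: fix a minimal cover by balls of radius $\eps/2$, bound the probability that any single ball misses all $n$ i.i.d.\ samples by $(1-\mathcal{V}_{\eps/2})^n \le e^{-n\mathcal{V}_{\eps/2}}$, and apply a union bound. The only difference is that you spell out the triangle-inequality step (why full occupancy of the $\eps/2$-cover implies an $\eps$-sample), which the paper leaves implicit.
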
 
\begin{proof}
Consider a covering set of geodesic balls $\mathcal B_{\eps/2} = \{B_1, \ldots, B_m\}$ with smallest cardinality $m = \mathcal N_{\eps/2}$. 
For each $i\in [1, m]$, let $E_i$ denote the event that $V_n \cap B_i = \emptyset$. 
Since points in $V_n$ are sampled i.i.d. from $\mu$, we have that 
$$\myprob[E_i] = (1 - \mu(B_i))^n \le (1 - \mathcal V_{\eps/2})^n \le e^{-n \cdot \mathcal V_{\eps/2}}, $$
where the last inequality follows from $1 - x \le e^{-x}$. 
On the other hand, it is easy to see that if $V_n \cap B_i \neq \emptyset$ for all $i\in [1,m]$, then $V_n$ must be an $\eps$-sample of $X$. 
It follows from this and the union bound that
$$\myprob[V_n \text{ is not an } \eps\text{-sample of } X] \le \sum_{i \in [1, m]} \myprob[E_i] \le \mathcal N_{\eps/2} \cdot e^{-n \mathcal{V}_{\eps/2}}. $$ 
The claim then follows. 
\end{proof} 

\begin{claim}\label{claim:epssample-metricbnd}
Suppose that $V_n$ is an $\eps$-sample of $(X,d_X)$ with $\eps < r/2$. Then for any $u, v\in V_n$, 
\begin{equation}\label{eqn:eps-metricbnd}
d_X(u, v) \le r \cdot d_\tG(u, v) \le \frac{r}{r-2\eps} d_X(u, v) + r. 
\end{equation}
\end{claim}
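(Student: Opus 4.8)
The plan is to prove the two inequalities separately. The left inequality $d_X(u,v) \le r\cdot d_\tG(u,v)$ is the routine direction and follows purely from the triangle inequality applied along a shortest graph path, whereas the right inequality is where the $\eps$-sampling hypothesis and the geodesic structure of $X$ enter; this will be where the actual work lies.

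For the lower bound, I would let $k = d_\tG(u,v)$ and take a shortest path $u = w_0, w_1, \dots, w_k = v$ in $\tG$. By definition of the $r$-neighborhood graph, every edge satisfies $d_X(w_i, w_{i+1}) \le r$, so the triangle inequality gives $d_X(u,v) \le \sum_{i=0}^{k-1} d_X(w_i,w_{i+1}) \le kr = r\cdot d_\tG(u,v)$. (If $u,v$ lie in different components of $\tG$ then $d_\tG(u,v) = \infty$ and there is nothing to check; the upper bound below is precisely what rules this out.)

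For the upper bound, the idea is to discretize a geodesic from $u$ to $v$ into short hops and then snap each waypoint to a nearby point of $V_n$. Concretely, set $\ell = d_X(u,v)$ and, using that $X$ is geodesic, fix an arc-length geodesic $\gamma\colon[0,\ell]\to X$ with $\gamma(0)=u$ and $\gamma(\ell)=v$. Let $m = \lceil \ell/(r-2\eps)\rceil$ (well defined since $r-2\eps>0$ by the hypothesis $\eps<r/2$) and place waypoints $x_i = \gamma(i\ell/m)$, so that $d_X(x_i,x_{i+1}) \le \ell/m \le r-2\eps$. Since $V_n$ is an $\eps$-sample, for each $i$ I would choose $p_i \in V_n$ with $d_X(x_i,p_i)\le\eps$, taking $p_0 = u$ and $p_m = v$ (legitimate because $u,v\in V_n$). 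The key computation is then
\[
d_X(p_i,p_{i+1}) \le d_X(p_i,x_i) + d_X(x_i,x_{i+1}) + d_X(x_{i+1},p_{i+1}) \le \eps + (r-2\eps) + \eps = r,
\]
so each consecutive pair $(p_i,p_{i+1})$ is an edge of $\tG$. Hence $p_0,\dots,p_m$ is a walk of length $m$ in $\tG$, giving $d_\tG(u,v) \le m \le \ell/(r-2\eps) + 1$; multiplying through by $r$ yields $r\cdot d_\tG(u,v) \le \frac{r}{r-2\eps}d_X(u,v) + r$, as claimed.

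I expect the only genuinely delicate points to be bookkeeping ones: ensuring $r-2\eps>0$ so the spacing is well defined, verifying the spacing bound $\ell/m \le r-2\eps$ from the ceiling, and checking that $u,v$ may be used directly as the endpoints $p_0,p_m$ so that the walk truly connects $u$ to $v$. The conceptual crux is simply recognizing that the slack $2\eps$ lost to snapping each waypoint onto a sample point must be absorbed into the hop length, which is exactly what produces the multiplicative factor $\frac{r}{r-2\eps}$ rather than $1$ together with the additive $+r$ coming from the ceiling.
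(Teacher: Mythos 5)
Your proposal is correct and follows essentially the same argument as the paper: the triangle inequality along a shortest graph path for the lower bound, and for the upper bound a discretization of a geodesic into hops of length at most $r-2\eps$ whose waypoints are snapped to $\eps$-close sample points, with the triangle inequality showing consecutive snapped points are within distance $r$. The only cosmetic difference is that you space waypoints equally at $\ell/m$ while the paper uses exact spacing $r-2\eps$ with a shorter final piece; both yield the same bound $d_\tG(u,v)\le d_X(u,v)/(r-2\eps)+1$.
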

\begin{proof}
Let $\pi \subset X$ be a shortest path (geodesic) from $u$ to $v$ in $X$ with length being $\length(\pi) = d_X(u,v)$.  
Let $v_0 = u, v_1, \ldots, v_k= v$ be the set of vertices in a shortest path $\hat \pi$ from $u$ to $v$ in the $r$-neighborhood graph $\tG$, so $k=d_\tG(u,v)$. 
Now based on $\hat \pi$, we construct the path $\pi' \subset X$ consisting of $k$ pieces, where the $i$th piece is simply the shortest path connecting $v_{i-1}$ to $v_{i}$ in $X$. 
Since $(v_{i-1}, v_{i})$ is an edge in $\tG$, the geodesic distance between them is at most $r$, so we have $\length(\pi') \le r k = r \cdot d_\tG(u,v)$, which implies that 
$$r \cdot d_\tG(u,v) \ge \length(\pi') \ge \length(\pi) = d_X(u,v).$$ 
Hence the left inequality of \eqref{eqn:eps-metricbnd} holds. 

What remains is to bound $r \cdot d_\tG(u,v)$ from above, showing that it cannot be too large compared to $d_X(u,v)$ as well. 

To this end, consider breaking the shortest path $\pi$ (connecting $u$ to $v$ in $X$) at a set of points $p_0 = u, p_1, \ldots, p_s =v \subset \pi$ so that for each $i\in [0, s-2]$, $\length(\pi(p_i, p_{i+1})) = r - 2\eps$ while $\length(\pi(p_{s-1}, p_s)) = d_X(p_{s-1},p_s) \le r - 2\eps$. 
(Note that it is possible that $s = 1$.) 
We then have that 
\begin{equation}\label{eqn:sbnd}
d_X(u,p_{s-1}) = (s-1) \cdot (r - 2\eps), ~\text{implying that}~s - 1 \le \frac{d_X(u,v)}{r - 2\eps}. 
\end{equation} 
Since $V_n$ is an $\eps$-sample of $X$, for each $p_i$, $i\in [1, s-1]$, there exists a  point $u_i \in V_n$ within $\eps$ geodesic distance to $p_i$. Set $u_0 = p_0 = u$ and $u_s = p_s = v$. It then follows from the triangle inequality that 
$$d_X(u_i, u_{i+1}) \le d_X(u_i, p_i) + d_X(p_i, p_{i+1}) + d_X(p_{i+1}, u_{i+1}) \le r,~~\text{for any}~i\in [0, s-1].$$ 
Hence for each $i\in [0, s-1]$, either $u_i = u_{i+1}$ or $(u_i, u_{i+1})$ is an edge in $\tG$. Thus $d_\tG(u, v) \le s$, and combining with \eqref{eqn:sbnd}, the second inequality of \eqref{eqn:eps-metricbnd} follows. 
This proves Claim \ref{claim:epssample-metricbnd}. 
\end{proof}

We now put everything together to prove the theorem. 
By Claim \ref{claim:randomepssample}, for each $\eps$, there exists a sufficiently large integer $n_\eps$ such that for $n\ge n_\eps$, with probability at least $1-e^{-\Omega(n)}$, $V_n$ is an  $\eps$-sample of $X$. Let $\Delta$ denote the diameter of $(X, d_X)$. By Claim \ref{claim:epssample-metricbnd}, if $\eps>0$ is sufficiently small and $n\ge n_\eps$, then with probability at least $1-e^{-\Omega(n)}$, for all $u,v \in V_n$,
\begin{align*}
 | r \cdot d_\tG(u,v) - d_X(u,v) | \le r + \frac{2\eps}{r-2\eps} d_X(u,v) \le r + \frac{3\eps \Delta}{r},
\end{align*}
so $\| r \cdot d_\tG - d_X |_{V_n} \|_\infty \le r + \frac{3\eps \Delta}{r}$.
The second term $\frac{3\eps\Delta}{r}$ tends to zero as $\eps$ tends to $0$.  Since the exceptional probabilities, $e^{-\Omega(n)}$, are summable in $n$ (for fixed $\eps$), the Borel--Cantelli lemma implies that $\limsup_{n\to\infty} \| r \cdot d_\tG - d_X |_{V_n} \|_\infty \le r + \frac{3\eps \Delta}{r}$ almost surely.
Sending $\eps\to 0$ along a countable sequence, we have $\limsup_{n\to\infty} \| r \cdot d_\tG - d_X |_{V_n} \|_\infty \le r$ almost surely.  Theorem \ref{thm:metricapprox} then follows.

\paragraph*{Remark.}
Note that Theorem \ref{thm:metricapprox} only provides an upper bound on the metric difference. This is in some sense necessary, as the graph $\tG$ is an unweighted graph. Hence we cannot differentiate distances from $X$ smaller than $r$.

\end{document}